\newcommand{\E}{\mathbb{E}}
\newcommand{\R}{\mathbb{R}}
\newcommand{\Tr}{\text{Tr}}
\newcommand{\X}{\mathcal{X}}
\newcommand{\hatthetan}{\hat{\theta}^{(n)}}
\newcommand{\thetastar}{\theta^\star}
\newcommand{\BF}{\operatorname{BF}}
\newcommand{\MMD}{\operatorname{MMD }}
\newcommand{\dsim}{x}
\newcommand{\dobs}{y}
\newcommand{\ddobsn}{\mathbf{y}_\mathbf{n}}
\newcommand{\ddsimn}{\mathbf{x}_\mathbf{n}}
\newcommand{\ddsimm}{\mathbf{x}_\mathbf{m}}
\newcommand{\ddsimmtheta}{\mathbf{x}_\mathbf{m}^{(\theta)}}
\newcommand{\Dsim}{X}
\newcommand{\Dobs}{Y}
\newcommand{\Ddobs}{\mathbf{Y}}
\newcommand{\Ddobsn}{\mathbf{Y}_\mathbf{n}}
\newcommand{\Ddsimm}{\mathbf{X}_\mathbf{m}}
\newcommand{\Ddsimmtheta}{\mathbf{X}_\mathbf{m}^{(\theta)}}
\newcommand{\empP}{\hat P}
\newcommand{\sumi}{\sum_{i=1}^n}
\newcommand{\sumjk}{\sum_{\substack{j,k=1\\k\neq j}}^m}
\newcommand{\PIF}{\operatorname{PIF}\left(z, \theta, \empP_{n}\right)}
\newcommand{\SE}{S_{\operatorname{E}}}
\newcommand{\DE}{D_{\operatorname{E}}}
\newcommand{\DEbeta}{D_{\operatorname{E}}^{(\beta)}}
\newcommand{\pis}{\pi_{S}}
\newcommand{\pishat}{\pi_{\hat S}^{(m)}}
\newcommand{\pshat}{p_{\hat S}^{(m)}}
\newcommand{\pik}{\pi_{S_k}}
\newcommand{\piebeta}{\pi_{\SE^{(\beta)}}}
\newcommand{\piMMD}{\pi_{\operatorname{MMD}}}
\newcommand{\piSL}{\pi_{\operatorname{SL}}}
\newcommand\independent{\protect\mathpalette{\protect\independenT}{\perp}}
\def\independenT#1#2{\mathrel{\rlap{$#1#2$}\mkern2mu{#1#2}}}
\DeclareMathOperator*{\argmin}{arg\,min}
\newtheorem{theorem}{Theorem}
\newtheorem{lemma}{Lemma}
\newtheorem{remark}{Remark}
\theoremstyle{definition}
\newtheorem{definition}{Definition}
\newenvironment{customthm}[1]
{\innercustomthm}
{\endinnercustomthm}
\title{Generalized Bayesian Likelihood-Free Inference}%
\author{
	Lorenzo Pacchiardi$^1$, %
    Sherman Khoo$^2 $, 
    Ritabrata Dutta$^2 $\thanks{Corresponding author: ritabrata.dutta@warwick.ac.uk}\\
	{\em \small $^1$Department of Statistics, University of Oxford, UK}\\
	{\em \small $^2$Department of Statistics, University of Warwick, UK}\\
}
\date{28th April 2023}
\begin{document}

	\maketitle

\begin{abstract}
\looseness=-1
We propose a posterior for Bayesian Likelihood-Free Inference (LFI) based on generalized Bayesian inference. To define the posterior, we use Scoring Rules (SRs), which evaluate probabilistic models given an observation.
In LFI, we can sample from the model but not evaluate the likelihood; hence, we employ SRs which admit unbiased empirical estimates.
We use the Energy and Kernel SRs, for which our posterior enjoys consistency in a well-specified setting and outlier robustness.
We perform inference with pseudo-marginal (PM) Markov Chain Monte Carlo (MCMC) or stochastic-gradient (SG) MCMC. While PM-MCMC works satisfactorily for simple setups, it mixes poorly for concentrated targets. Conversely, SG-MCMC requires differentiating the simulator model but improves performance over PM-MCMC when both work and scales to higher-dimensional setups as it is rejection-free.
Although both techniques target the SR posterior approximately, the error diminishes as the number of model simulations at each MCMC step increases.
In our simulations, we employ automatic differentiation to effortlessly differentiate the simulator model. We compare our posterior with related approaches on standard benchmarks and a chaotic dynamical system from meteorology, for which SG-MCMC allows inferring the parameters of a neural network used to parametrize a part of the update equations of the dynamical system.
\end{abstract}

	\section{Introduction}\label{sec:intro}\label{sec:scoring_rules}

	This work is concerned with performing inference for a model $P_\theta$ whose density $ p(\dobs|\theta) $ for an observation $\dobs $ is unavailable, but from which it is easy to simulate for any parameter value $ \theta $ (such models are known as intractable-likelihood or simulator models).
    Given $ \dobs $ and a prior $ \pi(\theta) $ on the parameters, the standard Bayesian posterior is $ \pi(\theta|\dobs) \propto \pi(\theta) p(\dobs|\theta) $.
    However, obtaining that explicitly or sampling from it with Markov Chain Monte Carlo (MCMC) techniques is impossible without having access to the likelihood.

	Traditional Likelihood-Free Inference (LFI) techniques exploit model simulations to approximate the exact posterior distribution when the likelihood is unavailable, by either estimating an explicit surrogate \citep{price2018bayesian, an2020robust, thomas2020likelihood} or weighting different parameter values according to the mismatch between observed and simulated data \citep{lintusaari2017, bernton2019approximate}.

	In this work, we introduce a new LFI formulation grounded in the generalized Bayesian inference framework \citep{bissiri2016general, jewson2018principles, knoblauch2019generalized}:
    given a generic loss $ \ell(\dobs, \theta) $ between a single observation $ \dobs $ and parameter $ \theta $, the generalized posterior belief on parameter values can be defined as:
	\begin{equation}\label{Eq:bissiri_post}
	\pi(\theta|\dobs) \propto \pi(\theta ) \exp ( - w\cdot \ell(\dobs, \theta));
	\end{equation} this allows to learn about the parameter value minimizing the expected loss over the data generating process\footnote{Indeed setting $ \ell (\dobs, \theta) = -\log p(\dobs|\theta) $ and $ w=1 $ recovers the standard Bayes update, which learns about the parameter value minimizing the KL divergence \citep{bissiri2016general}.} and respects Bayesian additivity (namely, the belief does not depend on the order observations are received). The learning rate $ w $ controls speed of learning.

    Here, we take $ \ell(\dobs, \theta) $ to be a Scoring Rule (SR) $ S(P_\theta,\dobs) $, which assesses the performance of $ P_\theta $ for an observation $ \dobs $, thus obtaining the \textit{scoring rule posterior} $ \pis $. If $ S(P_\theta,\dobs) $ can be estimated with samples from $P_\theta $, we can perform LFI without worrying about the missing likelihood $ p(\dobs|\theta) $. Two scoring rules allowing this while having good theoretical properties are the energy score and the kernel scores \citep{gneiting2007strictly}. 
 The energy score is given by:
	\begin{equation}
	\SE^{(\beta)}(P, \dobs) = 2 \cdot \E \left[\| \Dsim - \dobs\|_2^\beta\right] - \E\left[\|\Dsim- \Dsim'\|_2^\beta\right] ,\quad  \Dsim \independent  \Dsim' \sim P,
	\end{equation}
	where $ \beta \in (0,2) $. When $ k(\cdot, \cdot) $ is a symmetric and positive-definite  kernel, the kernel scoring rule for $ k $ can be defined as \citep{gneiting2007strictly}:
    \begin{equation}
	S_k(P, \dobs) = \E[k(\Dsim,\Dsim')] - 2\cdot\E [k(\Dsim, \dobs)],\quad  \Dsim \independent  \Dsim' \sim P.
	\end{equation}

    When inserting the kernel Score in Eq.~\ref{Eq:bissiri_post}, the MMD-Bayes method \citep{cherief2020mmd} is recovered. In this paper, we extend MMD-Bayes by framing it under a more general framework; moreover, we discuss its properties in more detail than \cite{cherief2020mmd} and employ MCMC schemes to perform inference (instead of variational inference as in \citealp{cherief2020mmd}).

    Exact sampling from the SR posterior remains impossible; still, a Pseudo-Marginal (PM) MCMC \citep{andrieu2009pseudo} where simulations from $ P_{\theta'} $ are generated for each proposed $ \theta' $ can be used to sample from a close approximation (whose error diminishes when the number of simulations at each step increases) for any SR allowing estimation from samples. While PM-MCMC works well for simple cases and is applicable to any simulator model, it mixes poorly for concentrated targets (such as those obtained when many observations are used).
    
    Alternatively, approximate samples from the SR posterior can be obtained using Stochastic-Gradient (SG) MCMC \citep{nemeth2021stochastic} by leveraging the unbiased estimates of $\nabla_\theta S(P_\theta, y)$ possible with the Energy and the kernel score. The unbiased gradient estimate necessitates the gradient of the simulated data with respect to model parameters, which can be easily obtained by implementing the simulator model with automatic-differentiation libraries. In this work, we mostly empoy adaptive stochastic gradient Langevin dynamics \cite{jones2011adaptive}, for which theoretical bounds for its error and results for asymptotic convergence exist \citep{ding2014bayesian, leimkuhler2016adaptive, leimkuhler2020hypocoercivity}; further, we show empirically that the SG-MCMC target well matches that obtained with PM-MCMC in cases where the latter mixes well, while requiring lower computational effort. Importantly, SG-MCMC has no mixing issues (as it is rejection-free). To the best of our knowledge, ours is the first ever application of gradient-based sampling methods to LFI using unbiased estimate of the gradient of the target distribution, which is enabled by the SR posterior and leads to scalable inference for high-dimensional parameter spaces.%

    Equipped with this sampling method, we empirically study concentration and outlier-robustness properties of the SR posterior, for which we also establish theoretical results. Specifically, we show asymptotic normality and a finite-sample bound on the probability of deviation of the posterior expectation of the divergence from the minimum divergence achievable by the model. We also provide a quantitative bound on the robustness of the posterior to outliers in the data.

    Qualitatively, the concentration and outlier-robustness properties of the SR posterior are independent on the value of $w$ in its definition (see Eq.~\ref{Eq:SR_posterior}). However, the choice of $w$ determines the rate of contraction of the SR posterior. A large ongoing research effort is devoted to the selection of $w$ for generalized Bayesian posteriors, resulting in methods ensuring, for instance, different forms of coverage \citep{lyddon2019general, syring2019calibrating, matsubara2022generalised} or other properties \citep{bissiri2016general, holmes2017assigning, loaiza2019focused}.
    Several of those methods (and plausibly future ones) are applicable to our framework.
    Hence, we do not delve deep into determining the optimal way to select $w$ or develop our own, mindful of the facts that this is an area of active research and that each practical use case is best tackled with a different method. Still, in our empirical evaluations of the SR posterior, it may be beneficial for different posteriors to have a similar scale. When that is required, we will either rely on hand-tuning or a previously introduced method which we revisit for our framework.

    We empirically compare the SR posterior with the popular Bayesian Synthetic Likelihood (BSL, \citealt{price2018bayesian}) approach, which is an instance of the SR posterior. However, as BSL does not provide unbiased gradient estimates, this prevents the use of SG-MCMC, which hinders the performance of BSL for concentrated and high-dimensional targets. 
    Next, we consider a real-world meteorological model \citep{lorenz1996predictability} and infer its parameters with Approximate Bayesian Computation \citep{lintusaari2017} and our SR posterior. We also use our framework to infer the parameters of a high-dimensional Neural Stochastic-Differential Equation for modelling the same data, which is unachievable with traditional (non-gradient-based) sampling methods. %

	The rest of this manuscript is organized as follows.
 In Sec.~\ref{sec:Bay_inf_SR}, we first review the scoring rules and define the SR posterior; we then discuss and compare the two sampling methods.
 Next, we study concentration properties in Section~\ref{sec:SR_post_prop} and outlier robustness in Section~\ref{sec:robustness}. Simulation studies comparing with other LFI approaches are presented in Sec.~\ref{sec:experiments}. Finally, we briefly review 
 previous works in Sec.~\ref{sec:related_works} and conclude and suggest future directions in Sec.~\ref{sec:conclusion}.

	\subsection{Notation}
	We will denote respectively by $ \mathcal{X} \subseteq \R^d $ and $ \Theta \subseteq \R^p $ the data and parameter space, which we assume to be Borel sets. We will assume the observations are generated by a distribution $ P_0 $ and use $ P_\theta $ and $ p(\cdot|\theta) $ to denote the distribution and likelihood of our model. Generic distributions will be indicated by $ P $ or $ Q $, while $ S $ will denote a generic scoring rule. Other upper-case letters will denote random variables while lower-case ones will denote observed (fixed) values. We will denote by $ \Dobs $ or $ \dobs $ the observations (correspondingly random variables and realizations) and $ \Dsim $ or $ \dsim $ the simulations.
	Subscripts will denote sample index and superscripts vector components.
	Also, we will respectively denote by $ \Ddobsn = \{ \Dobs_i\}_{i=1}^n \in \X^n $ and $ \ddobsn = \{ \dobs_i\}_{i=1}^n \in \X^n $ a set of random and fixed observations. Similarly, $ \Ddsimm = \{ \Dsim_j\}_{j=1}^m \in \X^m  $ and $ \ddsimm = \{ \dsim_j\}_{j=1}^m \in \X^m   $ denote a set of random and fixed model simulations. Finally, $ \independent $ will denote independence between random variables, while $ X \sim P$ indicates a random variable distributed according to $ P $.

	\section{Bayesian inference using scoring rules}\label{sec:Bay_inf_SR}

    \subsection{Background definitions}
	A Scoring Rule (SR, \citealp{gneiting2007strictly}) $ S $ is a function of a probability distribution over $ \X $ and of an observation in $  \X $. For a distribution $P$ and an observation $y$, we will denote the corresponding score as $S(P,y)$. %
	Assuming that $\dobs$ is a realization of a random variable $ \Dobs $ with distribution $ Q $, the expected scoring rule is defined as:
	\begin{equation}
	S(P,Q) := \E_{\Dobs \sim Q} S(P, \Dobs),
	\end{equation}
	where we overload notation in the second argument of $ S $.
	The scoring rule $ S $ is \textit{proper} relative to a set of distributions $ \mathcal{P}(\X) $ over $ \X $ if $$ S(Q, Q) \le S(P,Q) \ \forall \ P,Q \in \mathcal{P}(\X),$$ i.e., if the expected scoring rule is minimized in $ P $ when $ P=Q $. Moreover, $ S $ is \textit{strictly proper} relative to $ \mathcal{P}(\X) $ if $ P = Q $ is the unique minimum: $$ S(Q,Q) < S(P,Q)  \ \forall \ P, Q \in \mathcal{P}(\X)  \text{ s.t. }  P\neq Q.$$

	The divergence related to a proper scoring rule \citep{dawid2014theory} can be defined as $ D(P,Q) := S(P,Q) - S(Q,Q) \ge 0 $. Notice that $ P=Q \implies D(P,Q) = 0$, but there may be $ P\neq Q $ such that $ D(P,Q)=0 $. However, if $ S $ is strictly proper, $ D(P,Q) = 0 \iff P=Q $, which is the commonly used condition to define a statistical divergence (as for instance the Kullback-Leibler, or KL divergence).
	Therefore, each strictly proper scoring rule corresponds to a statistical divergence between probability distributions.

	The energy score introduced in Sec.~\ref{sec:intro} is a strictly proper scoring rule for the class of probability measures $ P $ such that $ \E_{X\sim P}\|\Dsim\|^\beta < \infty $ \citep{gneiting2007strictly}. The related divergence is the square of the energy distance, which is a metric between probability distributions (\citealt{rizzo2016energy}; see Appendix~\ref{app:energy})\footnote{The probabilistic forecasting literature \citep{gneiting2007strictly} use a different convention for the energy score and the subsequent kernel score, which amounts to multiplying our definitions by $ 1/2 $. We follow here the convention used in the statistical inference literature \citep{rizzo2016energy, cherief2020mmd, nguyen2020approximate}}.
	We will fix $ \beta=1$ in the rest of this work and we will write $ \SE $ in place of $ \SE^{(1)} $. Analogously, the kernel score is proper for the class of probability distributions for which $ \E[k(\Dsim,\Dsim')] $ is finite (by Theorem 4 in \cite{gneiting2007strictly}). Additionally, it is strictly proper under conditions which ensure that the MMD is a metric for probability distributions on $ \X $ (see Appendix~\ref{app:MMD}). These conditions are satisfied, among others, by the Gaussian kernel (which we will use in this work):
	\begin{equation}\label{Eq:gau_k}
	k(x, y)=\exp \left(-\frac{\|x-y\|_{2}^{2}}{2 \gamma^{2}}\right),
	\end{equation}
	in which $ \gamma $ is a scalar bandwidth. The divergence corresponding  to the kernel score is the squared Maximum Mean Discrepancy (MMD, \citealp{gretton2012kernel}) relative to the kernel $ k $ (see Appendix~\ref{app:MMD}). 
    \subsection{The scoring rule posterior}
    Consider now a set of independent and identically distributed observations $ \ddobsn \in \X^n $ sampled from a distribution $ P_0 $. We introduce the SR posterior for $ S $ by setting $ \ell(\dobs, \theta)  = S(P_\theta, \dobs)$ in the general Bayes update in Eq.~\eqref{Eq:bissiri_post}:
	\begin{equation}\label{Eq:SR_posterior}
	\pis(\theta|\ddobsn) \propto \pi(\theta) \exp\left\{ - w \sum_{i=1}^n S(P_\theta, \dobs_i) \right\}.
	\end{equation}

	The standard Bayes posterior is recovered from Eq.~\eqref{Eq:SR_posterior} by setting $ w=1 $ and $ S(P_\theta,\dobs) = -\log p(\dobs|\theta)$. Such choice of $ S $ is called the \textit{log score}, is strictly proper, and corresponds to the Kullback-Leibler (KL) divergence. With the same $ S $, $ w\neq1 $ yields the fractional posterior \citep{holmes2017assigning, bhattacharya2019bayesian}.

	\begin{remark}[\textbf{Bayesian additivity}]\label{remark:bayesian_additivity}
		The posterior in Eq.~\eqref{Eq:SR_posterior} satisfies Bayesian additivity (also called coherence, \citealt{bissiri2016general}): sequentially updating the belief with a set of observations does not depend on the order the observations are received. %
	\end{remark}
	
	\begin{remark}[\textbf{Non-invariance to change of data coordinates}]	\label{remark:non-invariance}
		The SR posterior is in general not invariant to change of the coordinates used for representing the observations. This is a property common to loss-based frequentist estimators and to the generalized posterior obtained from them \citep{matsubara2021robust}; see Appendix~\ref{app:coordinates} for more details.
	\end{remark}

	\subsection{Sampling the scoring rule posterior for LFI}\label{sec:simulator_models}
    Computing the energy and kernel scores, provided the likelihood is available, requires solving a double expectation, which is challenging in practice. 
    In the following, we will show how the availability of samples from simulator models allows to get unbiased estimates of the energy and kernel scores. Further, for differentiable simulator models (for which derivative of the simulated data w.r.t. to the parameters are available) we can also obtain unbiased estimators of the gradient of the scoring rules considered here under some regularity conditions. These derivatives can be effortlessly computed using automatic differentiation libraries for most simulator models \footnote{Exceptions include simulator models with thresholding involved in their simulation process or when the simulated data is discrete.}.  
    
    To sample approximately from the scoring rule posterior, we propose a pseudo-marginal Monte Carlo Markov chain (PM-MCMC) algorithm using estimators of scoring rules computed from samples of the simulator model. In addition, we propose using stochastic gradient Monte Carlo Markov chain (SG-MCMC) algorithms for differentiable simulator models. When applicable, SG-MCMC avoids two known drawbacks of PM-MCMC, namely the curse of dimensionality limiting its application to high-dimensional parameter spaces and the ``sticky" behaviour resulting in poor mixing for concentrated targets. 
    
	\subsubsection{Pseudo-marginal MCMC} \label{sec:PM}
    
    Our PM-MCMC algorithm depends upon the existence of an estimate $ \hat S(\ddsimmtheta, \dobs) $ of $ S(P_\theta, \dobs) $, where $ \ddsimmtheta = \{\dsim_j^{(\theta)}\}_{j=1}^m$ is a set of samples $ \dsim_j^{(\theta)} \sim P_\theta $, and $ \hat S $ is such that $ \hat S(\Ddsimmtheta, \dobs) \to S(P_\theta, \dobs) $ in probability as $ m \to\infty $ (i.e., it estimates the SR consistently). Unbiased estimates for $\SE^{(\beta)}$ and $S_k$ can be obtained by unbiasedly estimating the expectations using samples $\ddsimmtheta$ as following.
 \begin{equation}
	\hat S_{\text{E}}^{(\beta)}(\ddsimm^{(\theta)}, \dobs) =\frac{2}{m} \sum_{j=1}^m \left\| \dsim_j^{(\theta)} - \dobs\right\|_2^\beta - \frac{1}{m(m-1)}\sumjk \left\|\dsim_j^{(\theta)}-\dsim_k^{(\theta)}\right\|_2^\beta.
	\end{equation}
 \begin{equation}
	\hat S_k(\ddsimm^{(\theta)}, \dobs) = \frac{1}{m(m-1)}\sumjk  k(\dsim_j^{(\theta)},\dsim_k^{(\theta)} )-\frac{2}{m} \sum_{j=1}^m k(\dsim_j^{(\theta)},\dobs).
	\end{equation}
    For each proposed value of $ \theta $, we simulate $\ddsimmtheta = \{\dsim_j^{(\theta)}\}_{j=1}^m$ and estimate the target in Eq.~\eqref{Eq:SR_posterior} with:
	\begin{equation}\label{Eq:SR_post_unbiased_est}
	\pi(\theta) \exp\left\{ - w \sum_{i=1}^n \hat S(\ddsimmtheta, \dobs_i) \right\}.
	\end{equation}
	This procedure is an instance of pseudo-marginal MCMC \citep{andrieu2009pseudo}, with target:
	\begin{equation}\label{Eq:SR_post_finite_n_target}
	\pishat(\theta|\ddobsn) \propto \pi(\theta) \pshat (\ddobsn|\theta),
	\end{equation}
	where:
	\begin{equation}
	\pshat (\ddobsn|\theta) = \E\left[\exp\left\{ - w \sum_{i=1}^n \hat S(\Ddsimmtheta, \dobs_i) \right\} \right].
	\end{equation}
	For a single draw $ \ddsimmtheta $, the quantity in Eq.~\eqref{Eq:SR_post_unbiased_est} is in fact a non-negative and unbiased estimate of the target in Eq.~\eqref{Eq:SR_post_finite_n_target}; this approach is similar to what is proposed in \cite{drovandi2015bayesian} for inference with auxiliary likelihoods, which has also been used by \cite{price2018bayesian} for BSL. As it was already the case for the latter,
	the target $ \pishat(\theta|\ddobsn) $ is not the same as $ \pis(\theta|\ddobsn) $ and depends on the number of simulations $ m $; in fact, in general:
	\begin{equation}
	\E\left[\exp\left\{ - w \sum_{i=1}^n \hat S(\Ddsimmtheta, \dobs_i) \right\} \right] \neq \exp\left\{ - w \sum_{i=1}^n S(P_\theta, \dobs_i) \right\},
	\end{equation}
	even if $ \hat S(\ddsimmtheta, \dobs) $ is an unbiased estimate of $ S(P_\theta, \dobs) $.
	However, it is possible to show that, as $ m\to \infty $, $ \pishat $ converges to $ \pis $:

	\begin{theorem}\label{Th:auxiliary_lik}
		If $ \hat S(\Ddsimmtheta, \dobs_i) $ converges in probability to $ S(P_\theta, \dobs_i) $ as $ m \to\infty $  for all $ i=1, \ldots, n $, then, under some minor technical assumptions:
		\begin{equation}
		\lim_{m\to\infty} \pishat (\theta|\ddobsn)= \pis(\theta|\ddobsn), \quad \forall \theta \in \Theta.
		\end{equation}
	\end{theorem}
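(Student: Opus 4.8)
The plan is to first establish pointwise (in $\theta$) convergence of the \emph{unnormalized} pseudo-marginal target to the unnormalized SR posterior, and then to transfer this to the normalized densities by controlling the normalizing constants. Throughout, fix the observations $\ddobsn$ and a value $\theta \in \Theta$, and abbreviate $T_m := \sum_{i=1}^n \hat S(\Ddsimmtheta, \dobs_i)$ and $s(\theta) := \sum_{i=1}^n S(P_\theta, \dobs_i)$, where the randomness in $T_m$ is over the simulations $\Ddsimmtheta$.

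First I would handle the unnormalized target. By hypothesis each summand $\hat S(\Ddsimmtheta, \dobs_i) \to S(P_\theta, \dobs_i)$ in probability as $m \to \infty$; since a finite sum of sequences converging in probability converges in probability to the sum of the limits, $T_m \to s(\theta)$ in probability. The map $t \mapsto \exp(-w t)$ is continuous, so the continuous mapping theorem gives $\exp(-w T_m) \to \exp(-w s(\theta))$ in probability. Because this limit is the deterministic constant $\exp(-w s(\theta))$, it remains only to upgrade convergence in probability to convergence of expectations, i.e.\ to show $\pshat(\ddobsn|\theta) = \E[\exp(-w T_m)] \to \exp(-w s(\theta))$. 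By the Vitali convergence theorem this holds once $\{\exp(-w T_m)\}_{m}$ is uniformly integrable, which is precisely the ``minor technical assumption'' alluded to in the statement. It holds automatically in the two cases of interest: for the Kernel Score with bounded kernel, $0 \le k \le \kappa$ forces $\hat S_k \in [-2\kappa, \kappa]$, and for the Energy Score with bounded $\X$ of diameter $B$, one has $\hat S_{\mathrm{E}}^{(\beta)} \in [-B^\beta, 2B^\beta]$, so $\exp(-w T_m)$ is uniformly bounded and bounded convergence applies directly.

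Next I would treat the normalizing constants. Write $Z_m := \int_\Theta \pi(\theta)\, \pshat(\ddobsn|\theta)\, d\theta$ and $Z := \int_\Theta \pi(\theta)\, \exp(-w s(\theta))\, d\theta$. The previous step gives pointwise convergence of the integrands $\pi(\theta)\,\pshat(\ddobsn|\theta) \to \pi(\theta)\,\exp(-w s(\theta))$ for every $\theta$; to pass the limit through the $\theta$-integral I would invoke the dominated convergence theorem, which requires an integrable dominating function in $\theta$ (a second mild technical assumption, again automatic under the boundedness conditions above, where the integrand is dominated by $\pi(\theta)$ times a constant and $\pi$ is a density). This yields $Z_m \to Z$, with $Z > 0$ since $\pi$ is a genuine prior. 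Combining the two limits, $\pishat(\theta|\ddobsn) = \pi(\theta)\,\pshat(\ddobsn|\theta)/Z_m \to \pi(\theta)\,\exp(-w s(\theta))/Z = \pis(\theta|\ddobsn)$ for each $\theta$, as claimed.

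The main obstacle is the passage from convergence in probability to convergence in expectation in the first step: convergence in probability alone does \emph{not} imply convergence of the means $\E[\exp(-w T_m)]$, so some integrability control uniform in $m$ (uniform integrability, or an $L^1$-domination bound) is genuinely needed. This is exactly where the unnamed technical assumptions enter, and it is what the boundedness of the Kernel and Energy Score estimators supplies in the settings we care about; the remaining steps are routine applications of the continuous mapping, Vitali, and dominated convergence theorems.
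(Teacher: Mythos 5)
Your proof is correct and follows essentially the same route as the paper's: the paper likewise first upgrades convergence in probability of $\exp\{-w\sum_{i}\hat S(\Ddsimmtheta,\dobs_i)\}$ to convergence of its expectation via uniform integrability (stated there through the sufficient moment condition $\sup_m \E[|\cdot|^{1+\delta}]<\infty$ together with Billingsley's Theorem 3.5, which is the analogue of your Vitali argument), and then passes the limit through the normalizing integral over $\Theta$ by a bounded-convergence argument under positivity and boundedness assumptions playing the role of your domination hypothesis. The only differences are cosmetic: the paper packages the two steps as generalizations of Results 1 and 2 of Drovandi et al.\ (2015), whereas you invoke uniform integrability directly as the ``minor technical assumption'' and observe that it holds automatically for the bounded Kernel and Energy Score estimators.
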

	The above result is an extension of the one in \cite{drovandi2015bayesian} for Bayesian inference with an auxiliary likelihood. Appendix~\ref{app:auxiliary_lik} gives the technical conditions explicitly (in Theorem~\ref{Th:auxiliary_lik_complete}) and proves the result.

    In practice, in place of the vanilla pseudo-marginal approach discussed above, we use a correlated pseudo-marginal MCMC \citep{dahlin2015accelerating, deligiannidis2018correlated, picchini2022sequentially}, which reuses the random numbers used in model simulations over subsequent proposed parameter values. This correlates the target estimates at subsequent steps and reduces the chances of the chain getting stuck due to atypical random number draws. Specifically, the $ m $ simulations used in the posterior estimate (Eq.~\ref{Eq:SR_post_unbiased_est}) are split in $ G $ groups; at each MCMC step, a new set of random numbers is proposed for the simulations in a randomly chosen group (alongside the proposed value for $ \theta $), and accepted or rejected in the standard way. 
    This algorithm still targets Eq.~\eqref{Eq:SR_post_finite_n_target}.

    \subsubsection{Stochastic Gradient MCMC}
    For the scoring rules used across this work, as well as any weighted sum of those, we can write $S(P_\theta, y) = \displaystyle \mathop{\mathbb{E}}_{X, X^{\prime} \sim P_\theta} g\left(X, X^{\prime}, y\right)$  for some function $g$; namely, the SR is defined through an expectation over (possibly multiple) samples from $P_{\theta}$.
    In the following, we assume random samples from the simulator model $P_\theta$ can be written as $X = h_{\theta}(Z)$ where $Z$ follows a base distribution $Q$ independent of the parameters $\theta$. Now:
    
    $$
    \nabla_\theta S(P_\theta, y) = \nabla_\theta \displaystyle \mathop{\mathbb{E}}_{X, X^{\prime} \sim P_\theta} g\left(X, X^{\prime}, y\right) \\
    = \nabla_\theta \displaystyle \mathop{\mathbb{E}}_{Z, Z^{\prime} \sim Q} g\left(h_{\theta}(Z), h_{\theta}(Z^{\prime}), y\right) \\
     =  \displaystyle \mathop{\mathbb{E}}_{Z, Z^{\prime} \sim Q} \nabla_\theta g\left(h_{\theta}(Z), h_{\theta}(Z^{\prime}), y\right) \\
    $$
    In the latter equality, the exchange between expectation and gradient is not a trivial step.
    Luckily, Theorem 5
    in \cite{binkowski2018demystifying} proved the above step to be valid almost surely with respect to a measure on $\theta$, under mild conditions on the functions $g$ and $h_{\theta}$ (such conditions are satisfied if both functions are differentiable). Based on this, we estimate the gradient of the scoring rule as follow:
    \begin{eqnarray}
    \label{eq:uest_SR}
    \widehat{\nabla_\theta} S(P_\theta, y) =\frac{1}{m(m-1)} \sum_{\substack{i, j=1 \\ i \neq j}}^m 
    \nabla_\theta g\left(h_{\theta}(Z_i), h_{\theta}(Z_j^{\prime}), y\right)
    , \quad Z_i \perp \!\!\! \perp Z^{\prime}_j \sim Q .
    \end{eqnarray}
    In practice, this can be easily obtained by implementing the function $h_\theta$ using automatic-differentiation libraries \citep{pytorch}.

    By relying on this construction, 
    we adapt two existing SG-MCMC \citep{nemeth2021stochastic} algorithms (stochastic gradient Noose-Hoover thermostat \citep{ding2014bayesian} and Preconditioned Stochastic Gradient Langevin \citep{li2016preconditioned}) to sample from the scoring rule posterior. As mentioned above, these algorithms are approximate, but the computational advantage they provide overweights the induced approximation. 
    
    Alternatively, Piecewise-Deterministic Markov Processes (PDMP, \citealp{fearnhead2018piecewise}) allow exact sampling with an unbiased estimate of the log-target gradient; unfortunately, however, the exact implementation of the existing algorithms requires computing an upper bound of the log-target gradient which is intractable for most practical use cases. To avoid this, approximate methods \citep{pagani2020nuzz, corbella2022automatic} are developed, which are however inconvenient for general target distributions compared to SG-MCMC methods.

        \paragraph{Adaptive Stochastic Gradient Langevin Dynamics (adSGLD)}The earliest known stochastic gradient MCMC algorithm  \citep{welling2011bayesian} is based upon the (Overdamped) Langevin Diffusion, defined by the following Stochastic Differential Equation: \begin{equation}\label{eq:OLangDiff}
        \mathrm{d} \theta(t)=-\frac{1}{2} \nabla_{\theta} U(\theta(t)) \mathrm{d} t+\mathrm{d} B_t.
        \end{equation}
        For the SR posterior, $U(\theta) = \log{\pi(\theta)} - w \sum_{i=1}^n S(P_\theta, \dobs_i)$, $\theta \in \mathbb{R}^d$ and $B_t \in \mathbb{R}^d$ is standard Brownian Motion. Under suitable regularity conditions, this continuous-time diffusion has $\pis(\theta|\ddobsn)$ as its stationary distribution \citep{roberts1996exponential, pillai2012optimal}. In practice, we are unable to simulate from this stochastic process exactly. Hence, numerical integration schemes are used to generate samples. For instance, the Euler-Maruyama method consists of the following update:
        \begin{eqnarray}
        \label{eq:EM-OLangDiff-SR}
           \theta_{t+1} \leftarrow \theta_t-\frac{\epsilon}{2} \nabla_{\theta} U(\theta(t)) + \sqrt{\epsilon}Z 
        \end{eqnarray}
        repeated over $t$, where $Z$ is a $d$-dimensional standard normal random vector and $\epsilon$ is a  discretisation step size. Following \cite{welling2011bayesian}, we propose to use the unbiased estimate of the gradient of $\nabla_{\theta} U(\theta(t))$,
        $$\widehat{\nabla_\theta} U(\theta) = \nabla_\theta \log{\pi(\theta)} - w \sum_{i=1}^n \widehat{\nabla_\theta} S(P_\theta, y_i)$$
        in the above update equation; this method is called Stochastic Gradient Langevin Dynamics (SGLD). %
        Using a sequence $\{\epsilon_i\}_{i=1}^{N}$ converging to $0$ and taking $m \to \infty$, under some condition, \cite{welling2011bayesian} shows that SGLD samples from the scoring rule posterior. 
        
        In practice, however, we do not have $\epsilon_i\to0$ neither $m\to\infty$. Hence, to ensure sampling with minimal bias for our noisy gradient scenario, we utilize the adaptive Langevin dynamics originally proposed in \cite{jones2011adaptive} and later used for Bayesian inference in \cite{ding2014bayesian}. %
        We would refer this algorithm as adaptive stochastic gradient Langevin dynamics (adSGLD), which
        runs on an augmented space $(\theta,p, \xi)$, where $\theta$ represents the parameter of interest, $p \in \mathbb{R}^d$ represents the momentum and $\xi$ represents an adaptive thermostat controlling the mean kinetic energy $\frac{1}{n} \mathbb{E}[p^{\top} p]$, along with a diffusion factor $\mathcal{A}$. Thus, the new dynamics is as follows:
        \begin{equation}
        \label{eq:adSGLD}
        \left\{ \begin{aligned}
        \mathrm{d} \theta_t &=p_t \mathrm{~d} t \\
        \mathrm{~d} p_t &=-\nabla_{\theta} U(\theta(t)) \mathrm{d} t- \xi p_t \mathrm{~d} t+ \sqrt{2\mathcal{A}} \mathcal{N}(0, I dt) \\
        d \xi &= \left(\frac{1}{n} p_{t}^{\top} p_{t}-1\right) d t
        \end{aligned} \right.
        \end{equation}
        Theoretical properies and convergence of adSGLD algorithm has been studied in \cite{ding2014bayesian}, \cite{leimkuhler2016adaptive} and \cite{leimkuhler2020hypocoercivity}. Below, we state the adSGLD algorithm, which requires fixing the hyperparameters $\epsilon$ (step size) and $\mathcal{A}$. 
        
        \begin{algorithm}
        \caption{adSGLD Algorithm for scoring rule posterior}\label{alg:cap}
        \hspace*{\algorithmicindent} \textbf{Input: $\mathcal{A}$, $\epsilon$, $\theta_0$, $N$ }\\
        \hspace*{\algorithmicindent} \textbf{Output: $\{\theta_i\}^{N}_{i=1}$ samples} 
        \begin{algorithmic}[1]
        \State Initialise $P_0 \sim N(0, I)$ and $\xi_0 \gets \mathcal{A}$ 
        \For{$i=1$ to $N$}:
        \State Estimate $\widehat{\nabla}_{\theta} U({\theta_{i-1}})$
        \State $P_i \gets P_{i-1} -\xi_{i-1} P_{i-1} \epsilon - \widehat{\nabla}_{\theta} U({\theta_{i-1}}) \epsilon+\sqrt{2 \mathcal{A}} N(0, \epsilon)$ 
        \State $\theta_{i} \gets \theta_{i-1}+ P_{i} \epsilon $
        \State $\xi_{i} \gets \xi_{i-1}+\left(\frac{1}{n} P_{i}^{\top} P_{i}-1\right) \epsilon$
        \EndFor
        \end{algorithmic}
        \end{algorithm}

    \paragraph{Preconditioned Stochastic Gradient Langevin Dynamics (pSGLD, \citealp{li2016preconditioned})}\label{sec:psgld} 
    This algorithm preconditions the log-target with a diagonal matrix $G(\theta)$ obtained through a running average of the squared gradients using the following update equations:
        \begin{equation}
        \begin{aligned}
        & G\left({\theta}_{t+1}\right)=\operatorname{diag}\left(\mathbf{1} \oslash\left(\lambda \mathbf{1} +\sqrt{V\left({\theta}_{t+1}\right)}\right)\right) \\
        & V\left({\theta}_{t+1}\right)=\alpha V\left({\theta}_t\right)+(1-\alpha) \widehat{\nabla}_{\theta} U({\theta_{t}}) \odot \widehat{\nabla}_{\theta} U({\theta_{t}})
        \end{aligned}
        \end{equation}
        with $\oslash$ and $\odot$ denoting element-wise matrix division and product respectively. The  hyperparameter $\lambda$ is a small bias term to avoid the degeneration of the preconditioner, while $\alpha\in(0,1)$ is a relative weighting between the previous and current gradients. 
        This algorithm performs well for non-convex posteriors on high-dimensional space, and in particular for the complicated posteriors characterized by deep neural networks. We state the algorithm for pSGLD below.%

        \begin{algorithm}
        \caption{pSGLD Algorithm for scoring rule posterior}\label{alg:cap}
        \hspace*{\algorithmicindent} \textbf{Input: $\lambda$, $\alpha$, $\epsilon$, $\theta_0$, $N$ }\\
        \hspace*{\algorithmicindent} \textbf{Output: $\{\theta_i\}^{N}_{i=1}$ samples} 
        \begin{algorithmic}[1]
        \State Initialise $V_0 \gets \boldsymbol{0}$ 
        \For{$i=1$ to $N$}:
        \State Estimate $\widehat{\nabla}_{\theta} U({\theta_{i}})$
        \State $V\left(\theta_{i}\right) \gets \alpha V\left(\theta_{i-1}\right)+(1-\alpha) \widehat{\nabla}_{\theta} U({\theta_{i}}) \odot \widehat{\nabla}_{\theta} U({\theta_{i}})$ 
        \State $G\left(\theta_{i}\right) \gets \operatorname{diag}\left(\mathbf{1}\left(\lambda \mathbf{1}+\sqrt{V\left(\theta_{i}\right)}\right)\right) $
        \State $\theta_{i+1} \gets 
        \theta_{i}+
        \frac{\epsilon}{2}
            G\left(\theta_{i}\right) U({\theta_{i}})
            + \mathcal{N}
            \left(0, \epsilon G
                \left(\theta_{i}
                \right)
                \right)$
        \EndFor
        \end{algorithmic}
        \end{algorithm}

        In practice, we set $\lambda$ to $10^{-5}$ and $\alpha$ to $0.99$.

        \paragraph{Choice of step size $\epsilon$} For SG-MCMC algorithms, choosing the step-size $\epsilon$ is critical, as it represents a trade-off between the speed of convergence or mixing performance and the discretisation error. In practice, SG-MCMC algorithms are often used with a constant step size due to slow mixing when $\epsilon\approx0$. 
        To tune $\epsilon$, we use a modified version of the multi-armed bandit algorithm based on the kernelized Stein discrepancy proposed in \cite{coullon2021efficient}. This algorithm identifies each arm with a specific hyperparameter configuration, and for a fixed time budget, sequentially eliminates poor hyperparameter configurations based on the kernelized Stein discrepancy between the samples and the target distribution. %

    \begin{figure}[htbp!]
    \centering
    \includegraphics[width=0.7\linewidth]{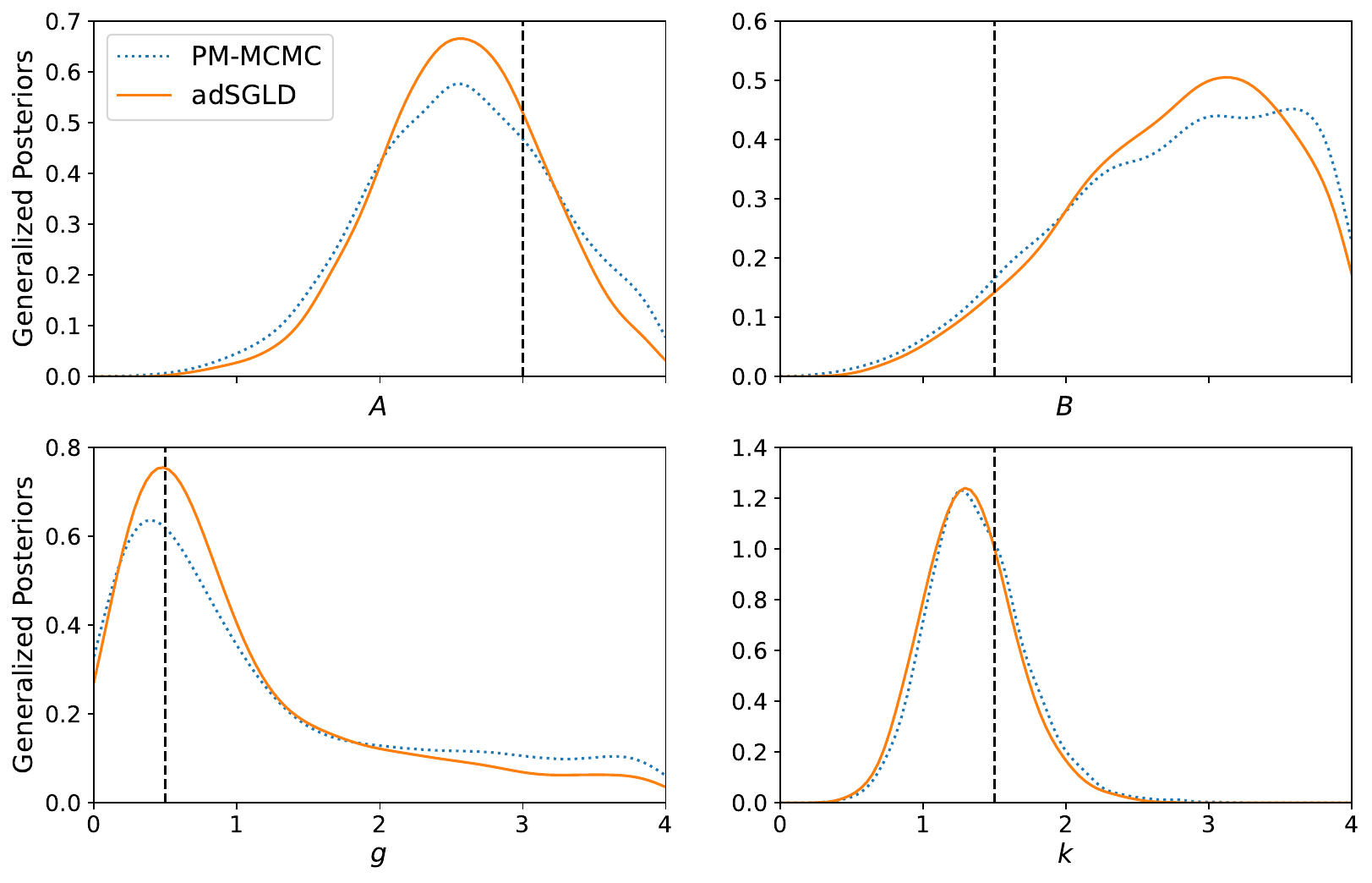}
    \caption{\textbf{Comparison of adSGLD and PM-MCMC} to sample from the marginals of the energy score posterior for the g-and-k model obtained with $n=10$. Vertical lines denote true parameter values. For both, 100000 samples with 10000 burn-in were used.}
    \label{fig:Post pm}
	\end{figure}

    \begin{figure}[htbp!]
    \centering
    \includegraphics[width=0.6\linewidth]{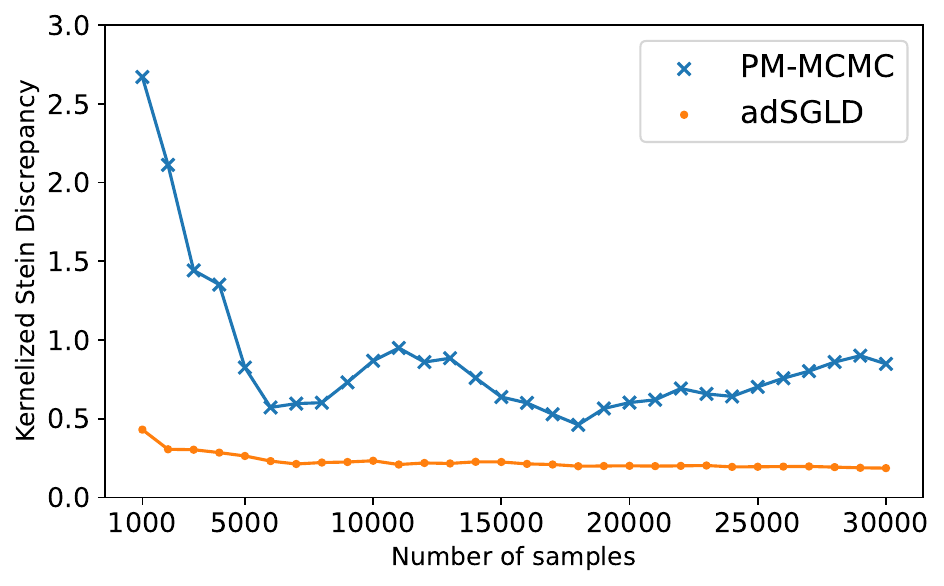}
    \caption{\textbf{Kernelized Stein Discrepancy} for first 30000 MCMC samples for the energy score posterior for the g-and-k model, on $n=10$ observations, sampled with adSGLD and PM-MCMC. KSD uses the inverse multi-quadratic kernel, with the gradients estimated using 500 simulated observations from each parameter. adSGLD both converges faster and is more accurate than the PM-MCMC algorithm.}
    \label{fig:KSD pm}
	\end{figure}

    \subsection{Comparison between PM-MCMC and SG-MCMC}\label{sec:comparison}
    To compare PM-MCMC and SG-MCMC (specifically, the adSGLD algorithm), we perform an empirical study on the univariate g-and-k model \citep{prangle2017gk}. The latter is defined in terms of the inverse of its cumulative distribution function $ F^{-1} $. Given a quantile $ q $, we define:
	\begin{equation}\label{Eq:gk}
	F^{-1}(q) = A + B \left[ q+ 0.8  \frac{1-e^ {-g z(q)}}{1+e^{-g z(q)}}\right]\left(1+z(q)^{2}\right)^{k} z(q),
	\end{equation}
	where the parameters $ A $, $ B $, $ g $, $ k $ are broadly associated to the location, scale, skewness and kurtosis of the distribution, and $ z(q) $ denotes the $ q $-th quantile of the standard normal distribution $ \mathcal{N}(0,1) $.
	Likelihood evaluation for this model is costly as it requires numerical inversion of $ F^{-1} $; instead, sampling is immediate by drawing $z \sim \mathcal{N}(0,1) $ and inputing it in place of $ z(q) $ in the expression above. We use uniform priors on $ [0,4]^4 $ on the sets of parameters $ \theta = (A,B,g,k) $. For $10$ observations from true parameter values $ A^\star =3,$ $ B^\star=1.5,$ $ g^\star=0.5,$ $ k^\star=1.5$, we perform inference with the energy score Posterior with $w=1$, setting the number of simulations per parameter value to $ m= 500 $ and run adSGLD and PM-MCMC for 110000 steps.

    Figure~\ref{fig:Post pm} shows a kernel density estimate of the samples obtained with the two methods: the two densities are similar, with the PM-MCMC one slightly broader. 
    As both sampling methods are asymptotically biased, we cannot rely on traditional MCMC diagnostics (such as the R-hat and the autocorrelation function) to quantitatively evaluate sample quality, as those only evaluate properties of the chain itself and are thus unable to measure the discrepancy between samples from an approximate sampler and exact target. To this aim, we employ the kernelized Stein discrepancy (KSD) proposed in \cite{gorham2017measuring}
    which, conveniently, can be estimated by using MCMC samples and unbiased estimates of the gradient of the log target (see Appendix~\ref{app:KSD}).
    We compute the KSD with an increasing number of samples obtained from the two methods, thus allowing to investigate which algorithm converges faster. The results can be seen in Fig.~\ref{fig:KSD pm}: the adSGLD algorithm converges faster than the PM-MCMC algorithm and produces samples that are a better approximation to the target distribution. Based on the superior performance of the adSGLD algorithm here, we will employ it for sampling from the SR posterior in the remaining simulation studies as all our considered simulator models are differentiable, unless otherwise specified. For comparison, results with PM-MCMC for some of the setups considered in the main body of the paper are reported in Appendix~\ref{app:PM_results}.

	\section{Concentration properties of the scoring rule posterior}\label{sec:SR_post_prop}

	\subsection{Asymptotic normality}\label{sec:as_norm}

    Under mild conditions, the SR posterior satisfies a Bernstein-von Mises theorem ensuring asymptotic normality. This generalizes the analogous result valid for the standard Bayesian posterior. For brevity, we give here a simplified statement, with the full one given (and proven) in Appendix~\ref{app:normal_asymptotic_proof}. Without loss of generality, we fix here $ w=1 $ (different values can be absorbed in the definition of $ S $).
    \begin{theorem}\label{Th:asymptotic_short}
    Assume the expected scoring rule $S(P_\theta, P_0) $  has a unique minimizer $ \theta^\star $ and the prior $\pi(\theta)$ is continuous and positive at $\thetastar$. Further, denote now by $ \pis^{*}\left(\cdot | \Ddobs_{\mathbf{n}}\right) $ the density of $ \sqrt{n}\left(\theta-\hatthetan\left(\Ddobs_{\mathbf{n}}\right)\right) $ when $ \theta\sim\pi_S\left(\cdot | \Ddobsn\right) $, where $  \hatthetan\left(\Ddobsn \right) $ is a sequence which converges almost surely to $ \thetastar $ as $ n\to\infty $. 
    Then, under technical assumptions (\ref{ass:thetastar} to \ref{ass:prior} in Appendix~\ref{app:normal_asymptotic_proof}), as $n \rightarrow \infty$, with probability 1 over $ \Ddobsn $:
		$$
		\int_{\mathbb{R}^p}\left|\pis^{*}\left(s | \Ddobs_{\mathbf{n}}\right)-\mathcal{N}\left(s|0,H_\star^{-1}\right)\right| d s  \to 0,
		$$
		where $ \mathcal{N}(\cdot|0, \Sigma) $ denotes the density of a multivariate normal distribution with zero mean vector and covariance matrix $ \Sigma $.    
    \end{theorem}
	Theorem~\ref{Th:asymptotic_short} implies that the SR posterior concentrates, with probability 1, on the parameter value minimizing the expected SR, if that minimizer is unique. This holds for a well specified model and strictly proper $ S $, in which case the SR posterior concentrates on the true parameter value; this property is usually referred to as \textit{posterior consistency}. However, the minimizer can be unique for misspecified or non-strict SRs as well.

    In general, the asymptotic covariance matrix $H_\star$ does \textit{not} match that of the frequentist minimizer of the SR, implying that asymptotic credible sets do not have correct frequentist coverage, even for strictly proper SR and well-specified model. This instead occurs when choosing $S$ to be the log-score and $w=1$ (thus recovering the standard posterior) with well-specified models (Section 4.1.2 in \citealp{ghosh2006introduction}). While this is a drawback of the SR posterior, we remark again how the latter is tractable for simulator models while the standard posterior is not. Additionally, in misspecified scenarios, the SR posterior achieves the outlier robustness properties discussed in Sec.~\ref{sec:robustness}, while, in that case, the standard posterior would not have exact coverage properties neither outlier robustness. Finally, in case one wants to provide correct credible sets, promising recent work addressing this mismatch \citep{frazier2023reliable} is applicable to the SR posterior.

	\begin{remark}[\textbf{Non-invariance to change of data coordinates -- continued}]
		Following on from Remark~\ref{remark:non-invariance}, notice that $ \thetastar $ depends on the data coordinates, unless the model is well specified and $ S $ is strictly proper. If that is not the case, SR posteriors using different data coordinates will concentrate on different parameter values in general. This property is coherent with the SR posterior learning about the parameter value which minimizes the expected scoring rule, which in turn depends on the chosen coordinate system. See Appendix~\ref{app:coordinates} for more details.
	\end{remark}

	\subsection{Finite-sample generalization bound}\label{sec:posterior_consistency}

	We now consider the energy and kernel score posteriors and their corresponding divergences, and provide a bound on the probability of deviation of the posterior expectation of the divergence from the minimum divergence achievable by the model. The bound holds with finite number of samples and does not require the model to be well specified nor the minimizer of the divergence to be unique. Such results are usually referred to as generalization bounds \citep{cherief2020mmd}.
	For our bound to hold, we require the following \textit{prior mass condition} with respect to a divergence $ D $:
	\begin{enumerate}[label=\textbf{A\arabic*}]
		\item \label{ass:prior_mass} The prior has density $ \pi(\theta) $ (with respect to Lebesgue measure) which satisfies
		\begin{equation}
		\int_{B_{n}\left(\alpha_{1}\right)} \pi(\theta) \mathrm{d} \theta \geq e^{-\alpha_{2} \sqrt{n}}
		\end{equation}
		for some constants $ \alpha_1, \alpha_2>0 $ and for all positive $ n \in \mathbb{N} $, where we define the sets: $$ B_{n}\left(\alpha_{1}\right):=\left\{\theta \in \Theta:\left|D\left({P}_{\theta}, {P}_0\right)-D\left({P}_{\thetastar},  {P}_0\right)\right| \leq \alpha_{1} / \sqrt{n}\right\}, $$
		where $ \thetastar \in \argmin_{\theta \in \Theta} D(P_{\theta}, P_0) $, which is assumed to be nonempty.
	\end{enumerate}

	Assumption~\ref{ass:prior_mass} constrains the amount of prior mass given to $ D $-balls with size decreasing as $ n^{-1/2} $ to decrease slower than $ e^{-\alpha_{2}\sqrt{n}} $ for some $ \alpha_{2} $. It is therefore a weak condition, as it bounds the mass by a quickly decreasing function while the radius is decreasing more slowly. Similar assumptions are taken in \citealt{cherief2020mmd,matsubara2021robust}, where some examples of explicit verification can be found.

	Our result
	(proved in Appendix~\ref{app:posterior_consistency}) assumes either a bounded kernel $ k $ for the kernel score posterior, or bounded $ \X $ for the energy score posterior.

	\begin{theorem}\label{Th:posterior_consistency}
		The following two statements hold for any $\epsilon>0 $:
		\begin{enumerate}

			\item Let the kernel $ k $ be such that $ \sup_{x \in \X} k(x,x) \le\kappa<\infty  $, and let $ D_k $ be the divergence associated to $ S_k $. Consider $ \thetastar \in \argmin_{\theta \in \Theta} D_k(P_{\theta}, P_0)  $; if the prior $ \pi(\theta) $ satisfies Assumption~\ref{ass:prior_mass} for $ D_k $, we have for the kernel Score posterior $ \pik $:
			\begin{equation}
			{P_0}\left(\left|  \int_{\Theta} D_k(P_\theta, P_0) \pik(\theta|\Ddobs_{\mathbf{n}} ) \mathrm{d} \theta-D_k(P_{\thetastar}, P_0)\right| \geq  \epsilon\right)  \leq 2 e^{ -\frac{1}{2}\left(\frac{\sqrt{n}\epsilon - \alpha_{1} - \alpha_{2}/w}{8\kappa}\right)^2 }.
			\end{equation}
			\item Assume the space $ \X $ is bounded such that $ \sup_{x,y \in \X}\|x-y\|_2 \le B <\infty $, and let $ \DEbeta  $ be the divergence associated with $ S_E^{(\beta)} $. Consider $ \thetastar \in \argmin_{\theta \in \Theta} \DEbeta(P_{\theta}, P_0)  $; if the prior $ \pi(\theta) $ satisfies Assumption~\ref{ass:prior_mass} for $ \DEbeta $, we have for the energy score posterior $\piebeta $:
			\begin{equation}
			{P_0}\left(\left|  \int_{\Theta} \DEbeta(P_\theta, P_0) \piebeta(\theta|\Ddobs_{\mathbf{n}} ) \mathrm{d} \theta-\DEbeta(P_{\thetastar}, P_0)\right| \geq  \epsilon\right)  \leq 2 e^{ -\frac{1}{2}\left(\frac{\sqrt{n}\epsilon - \alpha_{1} - \alpha_{2}/w}{8B^\beta}\right)^2 }.
			\end{equation}
		\end{enumerate}
	\end{theorem}
	As $ \epsilon $	or $ n $ increases, the bound on the probability tends to $ 0 $; for $ n\to\infty $, this implies that the SR posterior concentrates on those parameter values for which the model achieves minimum divergence from the data generating process $ P_0 $, ensuring therefore consistency in the well-specified case. With respect to Theorem~\ref{Th:asymptotic_short}, Theorem~\ref{Th:posterior_consistency} provides guarantees on the infinite sample behavior of the SR posterior even when $ \thetastar $ is not unique; however, this result does not describe the specific form of the asymptotic distribution, which Theorem~\ref{Th:asymptotic_short} instead does.

 	\begin{figure}[htbp!]
      \centering
    \begin{subfigure}[b]{1.0\textwidth}
        \centering
        \includegraphics[width=0.9\linewidth]{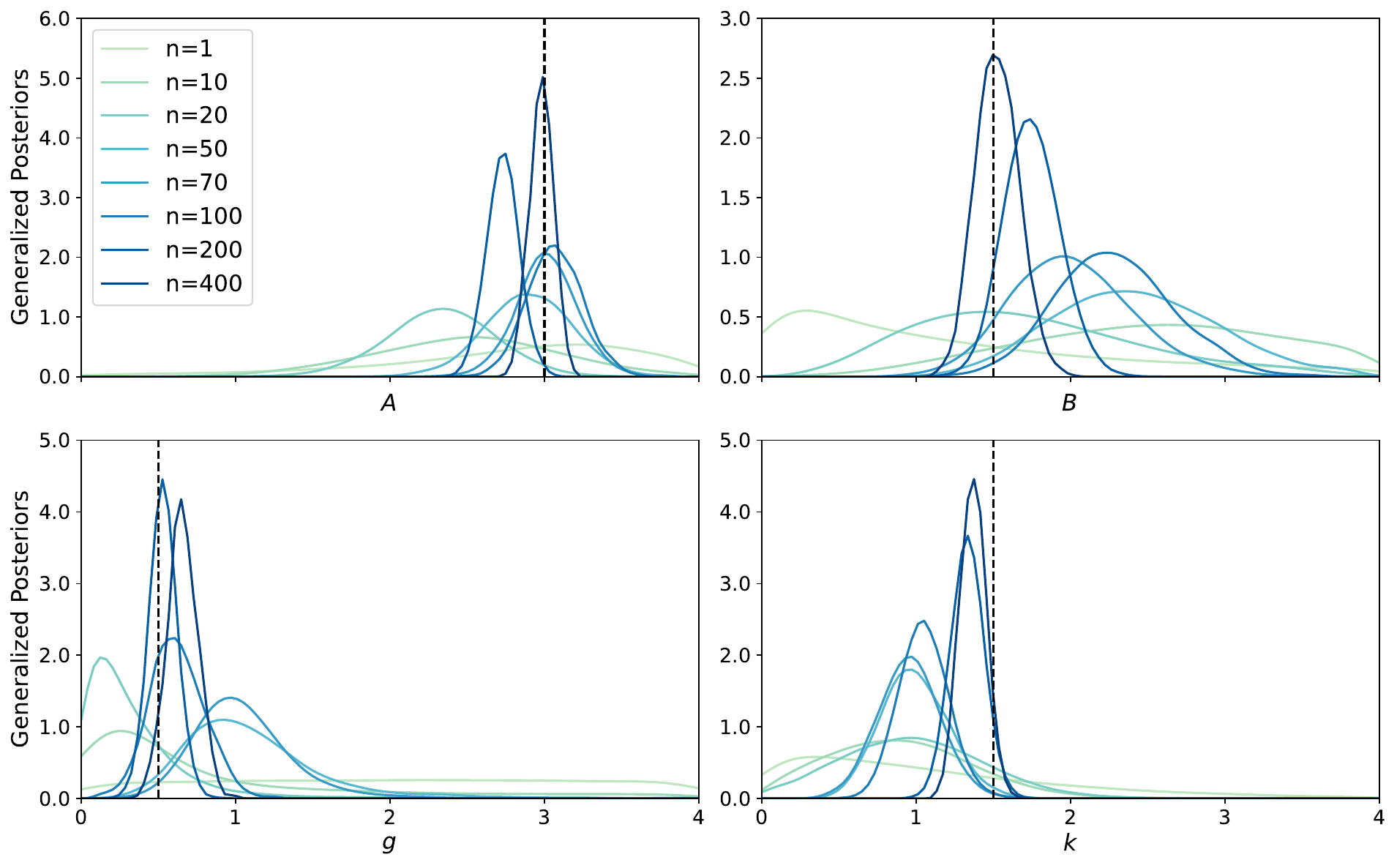}   
        \caption{Marginals of energy score posterior.}
	\end{subfigure}
     \begin{subfigure}[b]{1.0\textwidth}
        \centering
	   \includegraphics[width=0.9\linewidth]{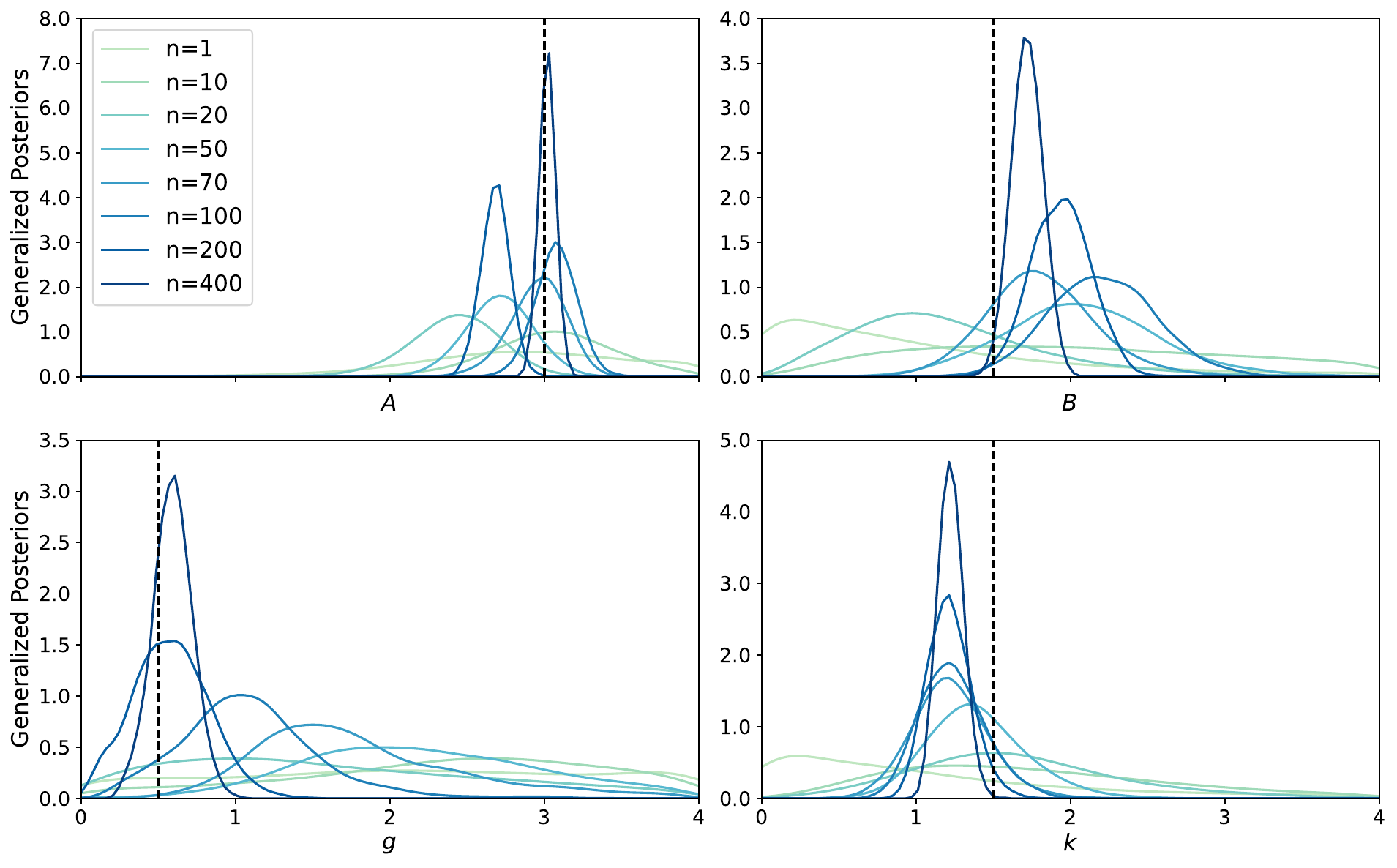}        
        \caption{Marginals of kernel score posterior.}
	\end{subfigure}
		\caption{\textbf{Posterior concentration of univariate g-and-k model,} illustrated by marginals of (a) energy score and (b) kernel score posteriors for the different parameters of the univariate g-and-k model, with increasing number of observations ($n= 1, 10, 20, \ldots, 400) $. Darker (respectively lighter) colors denote a larger (smaller) number of observations. The densities are obtained by kernel density estimator on the MCMC output. The energy and kernel score posteriors concentrate around the true parameter value (dashed vertical line).
              }
		\label{fig:post_concentration_uni_gk}
	\end{figure}

    \subsection{Posterior concentration of univariate g-and-k model}
    \label{sec:fix_w}
    To empirically evaluate the concentration of the SR posterior, we consider the g-and-k model introduced in Sec.~\ref{sec:comparison} and sample from the energy and kernel score posteriors for an increasing number of observations generated from $ A^\star =3,$ $ B^\star=1.5,$ $ g^\star=0.5,$ $ k^\star=1.5$. 

    For the same value of $w$, the scale of the two SR posteriors is different as it depends on the values taken by the SR itself. As here we aim to compare the concentration speed of the two posteriors, we set $w$ such that they have roughly the same scale (for the same number of observations. In other use cases,  as mentioned in the introduction, $w$ can be selected to achieve different goals (often, to match some frequentist property, see \citealp{lyddon2019general, syring2019calibrating, matsubara2022generalised}).

    In practice, we adapt a method proposed in \cite{bissiri2016general} which does not require repeated posterior inference and knowledge of the likelihood function. Specifically, notice that:
	\begin{equation}
	\log \underbrace{\left\{\frac{\pis(\theta | \dobs)}{\pis\left(\theta^{\prime} | \dobs\right)}/ \frac{\pi(\theta)}{\pi\left(\theta^{\prime}\right)}\right\}}_{\BF_S(\theta, \theta^\prime; \dobs)}=-w\left\{S(P_\theta, \dobs)-S(P_{\theta^{\prime}}, \dobs)\right\} %
	\end{equation}
	where $ \BF_S(\theta, \theta^\prime; \dobs) $ denotes the Bayes Factor of $ \theta $ with respect to $ \theta' $ for observation $ \dobs $. Therefore, $ w $ can be determined by fixing $ \BF_S(\theta, \theta^\prime; \dobs) $ for a single choice of $ \theta, \theta',\dobs$.
	Consider now another SR posterior $ \pi_{S'} (\theta|\dobs) $ with Bayes Factor $ {\BF}_{S'} $; setting:
	\begin{equation}
	w = -\frac{\log {\BF_{S'}}(\theta, \theta^\prime; \dobs)}{S(P_\theta, \dobs)-S(P_{\theta^{\prime}}, \dobs)},
	\end{equation}
	ensures $ {\BF_{S'}}(\theta, \theta^\prime; \dobs) = {\BF_S}(\theta, \theta^\prime; \dobs) $. If  $\pi_S$ and $\pi_{S'}$ are obtained from the same prior distribution and the latter uses $w=1$, that corresponds to $w\left\{S(P_\theta, \dobs)-S(P_{\theta^{\prime}}, \dobs)\right\} = S'(P_\theta, \dobs)-S'(P_{\theta^{\prime}}, \dobs) $
 As we have no reason to prefer a specific choice of $ (\theta, \theta') $, we set $ w $ to be the median of $ \frac{S'(P_\theta, \dobs)-S'(P_{\theta^{\prime}}, \dobs)}{S(P_\theta, \dobs)-S(P_{\theta^{\prime}}, \dobs)} $ over values of $ (\theta, \theta') $ sampled from the prior.
    In doing so, we ensure the median variation of the SR (multiplied by the corresponding $w$  between two parameter values sampled from the prior is the same across the two posteriors.
    Additionally, if $ P_\theta $ is an intractable-likelihood model, we estimate $ S(P_\theta,\dobs) $ and $ S'(P_\theta,\dobs) $ by generating data $ \ddsimmtheta $ for each considered values of $ \theta $.

    Hence, we set $w=1$ for the energy score posterior and use the above method to tune $w$ for the kernel score posterior, yielding $w=28.1$; the bandwidth of the Gaussian kernel was tuned as discussed in Appendix~\ref{app:bandwidth}. Figure~\ref{fig:post_concentration_uni_gk} reports the results; with the chosen values of $w$, the two posteriors concentrate at roughly the same speed close to the true parameter values.

    In Appendix~\ref{app:gk_PM} we report similar results achieved with PM-MCMC; due to the stickyness of the chain, those only run satisfactorily up to $n=100$.

	\section{Global bias-robustness of scoring rule posterior}\label{sec:robustness}
	We establish now robustness with respect to contamination in the dataset for the kernel score posterior with bounded kernel and the energy score posterior with bounded $ \X $. %

	First, consider the empirical distribution of the observations $ \empP_n = \frac{1}{n}\sum_{i=1}^n \delta_{\dobs_i} $. If we define:
	\begin{equation}
	L(\theta, \empP_n) := \frac{1}{n} \sum_{i=1}^n S(P_\theta, \dobs_i) = \E_{Y\sim\empP_n} S(P_\theta, \Dobs)
	\end{equation}
	for a scoring rule $ S $, the SR posterior in Eq.~\eqref{Eq:SR_posterior} can be rewritten as: $$ \pis(\theta|\ddobsn) = \pis(\theta|\empP_n) \propto \pi(\theta) \exp\left\{-wn L(\theta, \empP_n) \right\}. $$
	Next, consider the $ \epsilon $\textit{-contamination distribution} $ \empP_{n,\epsilon, z} = (1-\epsilon) \empP_n + \epsilon \delta_z $, obtained by perturbing the fixed empirical distribution with an outlier $ z $ of weight $ \epsilon $. In this setup, the \textit{posterior influence function} \citep{ghosh2016robust} can be defined as:
	\begin{equation}
	\PIF := \left.\frac{{d}}{{d} \epsilon} \pis\left(\theta \left| \empP_{n, \epsilon, z}\right.\right)\right|_{\epsilon=0},
	\end{equation}
	which measures the rate of change of the posterior in $ \theta $ when an infinitesimal perturbation in $ z $ is added to the observations. We say the SR posterior is $ C $-globally bias-robust if: \begin{equation}
	\sup_{\theta\in\Theta} \sup_{z \in \X} \left| \PIF\right|\le C,
	\end{equation}
    for some $C<\infty$. The definition of global bias-robustness in \cite{matsubara2021robust} corresponds to the one above holding for a value $C<\infty$.
	\begin{theorem}\label{Th:robustness}
		The following two independent statements hold:
		\begin{enumerate}

			\item Consider a kernel $ k $ such that $ \sup_{x \in \X} k(x,x) \le\kappa <\infty$; then, the kernel score posterior $ \pik(\cdot|\ddobsn) $ is C-globally bias-robust with $ C  \le 8 w n \kappa e^{6wn\kappa } \sup_{\theta\in\Theta} \pi(\theta)$.

			\item Alternatively, assume the space $ \X $ is bounded such that $ \sup_{x,y \in \X}\|x-y\|_2 \le B <\infty $; then, the energy score posterior $ \piebeta(\cdot|\ddobsn)  $ is globally bias-robust with $ C\le 8 w n B^\beta e^{2wnB^\beta } \sup_{\theta\in\Theta} \pi(\theta)$.
		\end{enumerate}
	\end{theorem}
	Proof is given in Appendix~\ref{app:robustness}. The Gaussian kernel (used across this work)
    is bounded. Our theoretical result does not hold for the energy score posterior when $ \X $ is unbounded. However, in practice (see below) we still find the energy score posterior to be robust to outliers in examples with unbounded $ \X $.

	\begin{figure}[htbp!]
		\includegraphics[width=1.0\linewidth]{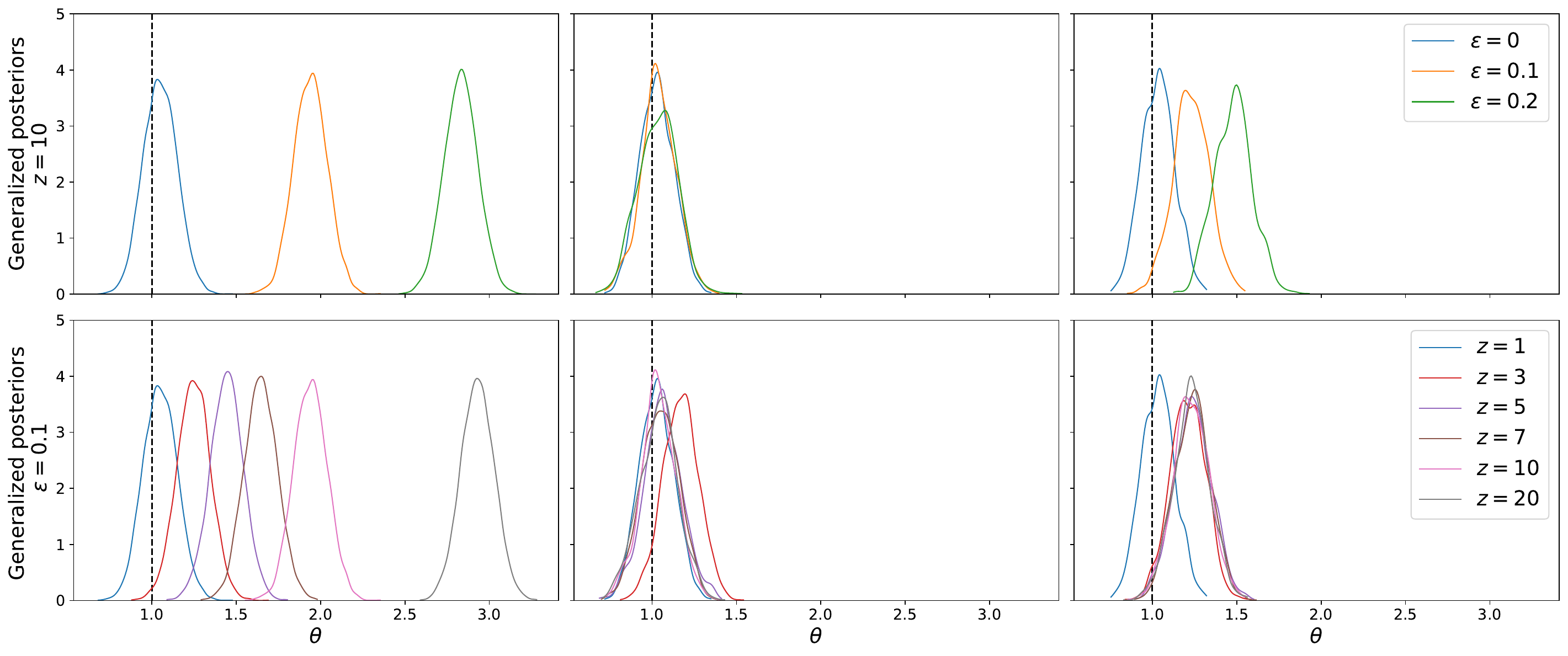}
		\caption{(Standard Bayes, kernel score, energy score from L to R) Posterior distribution for the misspecified normal location model, following experimental setup introduced in \cite{matsubara2021robust}. First row: fixed outliers location $ z=10 $ and varying proportion $ \epsilon $; second row: fixed outlier proportion $ \epsilon $, varying location $ z $. From both rows, it can be seen that both Kernel and energy score are more robust with respect to Standard Bayes. The densities are obtained by KDE on the MCMC output thinned by a factor 10.}
		\label{fig:normal_loc}
	\end{figure}

	\subsection{Robustness for normal location model}\label{sec:normal_location}
	To illustrate robustness of our scoring rule posterior, we consider a univariate normal model with fixed standard deviation $ P_\theta =\mathcal N(\theta,1) $. Similar to
	\cite{matsubara2021robust}, we consider 100 observations, a proportion $ 1-\epsilon $ of which is generated by $ P_\theta $ with $ \theta=1 $, while the remaining proportion $ \epsilon $ is generated by $ \mathcal N(z,1) $ for some value of $ z $. Therefore, $ \epsilon $ and $ z $ control respectively the number and location of outliers. The prior distribution on $ \theta $ is set to $ \mathcal N(0,1) $. To perform inference with our proposed SR posterior, we employ correlated pseudo-marginal MCMC with $ m=500 $, $ G=50 $ and 60000 MCMC steps, of which 40000 are burned-in. Additionally, we perform standard Bayesian inference (as the likelihood is available here).
	For the SR posteriors, $ w $ is fixed in order to get approximately the same posterior variance as standard Bayes in the well-specified case ($ \epsilon=0 $); values are reported in Appendix~\ref{app:normal_location}, together with the proposal sizes for MCMC and the resulting acceptance rates.

	We consider $ \epsilon $ taking values in $ (0,0.1,0.2) $ and $ z $ in $(1,3,5,7,10,20) $; in Fig.~\ref{fig:normal_loc}, some results are shown.
	Results for all combinations of $z$ and $ \epsilon $ are available in Fig.~\ref{fig:normal_loc_2} in Appendix.
	The kernel score posterior is highly robust with respect to outliers, while the energy score posterior performs slightly worse. As expected, the standard Bayes posterior shifts significantly when either $ \epsilon $ or $ z $ are increased. We highlight that Theorem~\ref{Th:robustness} only ensures robustness for small values of $ \epsilon $ and all values of $ z $ for the kernel score posterior, which is in fact experimentally verified (the robustness result for the energy score posterior does not apply here as $ \X $ is unbounded); however, we find empirically that both SR posteriors are more robust than the standard Bayes one, when both $ z $ and $ \epsilon $ are increased.

	\section{Empirical comparison with popular LFI methods}\label{sec:experiments}
	We present here simulation studies to compare our approach to two popular LFI schemes, Bayesian Synthetic Likelihood (BSL, \citealp{price2018bayesian}) and Approximate Bayesian Computation (ABC, \citealp{lintusaari2017}), and showcase the ability of SG-MCMC to sample from the scoring rule posterior of models with high-dimensional parameter space. Precisely, we first study the posterior concentration of the energy and kernel score  posteriors compared to BSL in Sec.~\ref{sec:BSL_multi_gk} for both well-specified and misspecified models; next, in Sec.~\ref{sec:abc_stoch_l96}, we consider a meteorological model with high-dimensional timeseries dataset, and compare the posterior predictive accuray of the scoring rule posterior with that obtained with SMC-ABC \citep{del2012adaptive}. Finally in Sec.~\ref{sec:neural_stoch_l96}, we consider a neural extension of the meteorological model considered in Sec.~\ref{sec:abc_stoch_l96} with a high-dimensional ($> 100$) parameter space; there, SG-MCMC allows to sample from the high-dimensional SR posterior, thus enabling a better posterior predictive accuracy than the lower dimensional model considered in Sec.~\ref{sec:abc_stoch_l96}. 

	Throughout, the kernel score uses the Gaussian kernel with bandwidth set from simulations as illustrated in Appendix~\ref{app:bandwidth}; further, we set $ w=1 $ in the energy score posterior and set $w$ for the kernel score posterior with the strategy discussed in Sec.~\ref{sec:fix_w}.
	The LFI techniques are run using the \texttt{ABCpy} Python library \citep{dutta2021abcpy}, 
    code for reproducing all results is available at \href{https://github.com/Shermjj/GenBayes_LikelihoodFree_ScoringRules_SGMCMC}{this link}.

	\subsection{Comparison with Bayesian synthetic likelihood: Multivariate g-and-k model}\label{sec:BSL_multi_gk}

    Bayesian Synthetic Likelihood (BSL, \citealp{price2018bayesian}) considers the following approximate posterior:
 \begin{equation}\label{Eq:SL_exact_post}
	\piSL(\theta|\dobs) \propto \pi(\theta) \mathcal{N}(\dobs; \mu_\theta, \Sigma_\theta),
	\end{equation}
	where $ \mathcal{N}(\dobs; \mu_\theta, \Sigma_\theta) $ denotes the multivariate normal density with mean vector $ \mu_\theta $ and variance matrix $ \Sigma_\theta $ evaluated in $ \dobs $. BSL is a specific case of our SR posterior (Eq.~\ref{Eq:SR_posterior}) for $w=1$  and the so-called \textit{Dawid--Sebastiani} scoring rule (Appendix~\ref{app:DS_score}), which is non-strictly proper (hence,  multiple minimizers of the expected score can exist even for well-specified models, which implies that the posterior may fail to concentrate asymptotically).
  
    A PM-MCMC where empirical estimates of $ \mu_\theta$ and $\Sigma_\theta $ are obtained from model simulations can be used to sample from an approximation of the BSL posterior, analogously to what we discussed in Sec.~\ref{sec:simulator_models}; it is instead impossible to obtain unbiased gradient estimates of the log-posterior, which prevents SG-MCMC from being applied.

    We consider here the multivariate extension \cite{drovandi2011likelihood, jiang2018approximate} of the univariate g-and-k model introduced earlier. 
    Specifically, we draw a multivariate normal $ (Z^1, \ldots, Z^5) \sim \mathcal{N}(0, \Sigma) $, where $ \Sigma\in \R^{5\times 5} $ has a sparse correlation structure: $ \Sigma_{kk} = 1 $, $ \Sigma_{kl}=\rho $ for $ |k-l| =1 $ and 0 otherwise; each component of $ Z $ is then transformed as in the univariate case (Eq.~\ref{Eq:gk}). The sets of parameters are $ \theta = (A,B,g,k,\rho) $. We use uniform priors on $ [0,4]^4 \times [-\sqrt 3/3, \sqrt 3/3]$.  

    For BSL, we use correlated PM-MCMC with $ m= 500 $, $ G=500 $ and run for 110000 steps, of which 10000 are burned in. For the SR posteriors, we instead use adSGLD, similarly with $m = 500$ and with 110000 steps and 10000 burn-in.  Additional details are given in Appendix~\ref{app:gk_SG}.

    \looseness=-1
    In Appendix~\ref{app:PM_results}, results obtained using PM-MCMC for the SR posteriors are provided. The same appendix provides results for BSL on the univariate g-and-k model; there, PM-MCMC run satisfactorily up to $n=100$,  showing how BSL fails to concentrate as it is based on a non-strictly proper SRs.

    \begin{figure}[htbp!]
    \centering
    \begin{subfigure}[b]{1.0\textwidth}
		\includegraphics[width=0.9\linewidth]{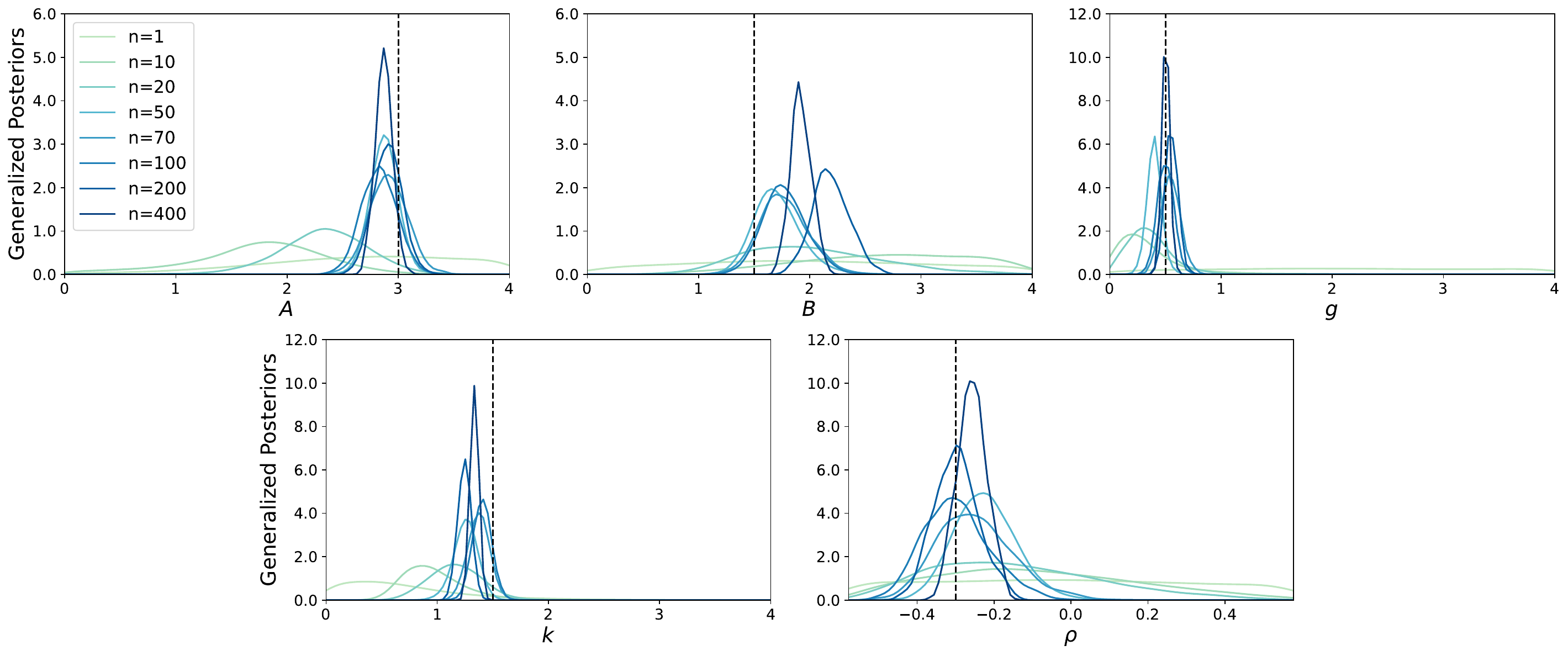}
  \caption{Marginals of energy score posterior}  
    \end{subfigure}
    \begin{subfigure}[b]{1.0\textwidth}
  		\includegraphics[width=0.9\linewidth]{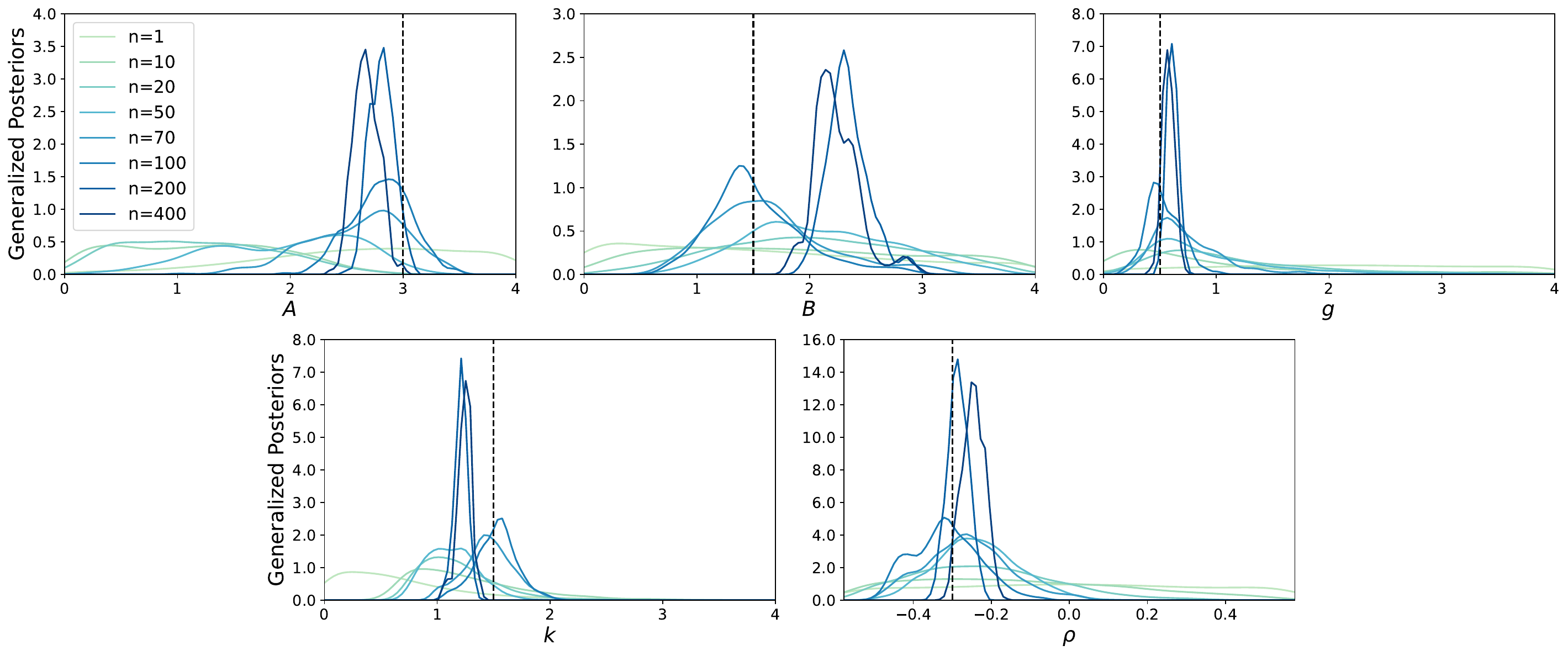}
	\caption{Marginals of kernel score posterior}
    \end{subfigure}
    \begin{subfigure}[b]{1.0\textwidth}
        	\includegraphics[width=0.9\linewidth]{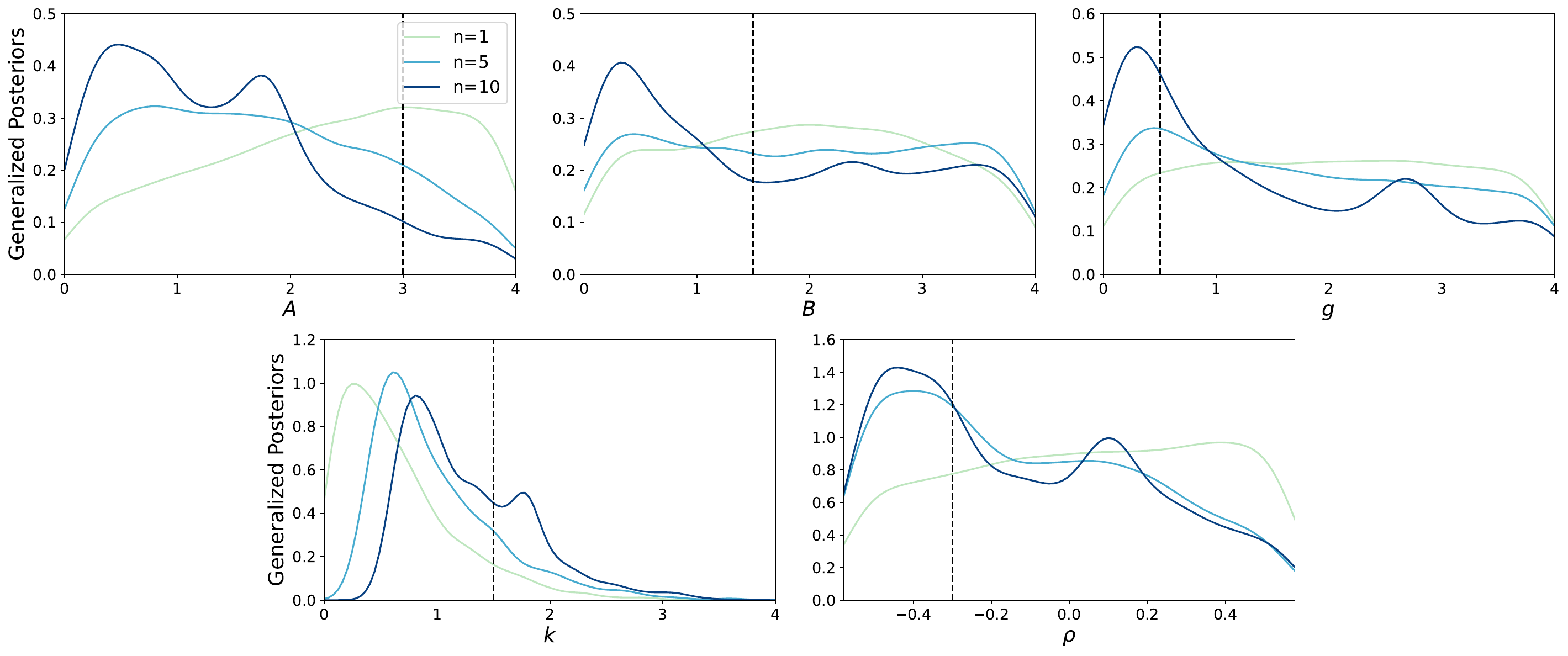}
	\caption{Marginals of Bayesian synthetic likelihood}  
    \end{subfigure}
  	\caption{\textbf{Posterior concentration of well-specified multivariate g-and-k model,} illustrated by marginals of (a) energy score, (b) kernel score and (c) Bayesian synthetic likelihood posteriors, with increasing number of observations ($n= 1, 10, \ldots, 400) $. Darker (respectively lighter) colors denote a larger (smaller) number of observations. The vertical line represents the true parameter value. Both the energy and kernel score posteriors (run with adSGLD) concentrate close to the true parameter value, while BSL was run with PM-MCMC which did not converge for $n>10$.       }
		\label{fig:gk_new}
	\end{figure}

	\subsubsection{Well-specified case}\label{sec:experiments_gk_well_spec}
	We consider synthetic observations generated from parameter values $ A^\star =3,$ $ B^\star=1.5,$ $ g^\star=0.5,$ $ k^\star=1.5$ and $ \rho^\star=-0.3 $. 
    The results are given in Figure~\ref{fig:gk_new}. 
    With increasing $ n $, both the energy and kernel score posterior concentrates close to the true value for all parameters (dashed vertical line), as expected when using strictly proper SRs. For this example, the PM-MCMC targeting the BSL posteriors do not converge beyond respectively 1 and 10 observations.

    \begin{figure}[htbp!]
    \centering
    \begin{subfigure}[b]{1.0\textwidth}
		\includegraphics[width=0.9\linewidth]{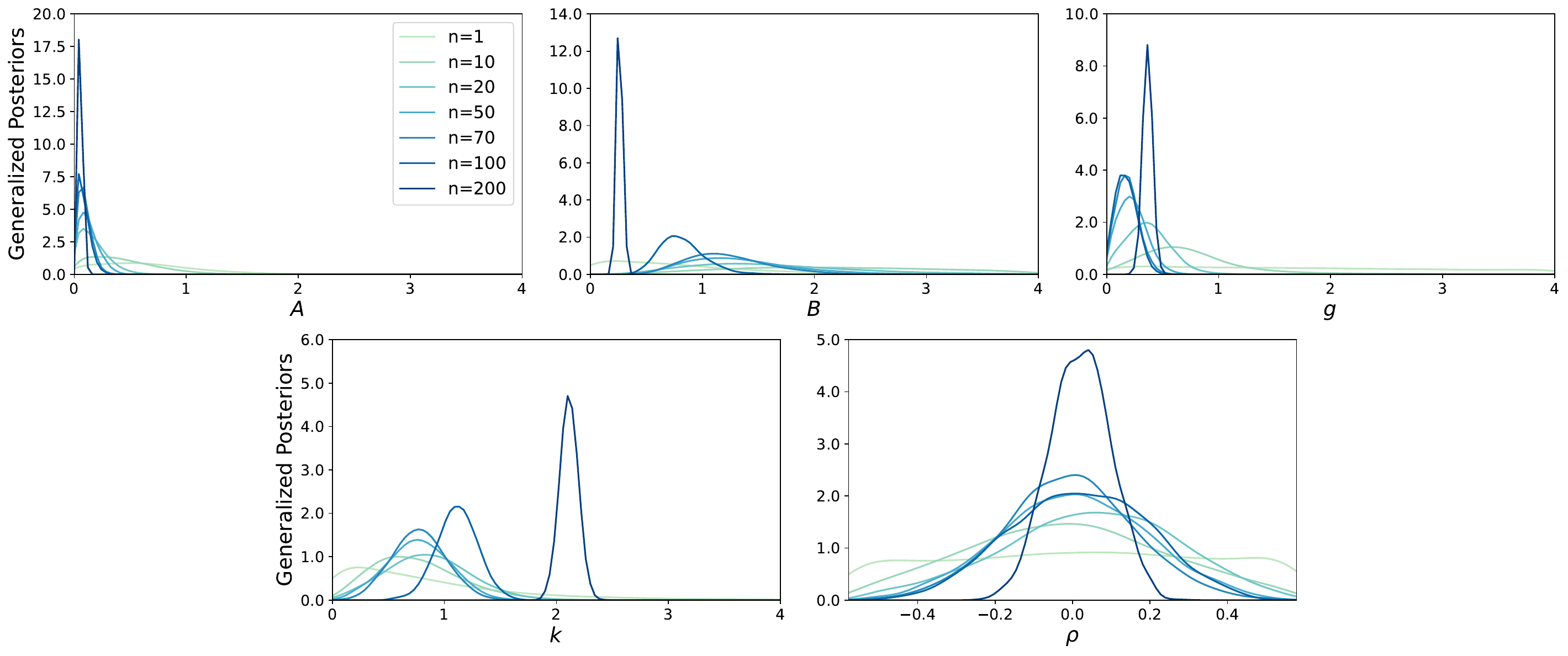}
  \caption{Marginals of energy score posterior}  
    \end{subfigure}
    \begin{subfigure}[b]{1.0\textwidth}
  		\includegraphics[width=0.9\linewidth]{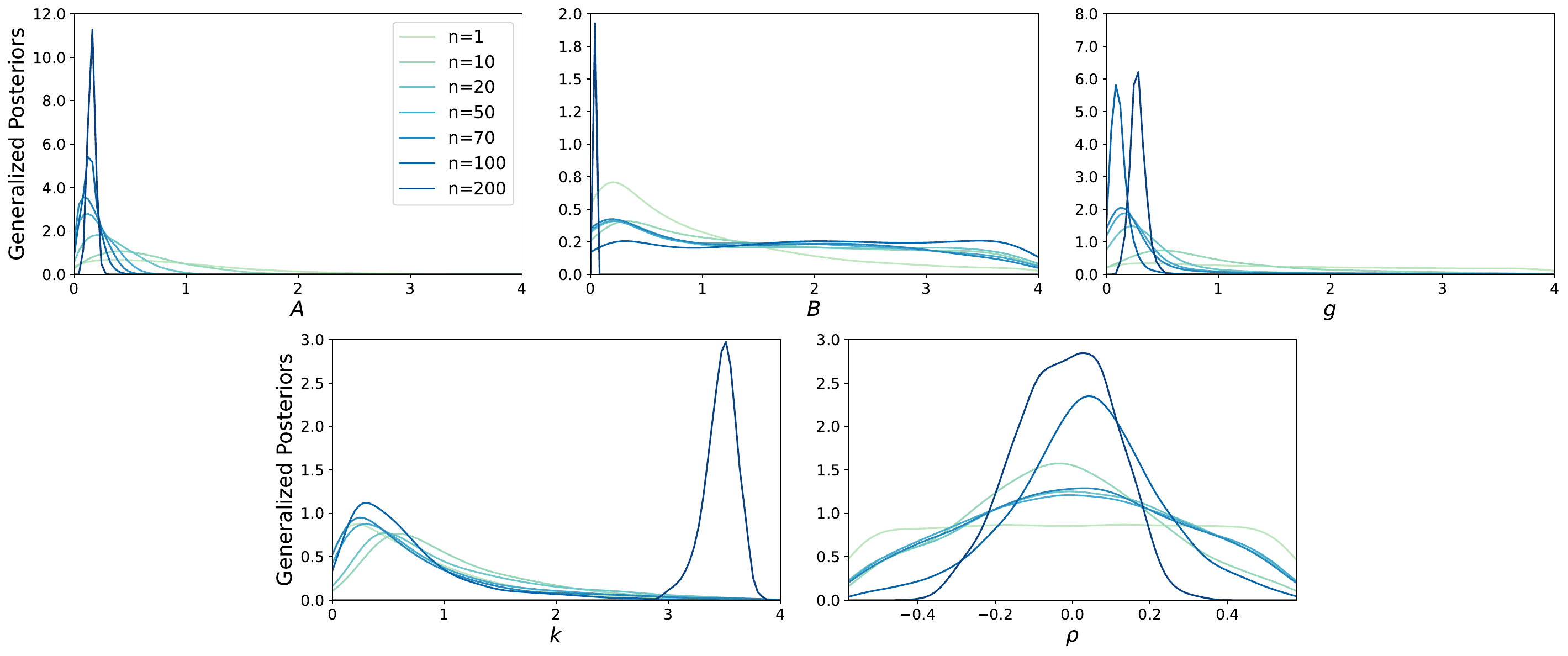}
	\caption{Marginals of kernel score posterior}
    \end{subfigure}
    \begin{subfigure}[b]{1.0\textwidth}
        	\includegraphics[width=0.9\linewidth]{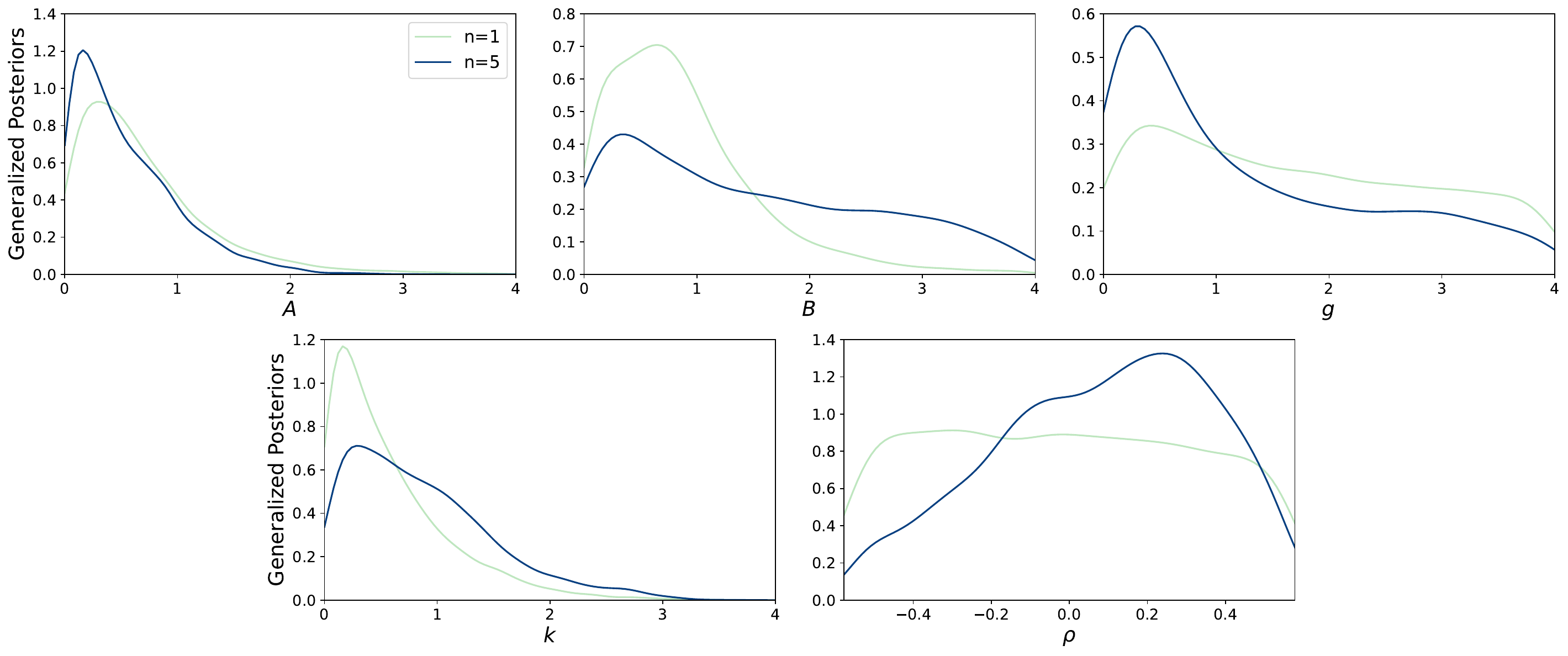}
	\caption{Marginals of Bayesian synthetic likelihood}  
    \end{subfigure}
  	\caption{\textbf{Posterior concentration of misspecified multivariate g-and-k model,} illustrated by marginals of (a) energy score, (b) kernel score and (c) Bayesian synthetic likelihood posteriors, with increasing number of observations ($n= 1, 10, \ldots, 400) $. Darker (respectively lighter) colors denote a larger (smaller) number of observations. The vertical line represents the true parameter value. Both the energy and kernel score posteriors (run with adSGLD) concentrate close to the true parameter value, while BSL was run with PM-MCMC which did not converge for $n>5$. }	\label{fig:gk_cauchy_new}
	\end{figure}

	\subsubsection{Misspecified setup}\label{sec:experiments_gk_misspec}

	Next, we consider as data generating process the Cauchy distribution, which has fatter tails than the g-and-k one. 
    The five components of each observation are drawn independently from the univariate Cauchy distribution (i.e., no correlation between components). For the SR posteriors, we use the values of $ w $ which were obtained with our heuristics in the well-specified case;
	additional experimental details are reported in Appendix~\ref{app:gk_Cauchy_prop_size}.
    Results are in Figure~\ref{fig:gk_cauchy_new}. The energy and kernel score posteriors concentrate on slightly different parameter value, corresponding to the unique minimzers of the expected SR (which are therefore different in these two cases). The PM-MCMC targeting the BSL posterior did not converge for $n>5$.

	\subsection{Comparison with approximate Bayesian Computation: Stochastic Lorenz96 model}\label{sec:abc_stoch_l96}

     \begin{figure}[htb]
    \centering
  		\begin{subfigure}[b]{\textwidth}
			\centering
			\includegraphics[width=\linewidth]{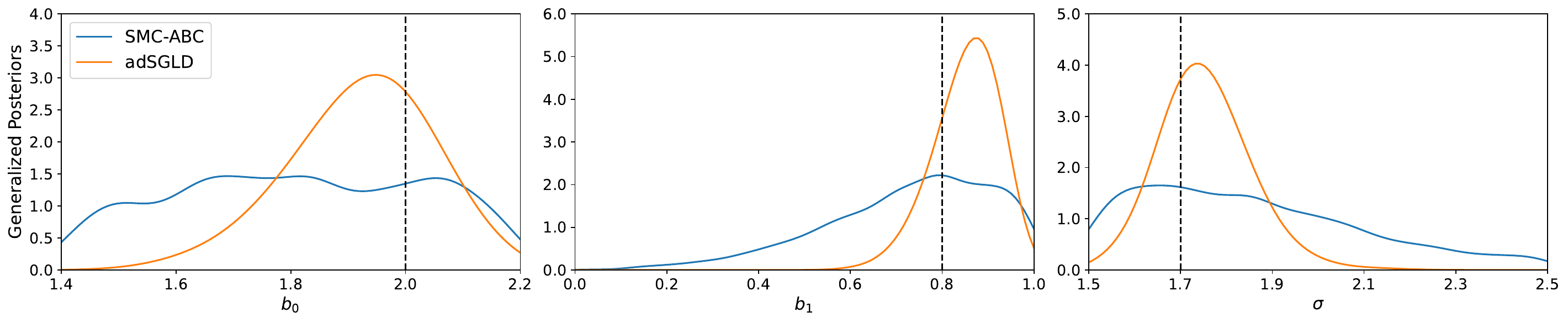}
			\caption{Marginal Posteriors.}
			\label{fig:Lorenz96_post}
		\end{subfigure}
    \begin{subfigure}[b]{0.6\textwidth}
        \centering
        \includegraphics[width=0.8\linewidth]{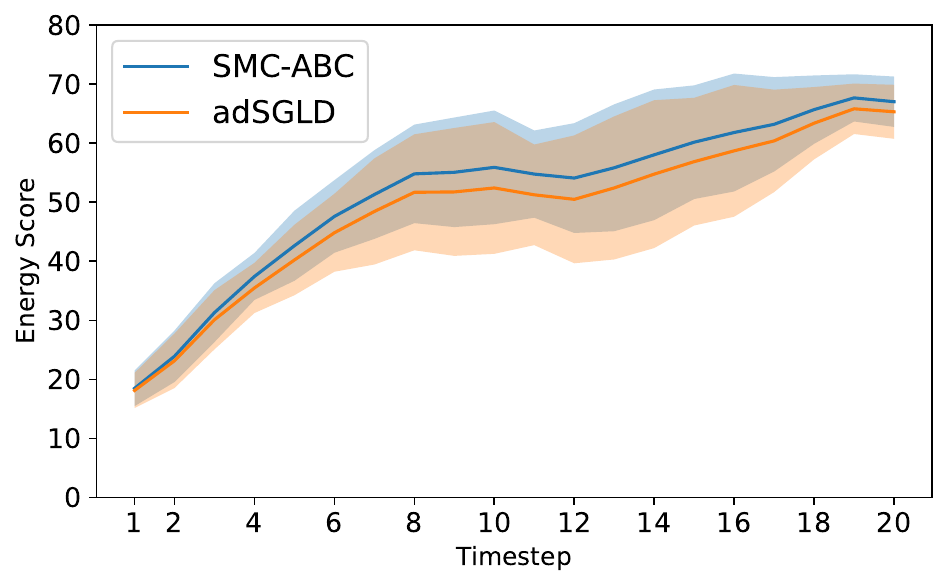}
        \caption{Predictive accuracy using energy score.}
			\label{fig:Lorenz96_predictive}
	\end{subfigure}
		\caption{\textbf{Comparison between SMC-ABC and energy score posteriors inferred using 250,000 model simulations, for the linearly parametrized Lorenz96 model.} (a) Marginal posterior distribution of the parameters of SMC-ABC posterior and energy score posterior using adSGLD, for a single observed set $\ddsimn$ (vertical line representing the true parameter $\thetastar$). (b) Energy score between posterior predictive and each time-step of the original observation. This is repeated for 5 observations $\ddsimn$ (each using $n=10$ here), and a \textit{t}-distribution at each time-step is fitted to the energy score values. The solid line and shaded region respectively represent the mean and the 95\% confidence interval of the fitted \textit{t}-distribution. Lower energy score indicates better predictive performance. }
		\label{fig:stoch-Lorenz-ABC-SGNHT}
    \end{figure}
    
    The Lorenz96 model \citep{lorenz1996predictability} is an important benchmark in meteorology \citep{arnold2013stochastic_paper} and was previously studied in the LFI literature \citep{thomas2020likelihood, jarvenpaa2020batch, pacchiardi2020score}.
    Here, we consider the stochastic \textit{parametrized} version introduced by \cite{wilks2005effects}, defined by the following set of Ordinary Differential Equations (ODEs):
	\begin{equation}\label{Eq:Lorenz}
	\frac{dx_k}{dt} = - x_{k-1}(x_{k-2}-x_{k+1}) -x_k + 10 - g(x_k, t; \theta); \quad k=1, \ldots, K,
	\end{equation}
	where cyclic boundary conditions imply that we take $ K+1 = 1 $ in the indices. The stochastic forcing term $ g $ depends on parameters $ \theta = (b_0, b_1, \sigma_e) $, and is defined upon discretizing the ODEs with a time-step $ \Delta t$:
	\begin{equation}
 \label{eq:lin_param_lorenz}
g(x, t; \theta) = {b_0 + b_1 x} + \sigma_e  \eta(t), \quad \eta(t) \sim \mathcal{N}(0,1).
\end{equation}

    In practice, we took $ K=8 $ and integrated the model using the Euler-Maruyama scheme starting from a fixed initial condition $ x(0) $ for 20 additional time-steps on the interval $ t \in [0,1.5] $ (corresponding to $ \Delta t = 3/40 $). We generate 5 independent sets of observed data $\ddsimn$, each using $n=10$ time-series simulated from the model using $ \theta^\star = ( 2, 0.8, 1.7 )$.  
    As prior distribution, we consider a uniform distribution on the region $ [1.4,2.2]\times[0,1]\times[1.5,2.5] $. 

    We run inference for the energy score posterior using adSGLD with $ m=10 $ and 25000 MCMC steps, of which 5000 are burned-in. We compare the inferred energy score posterior with the posterior obtained by Sequential Monte Carlo Approximate Bayesian Computation (SMC-ABC, \citealp{del2012adaptive}) using the Euclidean distance between simulated and observed dataset as discrepancy measure.
    The SMC-ABC algorithm was run for $25$ generations with $m=10$ simulations for every parameter value to draw $1000$ samples from the posterior distribution; with this setup, the two algorithms each use $250,000$ model simulations. Further details are given in Appendix~\ref{app:Lorenz}.
    The comparison between these two posteriors in Figure~\ref{fig:Lorenz96_post} illustrates how the energy score posterior assigns more probability to parameter values close to $\thetastar$ than the SMC-ABC posterior.
    Moreover, to assess the out-of-sample performance of the inferred posterior, we implement the following posterior predictive check: given draws from a posterior $ \pi(\theta|\ddsimn) $, we generate simulations from the model for the corresponding parameter value, which are therefore samples from the posterior predictive
	\begin{equation}\label{Eq:post_pred}
	p(\dobs_\text{new}|\ddsimn) = \int p(\dobs_\text{new}|\theta) \pi(\theta|\ddsimn)d\theta;
	\end{equation}
	from these samples, we assess how well the posterior predictive matches the original observation by computing the energy score between the posterior predictive distribution and the observations $ \ddsimn $ at each time-step. The results in Figure~\ref{fig:Lorenz96_predictive} show how the energy score posterior predictive matches the original observation than the SMC-ABC posterior predictive.

	\subsection{High dimensional neural stochastic parametrization for Lorenz96}\label{sec:neural_stoch_l96}

    \begin{figure}[htbp!]
    \centering
    \begin{subfigure}[t]{0.46\textwidth}
      \centering
        \includegraphics[width=\linewidth]{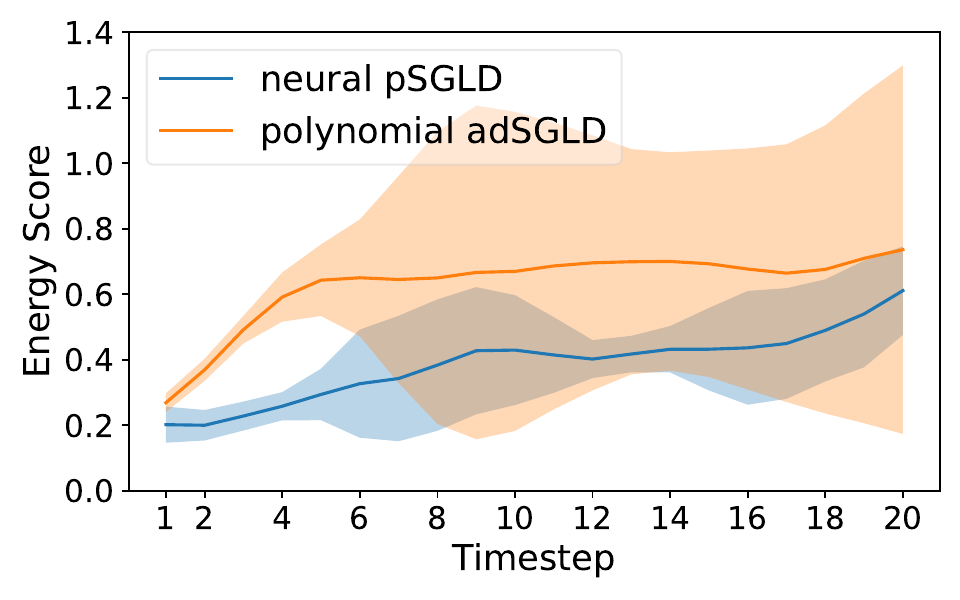}
        \caption{Predictive accuracy using energy score.}
    \end{subfigure}
    \begin{subfigure}[t]{0.46\textwidth}
    \centering
    \includegraphics[width=\linewidth]{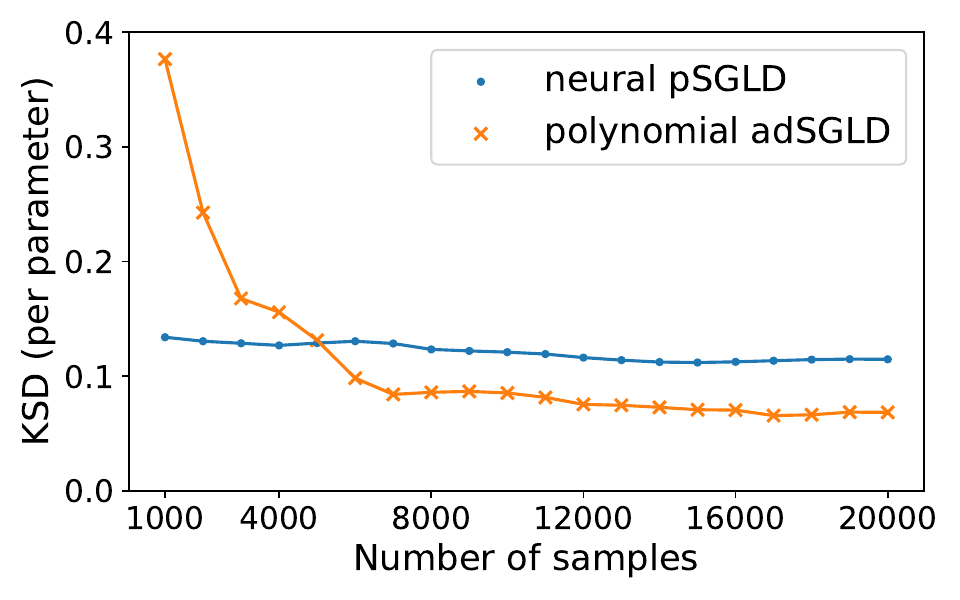}
        \caption{kernelized Stein discrepancy.}
    \end{subfigure}
    \caption{\textbf{Comparison between neural and linear stochastic parametrizations for the Lorenz96 model.} The posterior for the neural parametrization is sampled pSGLD algorithm more suited to high-dimensional spaces than the adSGLD used for the linear one. 
    (a) Energy score between posterior predictive and each time-step of the original observation. This is repeated for 5 observations $\ddsimn$ (each using $n=1$ here), and a \textit{t}-distribution at each time-step is fitted to the energy score values. The solid line and shaded region respectively represent the mean and the 95\% confidence interval of the fitted \textit{t}-distribution. Lower energy score indicates better predictive performance.
 (b) KSD divided by the dimension of parameter space to assess the convergence of adSGLD for linear stochastic parametrization and pSGLD for neural stochastic parametrization.}
    \label{fig:highdim_score_neural}
    \end{figure}
    
    The stochastic model considered in the previous section is a simplification of the original Lorenz96 model \citep{lorenz1996predictability}, which is a chaotic system including interacting slow and fast variables described by the following differential equations:
    \begin{equation}
        \begin{aligned}
    \label{eq:full_lorenz}
    \frac{\mathrm{d} x_k}{\mathrm{~d} t} & =-x_{k-1}\left(x_{k-2}-x_{k+1}\right)-x_k+F-\frac{h c}{b} \sum_{j=J(k-1)+1}^{k J} y_j \\
    \frac{\mathrm{d} y_j}{\mathrm{~d} t} & =-\operatorname{cby}_{j+1}\left(y_{j+2}-y_{j-1}\right)-c y_j+\frac{h c}{b} X_{\mathrm{int}[(j-1) / J]+1},
    \end{aligned}
    \end{equation}
    where $k=1, \ldots, K$, and $j=1, \ldots, J K$, and cyclic boundary conditions are assumed, so that index $k=K+1$ corresponds to $k=1$ and similarly for $j$. %
    
    The stochastic model in Eq.~\eqref{eq:lin_param_lorenz} was derived by considering the part of the above ODE dealing with slow variables only and modelling the effect of the fast variables with the stochastic linear parametrization $g(y,t;\theta)$  \citep{wilks2019369}. To improve on this, we replace that with a high-dimensional parameterisation using a neural network: 
    	\begin{equation*}
    g(x, t; \theta) = f(x ; \theta) + \sigma_e  \eta(t), \quad \eta(t) \sim \mathcal{N}(0,1).
    \end{equation*}
    where $f(x;\theta)$ is a multi-layer perceptron with one hidden layer using a ReLU activation function. Altogether, this model has $111$ parameters, on each of which we put an independent $ \mathcal{N}(0, 10) $ prior.
    
    To compare the linear and neural stochastic parametrizations, we simulate a timeseries from the full Lorenz96 model in equation~\eqref{eq:full_lorenz} and consider this as the observed data, by fixing $K=8,\ J=32,\ h=1,\ b=10,\ c=10$ and $F=10$. We then integrate the above equations with a 4th order Runge-Kutta integrator with $d t=0.001$, starting from $x_k=y_j=0$ for $k=2, \ldots, K$ and $j=2, \ldots J K$ and $x_1=y_1=1$. We discard the first 2 time units and record the values of $\mathbf{x}$ every $\Delta t=0.2$. This is done for a total of $21$ timesteps. We repeat this process $5$ times by perturbing the initial value with Gaussian noise; in this way, we generate 5 observations which slightly differ for the initial conditions (there is no other source of randomness as Eq.~\eqref{eq:full_lorenz} is deterministic).

    For the linearly parametrized Lorenz96 model we follow the same setup as in Sec.~\ref{sec:abc_stoch_l96} and use adSGLD to sample from the energy score posterior. In contrast, we opt to use pSGLD (Sec.~\ref{sec:psgld}) for the $111$-dimensional neural Lorenz96 model. For both cases, we use $m = 500$ and $20 000$ MCMC steps. In Figure~\ref{fig:highdim_score_neural} we compare the inferred Scoring rule posterior via their predictive performance and convergence using KSD divided by the number of  parameters (as the KSD grows linearly with the number of parameters). From this example, it is evident how  SG-MCMC (more specifically pSGLD)  enables sampling  over a very high-dimensional parameter space very efficiently, which allows to leverage a more expressive model to improve the representation of the observed data. 

	\section{Related approaches}\label{sec:related_works}

	Scoring rules have been previously used to generalize Bayesian inference: \cite{giummole2019objective} considered an update similar to ours, but fixed $ w=1 $ and adjusted the parameter value (similarly to what was done in \citealp{pauli2011bayesian} and \citealp{ruli2016approximate}) so that the posterior has the same asymptotic covariance matrix as the frequentist minimum scoring rule estimator. Instead, \cite{loaiza2019focused} considered a time-series setting in which the task is to learn about the parameter value which yields the best prediction, given the previous observations. Finally, \cite{jewson2018principles} motivated Bayesian inference using general divergences (beyond the KL one which underpins standard Bayesian inference) in an M-open setup, and discussed posteriors which employ estimators of the divergences from observed data; some of these estimators can be written using scoring rules. However, none of the above works considered explicitly the LFI setup. 

	A parallel work \citep{matsubara2021robust} investigates the generalized posterior obtained by using a kernelized Stein Discrepancy \citep{chwialkowski2016kernel, liu2016kernelized}. This posterior is shown to satisfy robustness and consistency properties, and is computationally convenient for doubly-intractable models (i.e., for which the likelihood is available, but only up to the normalizing constant). In contrast, our work focuses on models that do not have an explicit likelihood.

    As mentioned before, previous LFI methods such as MMD-Bayes \cite{cherief2020mmd} and BSL \cite{price2018bayesian} fall under our SR posterior framework. So do the semi-parametric BSL \cite{an2020robust} and the ratio-estimation methods \cite{thomas2020likelihood}; we discuss these methods in Appendices~\ref{app:semiBSL} and \ref{app:ratio_estimation}.

	Interestingly, \cite{dellaporta2022robust}, introduced a new LFI method which, similar to ours, enjoys outlier robustness and posterior consistency; however, their method is derived from the Bayesian non-parametric learning framework of \cite{lyddon2018nonparametric, fong2019scalable} rather than the generalized Bayesian posterior of \cite{bissiri2016general}.

    Finally, \cite{duffield2022bayesian} also uses stochastic-gradient MCMC for sampling from a generalized posterior; however, instead of a reparametrization trick, the unbiased gradient estimate is obtained through a specific property of the system they consider (a quantum computer).

	\section{Conclusion}\label{sec:conclusion}

In this work, we introduced a generalized Bayesian posterior for likelihood-free inference relying on scoring rules which can be easily estimated with samples from the simulator model. This \textit{scoring rule posterior} generalizes previous approaches \citep{price2018bayesian, cherief2020mmd}. While pseudo-marginal MCMC enambles approximate sampling of the posterior for simple cases, it mixes pooorly for concentrated targets, even employing advanced schemes \citep{picchini2022sequentially}; hence, we adapted stochastic-gradient MCMC methods to our framework, by exploiting automatic differentiation to compute gradients for the simulator model. As these new sampling schemes allow to sample the scoring rule posterior for high-dimensional parameter spaces,  we were able to empirically validate the concentration and outlier-robustness results we proved theoretically, focusing on the kernel and the energy scores. Our comparison with the popular Approximate Bayesian Computation and Bayesian Synthetic Likelihood showed how the scoring rule posterior enables more informative parameter inference, scaling to higher number of samples and parameters.

 We remark once again how the scoring rule posterior does \textit{not} aim to approximate the standard Bayesian posterior, as most LFI methods do: it instead learns about the parameter value minimizing the expected scoring rule; importantly, outlier robustness is achieved as a consequence of this relaxation. Although we only focused on the specific notion of robustness to outliers, it is possible that suitably-chosen scoring rules provide robustness to other forms of misspecification (such as the distance in Prokhorov metric studied in \citealp{briol2019statistical} or the adversarial contamination method in \citealp{cherief2022finite}); we leave this investigation for future work.

\subsubsection*{Acknowledgment}
LP received support by the EPSRC and MRC through the OxWaSP CDT programme (EP/L016710/1), which also funded part of the computational resources used to perform this work. RD is funded by EPSRC (grant nos. EP/V025899/1, EP/T017112/1) and NERC (grant no. NE/T00973X/1). \\
We thank Jeremias Knoblauch, François-Xavier Briol, Takuo Matsubara, Geoff Nicholls, Benedict Leimkuhler and Sebastian Schmon for valuable feedback and suggestions on earlier versions of this work. We also thank Alex Shestopaloff for providing code for exact MCMC for the M/G/1 model.

\bibliographystyle{abbrvnat}

{\footnotesize
}

\appendix

\section{Proofs of theoretical results}\label{app:proofs}

\subsection{Precise statement and proof of Theorem~\ref{Th:auxiliary_lik}}\label{app:auxiliary_lik}

We recall here for simplicity the useful definitions. We consider the SR posterior:
\begin{equation}\
\pis(\theta|\ddobsn) \propto \pi(\theta) \underbrace{\exp\left\{ - w \sum_{i=1}^n S(P_\theta, \dobs_i) \right\}}_{p_S(\ddobsn|\theta)}.
\end{equation}
Further, we recall the form of the target of the pseudo-marginal MCMC:
\begin{equation}\label{}
\pishat(\theta|\ddobsn) \propto \pi(\theta) \pshat (\ddobsn|\theta),
\end{equation}
where:
\begin{equation}\label{}
\pshat (\ddobsn|\theta) = \E\left[\exp\left\{ - w \sum_{i=1}^n \hat S(\Ddsimmtheta, \dobs_i) \right\} \right] 
=	\int \exp\left\{ - w \sum_{i=1}^n \hat S(\ddsimmtheta, \dobs_i) \right\}  \prod_{j=1}^{m}p(\dsim_j^{(\theta)}|\theta) d\dsim_{1}d\dsim_2\cdots d\dsim_m.
\end{equation}

The complete version of Theorem~\ref{Th:auxiliary_lik} is given in the following: 
\begin{theorem}\label{Th:auxiliary_lik_complete}
	Assume the following:
	\begin{enumerate}
\item $ \hat S(\Ddsimmtheta, \dobs_i) $ converges in probability to $ S(P_\theta, \dobs_i) $ as $ m \to\infty $  for all $ i=1, \ldots, n $.
\item $
\sup _{m} {\E}\left[\left|\exp\{ - w \sum_{i=1}^n \hat S(\Ddsimmtheta, \dobs_i)\}\right|^{1+\delta}\right]<\infty
$ for some $\delta>0$
\item $\inf_{m} \int_{\Theta} \pshat(\ddobsn | \theta) \pi(\theta) d \theta>0$ and $\sup _{\theta \in \Theta} p_S(\ddobsn | \theta)<$
$\infty$.
	\end{enumerate} 	
Then, \begin{equation}\
\lim _{m \rightarrow \infty} \pishat(\theta | \ddobsn)=\pis(\theta | \ddobsn).
\end{equation}
\end{theorem}

\subsubsection{Proof of Theorem~\ref{Th:auxiliary_lik_complete}}
In order to prove Theorem~\ref{Th:auxiliary_lik_complete}, we extend the proof for the analogous result for Bayesian inference with an auxiliary likelihood \citep{drovandi2015bayesian}. Our setup is slightly more general as we do not constrain the update to be defined in terms of a likelihood; notice that the original setup in \cite{drovandi2015bayesian} is recovered when we consider $ S $ being the negative log likelihood, for some auxiliary likelihood.

We begin by stating a useful property:
\begin{lemma}[Theorem 3.5 in \cite{billingsley1999convergence}]\label{lemma:1}
	If $ X_n $ is a sequence of uniformly integrable random variables and $ X_n $ converges in distribution to $ X $, then $ X $ is integrable and $ \E[X_n] \to \E[X] $ as $ n\to\infty $.
\end{lemma}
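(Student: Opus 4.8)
The plan is to prove this classical result by a truncation argument that reduces everything to the defining property of convergence in distribution, namely that $\E[g(X_n)] \to \E[g(X)]$ for every bounded continuous $g$, applied to suitable bounded continuous truncations of the absolute-value and identity maps. Throughout I would use that uniform integrability of $\{X_n\}$ means $\lim_{K\to\infty} \sup_n \E[|X_n|\mathbf{1}\{|X_n|>K\}] = 0$. First I would record the elementary consequence that uniform integrability forces $M := \sup_n \E|X_n| < \infty$: choosing $K_0$ with $\sup_n \E[|X_n|\mathbf{1}\{|X_n|>K_0\}] \le 1$ and splitting on $\{|X_n|\le K_0\}$ gives $\E|X_n| \le K_0 + 1$ for every $n$.

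Next, to establish integrability of the limit $X$, I would introduce the bounded continuous maps $\psi_K(x) = \min(|x|,K)$. Since $\psi_K \le |\cdot|$ we have $\E[\psi_K(X_n)] \le M$ for all $n$, and convergence in distribution yields $\E[\psi_K(X)] = \lim_n \E[\psi_K(X_n)] \le M$ for each fixed $K$. Letting $K\to\infty$ and invoking monotone convergence (as $\psi_K(X)\uparrow |X|$) gives $\E|X| \le M < \infty$, so $X$ is integrable.

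To prove the convergence of expectations I would introduce the clamping maps $\phi_K(x) = \max(-K, \min(K,x))$, which are bounded and continuous, and use the triangle inequality
\[
|\E X_n - \E X| \le |\E X_n - \E\phi_K(X_n)| + |\E\phi_K(X_n) - \E\phi_K(X)| + |\E\phi_K(X) - \E X|.
\]
The key elementary estimate is $|x - \phi_K(x)| \le |x|\mathbf{1}\{|x|>K\}$, so the first term is bounded by $\E|X_n - \phi_K(X_n)| \le \sup_n \E[|X_n|\mathbf{1}\{|X_n|>K\}]$ uniformly in $n$, which is small for large $K$ by uniform integrability; the third term is bounded by $\E[|X|\mathbf{1}\{|X|>K\}]$, small for large $K$ by dominated convergence now that $X$ is known integrable; and for each fixed $K$ the middle term tends to $0$ as $n\to\infty$ by convergence in distribution. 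An $\epsilon/3$ argument — first fixing $K$ large enough to control the two truncation errors uniformly, then taking $n$ large for the middle term — concludes that $\E X_n \to \E X$.

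The main obstacle is purely one of ordering the two limits $K\to\infty$ and $n\to\infty$ correctly: uniform integrability is precisely what supplies the uniformity-in-$n$ of the truncation error, allowing $K$ to be fixed before sending $n\to\infty$; without it the first term could fail to be uniformly small and the argument would collapse. No deeper machinery is needed, though one could alternatively route the proof through a Skorokhod almost-surely-convergent coupling and apply the Vitali convergence theorem.
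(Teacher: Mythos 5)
Your proof is correct. Note that the paper offers no proof of this lemma at all: it is quoted verbatim, with attribution, as Theorem 3.5 of \cite{billingsley1999convergence}, and is invoked as a known result inside the proof of Theorem 4 (via Remark 1, following \cite{drovandi2015bayesian}). So there is no in-paper argument to compare against; the relevant comparison is with Billingsley's own proof, and yours is essentially that standard truncation argument, with one small variation. Billingsley truncates with the discontinuous map $x \mapsto x\mathbf{1}\{|x|\le K\}$ at levels $K$ chosen to be continuity points of the law of $|X|$, which forces an appeal to the mapping theorem for functions that are almost-surely continuous under the limit law; your continuous clamp $\phi_K(x)=\max(-K,\min(K,x))$ works for every $K$ and needs only the definition of convergence in distribution, so your variant is marginally more self-contained. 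The individual steps all check out: the bound $|x-\phi_K(x)|\le |x|\mathbf{1}\{|x|>K\}$ holds in each of the three regimes; $\psi_K(x)=\min(|x|,K)$ with monotone convergence correctly yields $\E|X|\le M=\sup_n \E|X_n|<\infty$ (integrability of $X$ must indeed come first, since your third term relies on dominated convergence with dominating function $|X|$); and your $\epsilon/3$ argument orders the limits correctly, with uniform integrability supplying exactly the uniformity in $n$ needed to fix $K$ before sending $n\to\infty$.
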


\begin{remark}[Remark 1 in \cite{drovandi2015bayesian}]\label{lemma:2}
	A simple sufficient condition for uniform integrability is that for some $ \delta >0 $:
	\begin{equation}\
	\sup_n \E[|X_n|^{1+\delta}]  <\infty.
	\end{equation}\end{remark}

The result in the main text is the combination of the following two Theorems, which respectively generalize Results 1 and 2 in \cite{drovandi2015bayesian}:
\begin{theorem}[Generalizes Result 1 in \cite{drovandi2015bayesian}]\label{Th:Res1}
	Assume that $ \pshat(\ddobsn|\theta) \to p_S(\ddobsn|\theta) $ as $ m\to\infty $ for all $ \theta $ with positive prior support; further, assume $\inf_{m} \int_{\Theta} \pshat(\ddobsn | \theta) \pi(\theta) d \theta>0$ and $\sup _{\theta \in \Theta} p_S(\ddobsn | \theta)<$
	$\infty$. Then
	\begin{equation}\
	\lim _{m \rightarrow \infty} \pishat(\theta | \ddobsn)=\pis(\theta | \ddobsn).
	\end{equation}Furthermore, if $f: \Theta \rightarrow \mathbb{R}$ is a continuous function satisfying $\sup _{m} \int_{\Theta}|f(\theta)|^{1+\delta} \pis^{(m)}(\theta | \ddobsn) d \theta<\infty$
	for some $\delta>0$ then
	$$
	\lim _{m \rightarrow \infty} \int_{\Theta} f(\theta) \pishat(\theta | \ddobsn) d \theta=\int_{\Theta} f(\theta) \pis(\theta | \ddobsn) d \theta.
	$$
\end{theorem}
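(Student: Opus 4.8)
The plan is to treat both $\pishat(\theta|\ddobsn)$ and $\pis(\theta|\ddobsn)$ as ratios of an unnormalized numerator and a normalizing constant, and to establish convergence of each piece separately. Writing $Z_m := \int_\Theta \pi(\theta')\pshat(\ddobsn|\theta')\,d\theta'$ and $Z := \int_\Theta \pi(\theta') p_S(\ddobsn|\theta')\,d\theta'$, we have $\pishat(\theta|\ddobsn) = \pi(\theta)\pshat(\ddobsn|\theta)/Z_m$ and $\pis(\theta|\ddobsn) = \pi(\theta) p_S(\ddobsn|\theta)/Z$. The numerator converges pointwise by hypothesis, since $\pshat(\ddobsn|\theta) \to p_S(\ddobsn|\theta)$ for every $\theta$ with positive prior support; thus the first claim reduces to showing $Z_m \to Z$ with $0 < Z < \infty$. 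For the limit itself, $p_S(\ddobsn|\theta) = \exp\{-w\sumi S(P_\theta,\dobs_i)\} > 0$ gives $Z > 0$, while the hypothesis $\sup_{\theta\in\Theta} p_S(\ddobsn|\theta) < \infty$ yields $Z \leq \sup_\theta p_S(\ddobsn|\theta) < \infty$.

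The crux is the convergence $Z_m \to Z$, which amounts to exchanging the limit in $m$ with the integral over $\theta$. I would do this by regarding $\pshat(\ddobsn|\theta)$ as a random variable in $\theta \sim \pi$: its pointwise, hence $\pi$-almost sure, convergence to $p_S(\ddobsn|\theta)$ yields convergence in distribution, and then Lemma~\ref{lemma:1} (Billingsley's Theorem 3.5) delivers $\E_{\theta\sim\pi}[\pshat(\ddobsn|\theta)] = Z_m \to Z = \E_{\theta\sim\pi}[p_S(\ddobsn|\theta)]$ provided the family $\{\pshat(\ddobsn|\cdot)\}_m$ is uniformly integrable. This uniform integrability is the main obstacle, and I would secure it through Remark~\ref{lemma:2} by controlling $\sup_m \E_{\theta\sim\pi}[\pshat(\ddobsn|\theta)^{1+\delta}]$ for some $\delta>0$, which is precisely the role played by the technical assumptions. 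The hypothesis $\inf_m Z_m > 0$ keeps the normalizing constants bounded away from zero, so the ratio $\pi(\theta)\pshat(\ddobsn|\theta)/Z_m$ converges pointwise to $\pis(\theta|\ddobsn)$, establishing the first claim.

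For the second claim, I would first upgrade the pointwise density convergence from the first part to convergence in total variation via Scheffé's theorem: since $\pishat(\cdot|\ddobsn)$ and $\pis(\cdot|\ddobsn)$ are probability densities and $\pishat \to \pis$ pointwise, it follows that $\int_\Theta |\pishat(\theta|\ddobsn) - \pis(\theta|\ddobsn)|\,d\theta \to 0$, and in particular the associated distributions converge weakly. Because $f$ is continuous, the continuous mapping theorem then gives that $f(\theta)$ under $\pishat(\cdot|\ddobsn)$ converges in distribution to $f(\theta)$ under $\pis(\cdot|\ddobsn)$. The assumed moment bound $\sup_m \int_\Theta |f(\theta)|^{1+\delta}\, \pishat(\theta|\ddobsn)\,d\theta < \infty$ supplies uniform integrability of this family through Remark~\ref{lemma:2}, so a final application of Lemma~\ref{lemma:1} yields $\int_\Theta f(\theta)\pishat(\theta|\ddobsn)\,d\theta \to \int_\Theta f(\theta)\pis(\theta|\ddobsn)\,d\theta$, as required. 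The only delicate points are the two interchanges of limit and integral, both handled by the uniform-integrability route above; the first, for the normalizing constant, is where the technical assumptions do the real work, whereas the second is granted directly by the hypothesis on $f$.
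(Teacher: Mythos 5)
Your proof of the second claim coincides with the paper's own: Scheffé's lemma upgrades pointwise convergence of the densities to convergence in distribution of $\theta_m\sim\pishat(\cdot|\ddobsn)$ to $\theta\sim\pis(\cdot|\ddobsn)$, the continuous mapping theorem transfers this to $f(\theta_m)$, and Remark~\ref{lemma:2} together with Lemma~\ref{lemma:1} converts the assumed $(1+\delta)$-moment bound into convergence of the expectations. No issue there.

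The gap is in your first claim, specifically in the convergence of the normalizing constant $Z_m=\int_\Theta \pshat(\ddobsn|\theta)\pi(\theta)\,d\theta$. You propose to view $\pshat(\ddobsn|\cdot)$ as a random variable under $\theta\sim\pi$ and apply Lemma~\ref{lemma:1}, securing uniform integrability via the bound $\sup_m \E_{\theta\sim\pi}\left[\pshat(\ddobsn|\theta)^{1+\delta}\right]<\infty$, which you describe as ``precisely the role played by the technical assumptions.'' It is not: the theorem assumes only (i) pointwise convergence of $\pshat(\ddobsn|\theta)$ to $p_S(\ddobsn|\theta)$, (ii) $\inf_m Z_m>0$, and (iii) $\sup_{\theta\in\Theta} p_S(\ddobsn|\theta)<\infty$. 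None of these yields the $(1+\delta)$-moment bound you need: (ii) is a \emph{lower} bound on first moments, and (iii) constrains only the limit function $p_S$, not the prelimit family $\{\pshat(\ddobsn|\cdot)\}_m$. You are likely conflating this with the moment condition appearing in Theorem~\ref{Th:Res2}; but that condition is an expectation over the simulations $\Ddsimmtheta$ at a \emph{fixed} $\theta$, and its role is to establish hypothesis (i) itself --- it says nothing about integrability in $\theta$ under the prior. So, as written, your argument for $Z_m\to Z$ invokes an assumption the theorem does not provide. The paper closes this step differently and more cheaply: the denominator is positive (by (ii)) and converges by the bounded convergence theorem, with the boundedness in (iii) playing the role of the dominating constant for the integrands $\pshat(\ddobsn|\theta)\pi(\theta)$; the Billingsley-type uniform-integrability argument is reserved for the second claim, where the required moment bound actually is a hypothesis.
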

\begin{proof}
	The first part follows from the fact that the numerator of
	$$
	\pishat(\theta | \ddobsn)=\frac{\pshat(\ddobsn | \theta) \pi(\theta)}{\int_{\Theta} \pshat(\ddobsn | \theta) \pi(\theta) d \theta}
	$$
	converges pointwise and the denominator is positive and converges by the bounded convergence theorem.

	For the second part, if for each $m \in \mathbb{N}, \theta_{m}$ is distributed according to $	\pishat(\cdot | \ddobsn)$ and $\theta$ is distributed according to $\pis(\cdot | \ddobsn)$ then $\theta_{m}$ converges to $\theta$ in distribution as $m \rightarrow \infty$ by Scheffé's lemma \citep{scheffe1947useful}. Since $f$ is continuous, $f\left({\theta}_{m}\right)$ converges in distribution to $f(\theta)$ as $n \rightarrow \infty$ by the continuous mapping theorem and we conclude by application of Remark~\ref{lemma:2} and Lemma~\ref{lemma:1}. \end{proof}

The following gives a convenient way to ensure $ \pshat(\ddobsn| \theta) \to p_S(\ddobsn|\theta) $:
\begin{theorem}[Generalizes Result 2 in \cite{drovandi2015bayesian}]\label{Th:Res2}
	Assume that $\exp\{ - w \sum_{i=1}^n \hat S(\Ddsimmtheta, \dobs_i)\}$ converges in probability to $p_S(\ddobsn | \theta)$ as $m \rightarrow \infty$. If
	$$
	\sup _{m} {\E}\left[\left|\exp\{ - w \sum_{i=1}^n \hat S(\Ddsimmtheta, \dobs_i)\}\right|^{1+\delta}\right]<\infty
	$$
	for some $\delta>0$ then $\pshat(\ddobsn | \theta) \rightarrow p_S(\ddobsn | \theta)$ as $m \rightarrow$
	$\infty$.\end{theorem}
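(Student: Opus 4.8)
The plan is to recognize $\pshat(\ddobsn\mid\theta)$ as the expectation of a single random variable and then invoke directly the uniform-integrability results already stated as Lemma~\ref{lemma:1} and Remark~\ref{lemma:2}. Define
\[
X_m := \exp\left\{-w\sum_{i=1}^n \hat S(\Ddsimmtheta, \dobs_i)\right\},
\]
a nonnegative random variable whose randomness comes from the draw $\Ddsimmtheta \sim P_\theta^m$. By construction $\pshat(\ddobsn\mid\theta) = \E[X_m]$, so the claim is precisely that $\E[X_m] \to p_S(\ddobsn\mid\theta)$ as $m\to\infty$, where for fixed $\theta$ and fixed observations the limit $p_S(\ddobsn\mid\theta) = \exp\{-w\sum_i S(P_\theta,\dobs_i)\}$ is a deterministic constant.

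First I would use the hypothesis that $X_m \to p_S(\ddobsn\mid\theta)$ in probability. Since convergence in probability always implies convergence in distribution, we have $X_m \xrightarrow{d} p_S(\ddobsn\mid\theta)$, with the limit being the degenerate (constant) random variable $X \equiv p_S(\ddobsn\mid\theta)$. Next, because $X_m\ge 0$ we have $|X_m| = X_m$, so the assumed moment bound $\sup_m \E[|X_m|^{1+\delta}] < \infty$ for some $\delta>0$ is exactly the sufficient condition for uniform integrability recorded in Remark~\ref{lemma:2}; hence $\{X_m\}$ is uniformly integrable.

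Finally I would apply Lemma~\ref{lemma:1} (Theorem 3.5 in \cite{billingsley1999convergence}): a uniformly integrable sequence converging in distribution to $X$ satisfies $\E[X_m]\to\E[X]$. Since $X$ is almost surely equal to the constant $p_S(\ddobsn\mid\theta)$, we have $\E[X] = p_S(\ddobsn\mid\theta)$, and therefore
\[
\pshat(\ddobsn\mid\theta) = \E[X_m] \longrightarrow p_S(\ddobsn\mid\theta), \qquad m\to\infty,
\]
which is the desired conclusion. There is no genuinely hard step here: the argument is a clean assembly of the two imported facts. The only points requiring care are the routine observations that convergence in probability yields convergence in distribution (so that Lemma~\ref{lemma:1} applies), that nonnegativity lets the moment bound serve as the uniform-integrability criterion of Remark~\ref{lemma:2}, and that the limiting expectation is simply the constant $p_S(\ddobsn\mid\theta)$ because the limit random variable is degenerate.
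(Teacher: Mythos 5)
Your proposal is correct and is exactly the paper's argument: the paper's proof of this theorem consists of the single sentence ``The proof follows by applying Remark~\ref{lemma:2} and Lemma~\ref{lemma:1},'' and your write-up simply makes explicit the routine steps (passing from convergence in probability to convergence in distribution, nonnegativity making the moment bound the uniform-integrability criterion, and the degenerate limit) that the paper leaves implicit.
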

\begin{proof}
	The proof follows by applying Remark~\ref{lemma:2} and Lemma \ref{lemma:1}.
\end{proof}
We are finally ready to prove Theorem~\ref{Th:auxiliary_lik_complete}:
\begin{proof}[Proof of Theorem~\ref{Th:auxiliary_lik_complete}]
First, notice how the convergence in probability of $ \hat S(\Ddsimmtheta, \dobs_i) $ to $ S(P_\theta, \dobs_i) $ (assumption 1 in Theorem~\ref{Th:auxiliary_lik_complete}) and the continuity of the exponential function imply convergence in probability of  $\exp\{ - w \sum_i \hat S(\Ddsimmtheta, \dobs_i)\}$ to $p_S(\ddobsn | \theta)$. That, together with assumption 2 in Theorem~\ref{Th:auxiliary_lik_complete}, satisfy the requirements of Theorem~\ref{Th:Res2}. 
With the latter and assumption 3 in Theorem~\ref{Th:auxiliary_lik_complete}, Theorem~\ref{Th:Res1} holds, which yields the result.
\end{proof}

\subsection{Proof and more details on Theorem \ref{Th:asymptotic_short}}\label{app:normal_asymptotic}
\subsubsection{Complete statement of Theorem \ref{Th:asymptotic_short}}\label{app:normal_asymptotic_complete}

    We proceed here with stating the more precise version of the result provided in Sec.~\ref{sec:as_norm}.
	Specifically, we show that the SR posterior satisfies (under some conditions) a Bernstein-von Mises theorem ensuring asymptotic normality. Without loss of generality, we fix here $ w=1 $ (other values can be absorbed in the definition of $ S $). The proof relies on the following assumptions:

	\begin{enumerate}[label=\textbf{A\arabic*}]

		\item \label{ass:thetastar} The expected scoring rule $ S(P_\theta, P_0) $ is finite for all $ \theta \in \Theta $; further, it has a unique minimizer:
		$$ \theta^\star =\argmin_{\theta \in \Theta} S(P_\theta, P_0)=\argmin_{\theta \in \Theta} D(P_\theta, P_0). $$
		Additionally, $  H_\star:=\nabla_\theta^2 S(P_\theta, P_0)\big|_{\theta=\thetastar} $ is positive definite.

		\item \label{ass:differentiable} Let us denote $ S'''(P_\theta, \dobs) _{jkl} = \frac{\partial^3}{\partial \theta_j\partial \theta_k\partial \theta_l}S(P_\theta, Y) $. There exists an open neighborhood $ E \subseteq \R^d$ of $ \thetastar $ whose closure $ \bar E \subseteq \Theta $ is such that, for all $ j,k,l\in\{1, \ldots, d\} $:
		\begin{itemize}
			\item $ \theta\to S'''(P_\theta, \dobs)_{jkl}$ is continuous in $ E $ and exists in $ \bar E $ for any fixed $ \dobs \in \X $,
			\item $ \dobs \to S'''(P_\theta, \dobs)_{jkl}$ is measurable for any fixed $ \theta \in \bar E $,
			\item $ \E_{P_0} \sup_{\theta\in\bar E} \left|  S'''(P_\theta, \dobs)_{jkl} \right|< \infty$.
		\end{itemize}

		\item \label{ass:neighborhood} For $ E $ defined above, there exists a compact $ K \subseteq E $, with $ \thetastar $ in the interior of $ K $, such that:
		\begin{equation}\label{}
		P_0\left\{ \liminf_n\inf_{\theta\in\Theta\backslash K} \frac{1}{n} \sum_{i=1}^n S(P_\theta, \Dobs_i) > S(P_{\thetastar}, P_0)  \right\}=1.
		\end{equation}

		\item \label{ass:prior} The prior has a density $\pi(\theta)$ with respect to Lebesgue measure; $ \pi(\theta)  $ is continuous and positive at $\thetastar$.
	\end{enumerate}

	Assumption~\ref{ass:neighborhood} is a regularity condition which can be replaced
	with clearer (but less general) Assumptions; see Appendix~\ref{app:normal_asymptotic_proof}.
	In Assumption \ref{ass:thetastar}, $ H_\star $ generalizes the standard Fisher information, which can be obtained by setting $ S(P_\theta, \dobs) = -\log p(\dobs|\theta) $. Additionally, uniqueness of $ \thetastar $ is obtained by strictly proper $ S $ and a well-specified model (in which case observations were generated from $ P_{\thetastar} $). If the model class is misspecified, a strictly proper $ S $ does not guarantee a unique minimizer (as in fact there may be pathological cases where multiple minimizers exist).

	\begin{theorem}\label{Th:normal_asymptotic}
		Let Assumptions \ref{ass:thetastar} to \ref{ass:prior} be true. Then, there is a sequence $  \hatthetan\left(\Ddobsn \right) $ which converges almost surely to $ \thetastar $ as $ n\to\infty $. Denote now by $ \pis^{*}\left(\cdot | \Ddobs_{\mathbf{n}}\right) $ the density of $ \sqrt{n}\left(\theta-\hatthetan\left(\Ddobs_{\mathbf{n}}\right)\right) $ when $ \theta\sim\pi_S\left(\cdot | \Ddobsn\right) $. Then as $n \rightarrow \infty$, with probability 1 over $ \Ddobsn $:
		$$
		\int_{\mathbb{R}^p}\left|\pis^{*}\left(s | \Ddobs_{\mathbf{n}}\right)-\mathcal{N}\left(s|0,H_\star^{-1}\right)\right| d s  \to 0,
		$$
		where $ \mathcal{N}(\cdot|0, \Sigma) $ denotes the density of a multivariate normal distribution with zero mean vector and covariance matrix $ \Sigma $.
	\end{theorem}

    In Appendix~\ref{app:normal_asymptotic_discussion}, we discuss assumptions and compare Theorem~\ref{Th:normal_asymptotic} with related results. Next, in Appendix~\ref{app:normal_asymptotic_proof}, we prove the Theorem.

\subsubsection{Discussion and comparison with related results}\label{app:normal_asymptotic_discussion}

\paragraph{Discussion on assumptions}

The uniqueness of the minimizer of the expected scoring rule $ \thetastar $ (in Assumption \ref{ass:thetastar}) is satisfied in a well specified setup if $ S $ is a strictly proper scoring rule (in which case $ P_{\thetastar}=P_0 $). If the model class is not well specified, a strictly proper $ S $ does not guarantee the minimizer to be unique (as in fact there may be pathological cases where multiple minimizers exist).

Additionally, it may be the case that, for a specific $ P_0 $ and misspecified model class $ P_\theta $, the minimizer of $ S(P_\theta, P_0) $ is unique even if $ S $ is not strictly proper; in fact, in general, being not strictly proper means that there exist at least one pair of values $ 	\theta^{(1)}, \theta^{(2)}$ for which $ S(P_{\theta^{(1)}}, P_{\theta^{(2)}}) = S(P_{\theta^{(1)}}, P_{\theta^{(1)}}) $, but it may be that the $ \argmin_{\theta \in \Theta} S(P_\theta, P_0)$ is unique for that specific choice of $ P_0 $, as the minimizer is in a region of the parameter space for which there are no other parameter values which lead to the same value of the scoring rule.

Our proof below builds on Theorem 5 in \cite{miller2019asymptotic}; to do so, we require regularity conditions on the third order derivatives of the SR (in Assumptions~\ref{ass:differentiable} or, alternatively, \ref{ass:differentiable_bis} below). It may be possible however to relax these assumptions to assuming $ \theta \to S(P_\theta, \dobs) $ can be locally written as a quadratic function of $ \theta $, with bounded coefficient for the third order term; this is usually called a Locally Asymptotically Normal (LAN) condition. With such, it would be possible to apply Theorem 4 in \cite{miller2019asymptotic} (more general than Theorem 5) to show our result.

\paragraph{Related results}
Appendix A in \cite{loaiza2019focused} provides a result which holds with non-i.i.d. (independent and identically distributed) data, with a generalized posterior based on scoring rules with a similar formulation to ours. Additionally, they replace our assumptions on differentiability (which ensure the existence of the Taylor series expansion in the proof below) with assuming the difference of the cumulative scoring rules have a LAN form. Finally, they only show convergence in probability.

Another related result can be found in \cite{matsubara2021robust}, which studies a generalized posterior based on kernelized Stein Discrepancy; similarly to us, they build on \cite{miller2019asymptotic}, and provide almost sure convergence. However, they exploit Theorem 4 in \cite{miller2019asymptotic}, while we rely on Theorem 5. In \cite{matsubara2021robust}, third order differentiability conditions are assumed, analogously to our Assumption \ref{ass:differentiable}. The remaining assumptions in \cite{matsubara2021robust} are similar to ours, including prior continuity and uniqueness of the minimizer $ \thetastar $.

Finally, we remark that, if multiple minimizers of $ S(P_\theta, P_0) $ exist (in finite number), it may be possible to obtain an asymptotic \textit{fractional} normality result, which ensures the SR posterior converges to a mixture of normal distributions centered in the different minimizers; see for instance \citep{frazier2021synthetic} for an example of such results in the setting of BSL. We leave this for future work.

\subsubsection{Alternative statements and proof}\label{app:normal_asymptotic_proof}

First, let us reproduce Theorem 5 in \cite{miller2019asymptotic}, on which our proof is based, for ease of reference. Here, convergence and boundedness for vectors $ v \in \R^p $, matrices $ M\in \R^{p\times p } $ and tensors $ T \in \R^{p\times p \times p}$ are defined with respect to Euclidean-Frobenius norms, that is: $ |v| = \left(\sum_j v_j^2\right)^{1/2} $, $  \|M\| = \left(\sum_{jk} M_{jk}^2\right)^{1/2}  $ and $   \|T\| = \left(\sum_{jkl} T_{jkl}^2\right)^{1/2}  $.

\begin{theorem}[Theorem 5 in \cite{miller2019asymptotic}]\label{Th:miller}
	Let $ \Theta \subseteq \R^p $. Let $ E \subseteq \Theta $ be open (in $ \R^p $) and bounded. Fix $ \thetastar \in E $ and let $ \pi : \Theta\to \R $ be a probability density with respect to Lebesgue measure. Consider the following family of distributions:
	\begin{equation}\label{}
	\pi_n(\theta) = \frac{\pi(\theta)\exp(-nf_n(\theta))}{\int_\Theta \pi(\theta)\exp(-nf_n(\theta))},
	\end{equation}
	where $ f_n:\Theta\to\R $ is a family of functions.
	Under the following conditions:
	\begin{enumerate}[label=\textbf{C\arabic*}]
		\item \label{cond:prior} $ \pi $ is continuous at $ \thetastar $ and $ \pi(\thetastar)>0 $,
		\item \label{cond:third_der} $ f_n $ have continuous third derivatives in $ E $,
		\item \label{cond:pointwise_convergence} $ f_n\to f $ pointwise for some $ f:\Theta\to\R $,
		\item \label{cond:posdef} $ f''(\thetastar) $ is positive definite,
		\item \label{cond:unif_bounded} $ f_n''' $ is uniformly bounded in $ E $,
		\item \label{cond:6} Either one of the following holds:
		\begin{enumerate}
			\item \label{cond:K} for some compact $ K \subseteq E $, with $ \thetastar $ in the interior of $ K $, $ f(\theta)> f(\thetastar)\ \forall \theta \in K \backslash \{\thetastar\}$ and $ \liminf_n\inf_{\theta\in\Theta\backslash K} f_n(\theta)>f(\thetastar) $, or
			\item \label{cond:convex} each $ f_n $ is convex and $ f'(\thetastar)=0 $;
		\end{enumerate}
	\end{enumerate}
	then, there is a sequence $ \theta_n\to\thetastar $ such that $ f'_n(\theta_n) =0$ for all $ n $ sufficiently large, $f_n(\theta_n)\to f(\thetastar)$ and, letting $ q_n $ be the density of $ \sqrt{n}(\theta-\theta_n) $ when $ \theta\sim \pi_n $:
	\begin{equation}\label{key}
	\int \left| q_n(s) - \mathcal{N}\left(s|0, (f''(\thetastar))^{-1}\right) \right| ds \to 0 \text{ as }  n\to\infty,
	\end{equation}
	that is, $ q_n $ converges to $ \mathcal{N}\left(0, (f''(\thetastar))^{-1}\right) $ in total variation. Additionally, \ref{cond:convex} implies \ref{cond:K} under the other conditions.
\end{theorem}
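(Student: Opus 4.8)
The plan is to prove this as a classical Laplace-approximation route to a Bernstein--von Mises statement: localize $\pi_n$ around a sequence of critical points of $f_n$, identify the limiting Gaussian shape, and control the tails using condition \ref{cond:6}. The first step is to upgrade the pointwise convergence \ref{cond:pointwise_convergence} to local uniform convergence of derivatives. Because each $f_n$ is $C^3$ (\ref{cond:third_der}) with $f_n'''$ uniformly bounded by some $C$ on $E$ (\ref{cond:unif_bounded}), a second-order Taylor expansion of $f_n$ about any $\theta_0 \in E$ has cubic remainder bounded by $\tfrac{C}{6}|\theta - \theta_0|^3$ uniformly in $n$. Evaluating the limit $f_n \to f$ at sufficiently many points near $\theta_0$ and solving the resulting linear system expresses the Taylor coefficients $f_n(\theta_0),\,\nabla f_n(\theta_0),\,\nabla^2 f_n(\theta_0)$ as bounded linear combinations of function values; hence they converge, $f$ is $C^2$ near $\theta_0$, and $f_n,\,\nabla f_n,\,\nabla^2 f_n \to f,\,\nabla f,\,\nabla^2 f$ uniformly on compact subsets of $E$. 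In particular $\nabla^2 f(\thetastar) = f''(\thetastar)$ is positive definite (\ref{cond:posdef}) and $\nabla f(\thetastar) = 0$ (assumed in \ref{cond:convex}, and forced in \ref{cond:K} since there $\thetastar$ is a strict local minimizer of $f$). As $\nabla f_n \to \nabla f$ and $\nabla^2 f_n \to \nabla^2 f$ uniformly near $\thetastar$ with nonsingular limiting Jacobian, a quantitative inverse-function (or Brouwer degree) argument yields a sequence $\theta_n \to \thetastar$ with $\nabla f_n(\theta_n) = 0$ for large $n$ and $f_n(\theta_n) \to f(\thetastar)$.

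Next I would establish the local Gaussian shape. Expanding about $\theta_n$ and using $\nabla f_n(\theta_n) = 0$ gives $f_n(\theta) = f_n(\theta_n) + \tfrac12(\theta - \theta_n)^T \nabla^2 f_n(\theta_n)(\theta - \theta_n) + \rho_n(\theta)$ with $|\rho_n(\theta)| \le \tfrac{C}{6}|\theta - \theta_n|^3$. Under the change of variables $s = \sqrt{n}(\theta - \theta_n)$, the density $q_n$ of $s$ is proportional to $\pi(\theta_n + s/\sqrt{n})\exp\bigl(-\tfrac12 s^T \nabla^2 f_n(\theta_n) s - n\rho_n(\theta_n + s/\sqrt{n})\bigr)$ on the rescaled domain. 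For each fixed $s$, condition \ref{cond:prior} gives $\pi(\theta_n + s/\sqrt{n}) \to \pi(\thetastar) > 0$, the curvature converges to $f''(\thetastar)$, and $n\rho_n(\theta_n + s/\sqrt{n}) = O(|s|^3/\sqrt{n}) \to 0$, so the unnormalized integrand converges pointwise to $\pi(\thetastar)\exp(-\tfrac12 s^T f''(\thetastar) s)$.

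The crux, and the main obstacle, is the tail control needed to turn pointwise convergence of the integrand into convergence of the normalizing constant and hence of $q_n$; this is exactly the role of \ref{cond:6}. Splitting $\Theta$ into $K$ and $\Theta \setminus K$: on $\Theta \setminus K$, condition \ref{cond:K} forces $\liminf_n \inf_{\theta \notin K} f_n(\theta) > f(\thetastar) = \lim_n f_n(\theta_n)$, so $\exp(-nf_n)$ there is exponentially negligible relative to the local contribution and, integrated against the integrable prior, vanishes in the limit. On $K$ outside a fixed ball around $\theta_n$ one uses $f(\theta) > f(\thetastar)$ together with the uniform convergence of the first step; within that ball, the uniformly positive-definite curvature $\nabla^2 f_n(\theta_n)$, corrected by the controlled cubic remainder, yields a quadratic lower bound on $f_n - f_n(\theta_n)$, hence a fixed Gaussian dominating function valid uniformly in $n$, so dominated convergence applies. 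This shows $\int q_n \to \pi(\thetastar)(2\pi)^{p/2}\det(f''(\thetastar))^{-1/2}$, whence the normalized $q_n$ converge pointwise to $\mathcal{N}(\cdot\,|\,0,(f''(\thetastar))^{-1})$; since both are probability densities, Scheff\'e's lemma upgrades this to the claimed total-variation convergence.

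Finally, to justify the last sentence of the theorem I would prove the implication \ref{cond:convex} $\Rightarrow$ \ref{cond:K} separately under the remaining hypotheses: for convex $f_n$ with $\nabla f(\thetastar) = 0$, convexity precludes $f_n$ from descending again far from $\thetastar$, so combined with the positive-definite limiting curvature and the local uniform convergence established above one can exhibit a compact $K$ with $\thetastar$ in its interior satisfying the strict-separation property of \ref{cond:K}, thereby reducing the convex case to the one already treated.
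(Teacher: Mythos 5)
The paper does not prove this statement at all: it is Theorem 5 of \cite{miller2019asymptotic}, reproduced verbatim purely as an external tool, and the paper's own work consists only of verifying its hypotheses when proving Theorem~1. The meaningful comparison is therefore with Miller's original proof, and your proposal is essentially a sound reconstruction of it. Your first step --- upgrading pointwise convergence plus the uniform third-derivative bound to $C^2$-convergence $f_n\to f$, $f_n'\to f'$, $f_n''\to f''$ on compact subsets of $E$ --- is precisely Miller's Theorem 7, which the paper itself invokes when proving its extended Theorem~1 under the third set of assumptions. Your localization argument (rescaling $s=\sqrt{n}(\theta-\theta_n)$ about a critical point $\theta_n$ obtained by an inverse-function/degree argument, cubic remainder of size $O(|s|^3/\sqrt{n})$, exponential negligibility of $\Theta\setminus K$ from the liminf condition and of $K$ minus a fixed small ball from strict separation plus uniform convergence on the compact $K$, Gaussian domination on the ball, then Scheff\'e's lemma) is the standard Laplace route that Miller formalizes through his more general Theorem 4, a LAN-type master result. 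Your treatment of the final claim --- that the convex case implies condition (a): uniform convergence on a sphere $\partial K$ around $\thetastar$ gives $f_n\geq f(\thetastar)+c$ there for large $n$, and convexity propagates this lower bound to all of $\Theta\setminus K$ --- is likewise how Miller disposes of the convex case, and your observation that $f'(\thetastar)=0$ is automatic under (a) (since $\thetastar$ is then a local minimizer of the differentiable limit $f$) is correct.

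A few points you should tighten, none fatal. First, your dominating function on the fixed ball requires $\pi$ bounded there; continuity of $\pi$ at the single point $\thetastar$ does yield a bound on a sufficiently small ball, but you must shrink the ball accordingly and say so, since the hypotheses give no control of $\pi$ away from $\thetastar$ beyond integrability --- which is exactly (and only) what your tail estimates may use. Second, the theorem nowhere assumes the normalizing constant $\int_\Theta \pi(\theta)e^{-nf_n(\theta)}\,d\theta$ is finite; your decomposition in fact proves it for large $n$ (bounded $e^{-nf_n}$ off $K$ against integrable $\pi$, continuity of $f_n$ on the compact $K$), and this deserves an explicit remark. Third, in the derivative-recovery step, the linear system expressing $(f_n,\nabla f_n,\nabla^2 f_n)(\theta_0)$ through function values needs evaluation points in general position to be invertible, and you must also verify that the limiting matrix is genuinely the Hessian of $f$ (e.g., via the classical theorem on uniform convergence of derivatives, or by replacing the whole step with an Arzel\`a--Ascoli argument, which the equicontinuity supplied by the uniform third-derivative bound makes available).
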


Notice that Theorem~\ref{Th:miller} considers deterministic $ f_n $ and $ f $. In order to prove our result, therefore, we will verify the different conditions hold almost surely, which implies almost sure convergence.

Besides the assumptions considered in the main text (i.e. \ref{ass:thetastar}-\ref{ass:prior}), it is possible to prove the asymptotic normality result in Theorem~\ref{Th:normal_asymptotic} under alternative sets of assumptions. For this reason, we introduce the following:

\begin{enumerate}[label=\textbf{A\arabic*bis}]
	\setcounter{enumi}{1}
	\item \label{ass:differentiable_bis} The parameter space $ \Theta $ is open, convex, and bounded; the function $ \theta\to S(P_\theta, \dobs) $, for any fixed $ \dobs\in\X $, can be extended to the closure $ \bar \Theta $. Let us denote $ S'''(P_\theta, \dobs) _{jkl} = \frac{\partial^3}{\partial \theta_j\partial \theta_k\partial \theta_l}S(P_\theta, Y) $. For all $ j,k,l\in\{1, \ldots, d\} $:
	\begin{itemize}
		\item $ \theta\to S'''(P_\theta, \dobs)_{jkl}$ is continuous in $ \Theta $ and exists in $ \bar \Theta $ for any fixed $ \dobs \in \X $,
		\item $ \dobs \to S'''(P_\theta, \dobs)_{jkl}$ is measurable for any fixed $ \theta \in \bar \Theta $,
		\item $ \E_{P_0} \sup_{\theta\in\bar \Theta} \left|  S'''(P_\theta, \dobs)_{jkl} \right|< \infty$.
	\end{itemize}

	\item \label{ass:convex} For each $\dobs \in \X$, the function:
	$ \theta \to S(P_\theta, \dobs)  $
	is convex.
\end{enumerate}

The following extended form of Theorem~\ref{Th:normal_asymptotic} includes the formulation in the main text as well as two alternative sets of assumptions.

\begin{customthm}{\ref{Th:normal_asymptotic} - extended version}
	\textit{	Under either one of the following sets of assumptions:
		\begin{enumerate}
			\item \label{item:set1} \ref{ass:thetastar}, \ref{ass:differentiable}, \ref{ass:neighborhood}, \ref{ass:prior} (the set originally used in the main text),
			\item \label{item:set3} \ref{ass:thetastar}, \ref{ass:differentiable}, \ref{ass:convex}, \ref{ass:prior},
			\item \label{item:set2} \ref{ass:thetastar}, \ref{ass:differentiable_bis}, \ref{ass:prior},
		\end{enumerate}
		the statement of Theorem~\ref{Th:normal_asymptotic} in the main text holds.
}\end{customthm}

We next move to proving our result.

Assumption~\ref{ass:differentiable} is used in the original set of assumptions to ensure the second part of Condition~\ref{cond:K} holds almost surely. Under set of assumptions~\ref{item:set3}, convexity of the scoring rules (Assumption~\ref{ass:convex}) is used to show Condition~\ref{cond:convex}; alternatively, with set of assumptions~\ref{item:set2}, the constraints on $ \Theta $ are used to imply the second part of Condition~\ref{cond:K} with probability 1 using Theorem 7 in \cite{miller2019asymptotic}. In both cases, Assumption~\ref{ass:differentiable} is not explicitly needed anymore -- as in fact it is implied by the remaining assumptions. However, we are unable to remove Assumption~\ref{ass:differentiable} under no constraints on $ \Theta $ or the convexity of $ \theta\to S(P_\theta, y) $.

We now give our proof:

\begin{proof}[Proof of Theorem~\ref{Th:normal_asymptotic} - extended version]
	In order to obtain our result, we identify $$ f_n\left(\theta \right) =  \frac{1}{n} \sum_{i=1}^n S(P_\theta, \Dobs_i) , \qquad f(\theta) = S(P_\theta, P_0); $$ this implies that $ f_n $ is now a random quantity: as such, we show the conditions for Theorem~\ref{Th:miller} hold almost surely over the stochasticity induced by $ \Ddobsn $.

	All three sets of assumptions include Assumptions \ref{ass:thetastar} and \ref{ass:prior}; therefore, under all three sets of assumptions:
	\begin{itemize}
		\item \ref{cond:prior} corresponds to our Assumption~\ref{ass:prior},
		\item \ref{cond:pointwise_convergence} holds almost surely thanks to the strong law of large numbers, as $ S(P_\theta, P_0) $ is finite $ \forall \theta \in \Theta $ by Assumption~\ref{ass:thetastar},
		\item \ref{cond:posdef} is implied by Assumption~\ref{ass:thetastar}.
	\end{itemize}

	Therefore, we are left with establishing \ref{cond:third_der}, \ref{cond:unif_bounded} and \ref{cond:6} separately for the different sets of assumptions.

	\paragraph{Set of assumptions \ref{item:set1} (used in Appendix~\ref{app:normal_asymptotic}):}
	\begin{itemize}

		\item \ref{cond:third_der} is implied by our Assumption~\ref{ass:differentiable} to hold with probability $ 1 $ for all $ n $.

		\item In order to show \ref{cond:unif_bounded}, we proceed in similar manner as in Theorem 13 in \cite{miller2019asymptotic}. For any $ j,k,l\in\{1,\ldots,d\} $, Assumption~\ref{ass:differentiable} implies that, with probability 1, $ f_n'''(\theta)_{jkl} = \frac{1}{n} \sum_{i=1}^n  S'''(P_\theta, \Dobs_i)_{jkl} $ is uniformly bounded on $ \bar E $ by the uniform law of large number (Theorem 1.3.3 in \citealt{ghosh2003bayesian}). Letting $ C_{jkl}(\Dobs_1, \Dobs_2, \ldots) $ be such a uniform bound for each $ j,k,l $, we have that with probability $ 1 $, for all $ n\in \mathbb{N} $, $ \theta\in\bar E $, $  \| f_n'''(\theta) \|^2 =\sum_{jkl} (f_n'''(\theta)_{jkl})^2\le \sum_{jkl} C_{jkl}(\Dobs_1, \Dobs_2, \ldots) < \infty$. Thus, $ f_n'''(\theta) $ is almost surely uniformly bounded on $ \bar E $, and hence on $ E $

		\item The first part of \ref{cond:K} is implied by Assumption~\ref{ass:thetastar}, while the second part holds almost surely by Assumption~\ref{ass:neighborhood}.
	\end{itemize}
	\paragraph{Set of assumptions \ref{item:set3}:}
	The only difference here is that we replace Assumption~\ref{ass:neighborhood} with the stronger convexity Assumption~\ref{ass:convex}; therefore, \ref{cond:third_der} and \ref{cond:unif_bounded} are shown in the same way as with set of assumptions~\ref{item:set1}.

	Next, consider \ref{cond:convex}: the first part is implied to hold with probability 1 for all $ n $ by Assumption~\ref{ass:convex}, as the sum of convex functions is convex. The second part is instead consequence of $ \thetastar $ being a stationary point of $ f $ due to  Assumption~\ref{ass:thetastar}, and of $ f'(\thetastar) $ existing due to \ref{cond:posdef}.
	\paragraph{Set of assumptions \ref{item:set2}:}
	Under these assumptions, we fix $ E=\Theta $ in the statement of Theorem~\ref{Th:miller}, as we consider $ \Theta $ to be open and bounded. With that, we can exploit Assumption~\ref{ass:differentiable_bis} and follow the same steps as with set of assumptions~\ref{item:set1} to show that, over $ \Theta $, \ref{cond:third_der} and \ref{cond:unif_bounded} hold with probability 1.

	The first part of \ref{cond:K} is implied by Assumption~\ref{ass:thetastar}, for any choice of $ K $; it now remains to show the second part. First, Theorem 7 in \cite{miller2019asymptotic} implies that $ f_n \to f $ uniformly almost surely, as in fact $ f_n $ have continuous third derivatives by \ref{cond:third_der}, $ f_n''' $ is uniformly bounded with probability 1 by \ref{cond:unif_bounded}, and $ f_n\to f $ with probability 1 due to \ref{cond:pointwise_convergence} holding with probability 1.

	Therefore, with probability 1:
	\begin{equation}\label{}
	\liminf_n\inf_{\theta\in\Theta\backslash K} f_n(\theta) = \inf_{\theta\in\Theta\backslash K} f(\theta) > \inf_{\theta\in\Theta} f(\theta) = f(\thetastar),
	\end{equation}
	where the first equality is due to uniform convergence allowing to ``swap'' the infimum and the limit.
\end{proof}

\subsection{Proof of Theorem~\ref{Th:posterior_consistency}}\label{app:posterior_consistency}

First, we prove a finite sample generalization bound which is valid for the generalized Bayes posterior with a generic loss, assuming a concentration property and prior mass condition. Next, we will use this Lemma to prove Theorem ~\ref{Th:posterior_consistency} reported in the main body of the paper (in Section~\ref{sec:posterior_consistency}), by first proving concentration results for Kernel and energy scores.

We remark that our Theorem~\ref{Th:posterior_consistency} is similar to Theorem 1 in \cite{matsubara2021robust} for the kernelized Stein Discrepancy (KSD) posterior, but provides a tighter probability bound. As the kernel used in KSD is unbounded, in fact, \cite{matsubara2021robust} had to rely on weaker results with respect to the ones used to prove Theorem~\ref{Th:posterior_consistency}. With a similar approach, a result for unbounded $ k $ or $ \X $ may be obtained in our case; we leave this for future exploration.

\subsubsection{Lemma for generalized Bayes posterior with generic loss}

In this Subsection, we consider the following generalized Bayes posterior:
\begin{equation}
\label{Eq:general_Bayes_post}
\pi_L(\theta|\ddobsn ) \propto\pi(\theta)\exp\left\{-w n L(\theta, \ddobsn) \right\},
\end{equation}
where $\ddobsn = \{\dobs_i\}_{i=1}^n$ denote the observations, $ \pi $ is the prior and $ L(\theta, \ddobsn)  $ is a generic loss function (which does not need to be additive in $ \dobs_i $). Here, the SR posterior for the scoring rule $ S $ corresponds to choosing: $$ L\left(\theta , \ddobsn\right) =  \frac{1}{n} \sum_{i=1}^n S(P_\theta, \dobs_i).$$

First, we state a result concerning this form of the posterior which we will use later (taken from \citealt{knoblauch2019generalized}), and reproduce here the proof for convenience:

\begin{lemma}[Theorem 1 in \cite{knoblauch2019generalized}]\label{lemma:knoblauch}
	Provided that $  \int_{\Theta} \pi(\theta)\exp\left\{-w n L(\theta, \ddobsn)\right\} d\theta <\infty$, $ \pi_L(\cdot|\ddobsn ) $ in Eq.~\eqref{Eq:general_Bayes_post} can be written as the solution to a variational problem:
	\begin{equation}\label{Eq:knoblauch}
	\pi_L(\cdot|\ddobsn )=\underset{\rho \in \mathcal{P}(\Theta)}{\arg \min }\left\{w n \mathbb{E}_{\theta \sim \rho}\left[L(\theta, \ddobsn)\right]+\mathrm{KL}(\rho \| \pi)\right\},
	\end{equation}
	where $ \mathcal{P}(\Theta) $ denotes the set of distributions over $ \Theta $, and $ \mathrm{KL} $ denotes the KL divergence.
\end{lemma}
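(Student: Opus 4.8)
The plan is to show that the generalized posterior $\pi_L(\cdot|\ddobsn)$ is the unique minimizer of the functional $J(\rho) := wn\,\E_{\theta\sim\rho}\left[L(\theta,\ddobsn)\right] + \mathrm{KL}(\rho \| \pi)$ over $\rho\in\mathcal{P}(\Theta)$, by rewriting $J$ as a KL divergence from $\rho$ to the target, up to an additive constant independent of $\rho$. First I would introduce the normalizing constant $Z := \int_\Theta \pi(\theta)\exp\{-wnL(\theta,\ddobsn)\}\,d\theta$, which is finite and positive by the integrability hypothesis of the lemma, so that $\pi_L(\theta|\ddobsn) = Z^{-1}\pi(\theta)\exp\{-wnL(\theta,\ddobsn)\}$ is a genuine probability density on $\Theta$.

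The key algebraic step is the decomposition. For any $\rho$ that is absolutely continuous with respect to $\pi$ (any $\rho$ that is not has $\mathrm{KL}(\rho\|\pi)=+\infty$ and so cannot be optimal), I would substitute the explicit form of $\pi_L$ into the KL divergence and split the logarithm to obtain
$$\mathrm{KL}(\rho\|\pi_L) = \int_\Theta \rho(\theta)\log\frac{\rho(\theta)}{\pi_L(\theta|\ddobsn)}\,d\theta = \mathrm{KL}(\rho\|\pi) + wn\,\E_{\theta\sim\rho}\left[L(\theta,\ddobsn)\right] + \log Z.$$
Rearranging gives $J(\rho) = \mathrm{KL}(\rho\|\pi_L) - \log Z$. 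Since $\log Z$ does not depend on $\rho$, minimizing $J$ over $\mathcal{P}(\Theta)$ is equivalent to minimizing $\mathrm{KL}(\rho\|\pi_L)$.

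To conclude, I would invoke Gibbs' inequality, i.e.\ the non-negativity of the KL divergence together with the fact that $\mathrm{KL}(\rho\|\pi_L)=0$ if and only if $\rho=\pi_L$ almost everywhere. This identifies $\rho=\pi_L(\cdot|\ddobsn)$ as the unique minimizer of $J$, establishing Eq.~\eqref{Eq:knoblauch}.

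The main obstacle to handle with care is rigour in the rearrangement when some of the quantities are infinite: the identity $J(\rho) = \mathrm{KL}(\rho\|\pi_L) - \log Z$ is immediate when all three terms are finite, but one should argue that restricting attention to $\rho\ll\pi$ with $\E_{\theta\sim\rho}[L(\theta,\ddobsn)]<\infty$ loses no generality, since otherwise $J(\rho)=+\infty$ while $J(\pi_L)<\infty$ (the latter because both $\mathrm{KL}(\pi_L\|\pi)$ and $\E_{\theta\sim\pi_L}[L(\theta,\ddobsn)]$ are finite under the stated integrability). Disposing of these degenerate cases cleanly, together with ensuring measurability of $\theta\mapsto L(\theta,\ddobsn)$ so that every integral is well defined, is the only genuinely delicate point; the core computation is a one-line application of the Gibbs variational principle.
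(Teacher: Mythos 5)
Your proposal is correct and follows essentially the same route as the paper's proof: both decompose the objective into $\mathrm{KL}(\rho\,\|\,\pi_L(\cdot|\ddobsn))$ plus the constant $-\log Z$ (the paper absorbs the constant inside an $\arg\min$ manipulation, you write the identity explicitly), and both conclude via the fact that the KL divergence is uniquely minimized when its two arguments coincide. Your additional remarks on the degenerate cases ($\rho\not\ll\pi$ or infinite expected loss) and measurability only make explicit points the paper leaves implicit.
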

\begin{proof}
	We follow here (but adapt to our notation) the proof given in \cite{knoblauch2019generalized}, which in turn is based on the one for the related result contained in \cite{bissiri2016general}.

	Notice that the minimizer of the objective in Eq.~\eqref{Eq:knoblauch} can be written as:
	\begin{equation}\label{Eq:knoblauch_2}
	\begin{aligned}
	\pi^\star(\cdot|\ddobsn )&=\underset{\rho \in \mathcal{P}(\Theta)}{\arg \min }\left\{ \int_{\Theta}\left[\log \left(\exp \left\{w n L(\theta, \ddobsn)\right\}\right) + \log \left(\frac{\rho(\theta)}{\pi(\theta)}\right)\right]\rho(\theta) d\theta \right\}\\
	&=\underset{\rho \in \mathcal{P}(\Theta)}{\arg \min }\left\{\int_{\Theta}\left[\log \left(\frac{\rho(\theta)}{\pi(\theta) \exp \left\{ - w nL(\theta, \ddobsn)\right\}}  \right)\right]\rho(\theta) d\theta \right\}.
	\end{aligned}
	\end{equation}
	As we are only interested in the minimizer $ \pi^\star(\cdot|\ddobsn) $ (and not in the value of the objective), it holds that, for any constant $ Z>0 $:
	\begin{equation}\label{Eq:knoblauch_3}
	\begin{aligned}
	\pi^\star(\cdot|\ddobsn )&=\underset{\rho \in \mathcal{P}(\Theta)}{\arg \min }\left\{\int_{\Theta}\left[\log \left(\frac{\rho(\theta)}{\pi(\theta) \exp \left\{ - w nL(\theta, \ddobsn)\right\}Z^{-1}}  \right)\right]\rho(\theta) d\theta - \log Z \right\}\\
	&= \underset{\rho \in \mathcal{P}(\Theta)}{\arg \min }\left\{ \mathrm{KL}\left(\rho(\theta) \| \pi(\theta) \exp \left\{ - w nL(\theta, \ddobsn)\right\}Z^{-1} \right)  \right\}.
	\end{aligned}
	\end{equation}
	Now, we can set $ Z=\int_{\Theta} \pi(\theta)\exp\left\{-w n L(\theta, \ddobsn)\right\} d\theta  $ (which is finite by assumption) and notice that we get:
	\begin{equation}\label{}
	\pi^\star(\cdot|\ddobsn )=\underset{\rho \in \mathcal{P}(\Theta)}{\arg \min }\left\{\mathrm{KL}\left(\rho|| \pi_L(\cdot|\ddobsn ) \right) \right\},
	\end{equation}
	which yields $ \pi^\star(\cdot|\ddobsn ) = \pi_L(\cdot|\ddobsn )  $ as the KL is minimized uniquely if the two arguments are the same.
\end{proof}

Next, we prove a finite sample (as it holds for fixed number of samples $ n $) generalization bound. Our statement and proof generalize Lemma 8 in \cite{matsubara2021robust} (as we consider a generic loss function $ L(\theta, \ddobsn) $, while they consider the kernelized Stein Discrepancy only).

In order to do this, let $ J $ be a function of the parameter $ \theta $, with $ J(\theta) $ representing some loss (of which we will assume $ L(\theta, \ddobsn) $ is a finite sample estimate; the meaning of $ J $ will be made clearer in the following and when applying this result to the SR posterior).

We will assume the following \textit{prior mass condition}, which is more generic with respect to the one considered in the main body of this manuscript (Assumption~\ref{ass:prior_mass}):
\begin{enumerate}[label=\textbf{A\arabic*bis}]
	\setcounter{enumi}{4}
	\item \label{ass:prior_mass_bis} Denote $ \thetastar \in \argmin_{\theta \in \Theta} J(\theta)$, which is supposed to be non-empty. The prior has a density $ \pi(\theta) $ (with respect to Lebesgue measure) which satisfies
	\begin{equation}\label{}
	\int_{B_{n}\left(\alpha_{1}\right)} \pi(\theta) \mathrm{d} \theta \geq e^{-\alpha_{2} \sqrt{n}}
	\end{equation}
	for some constants $ \alpha_1, \alpha_2>0 $, where we define the sets $$ B_{n}\left(\alpha_{1}\right):=\left\{\theta \in \Theta:\left|J\left({\theta}\right)-J\left(\thetastar\right)\right| \leq \alpha_{1} / \sqrt{n}\right\}. $$
\end{enumerate}
Assumption~\ref{ass:prior_mass_bis} constrains the minimum amount of prior mass which needs to be given to $ J $-balls with decreasing size, and is in general quite a weak condition (similar assumptions are taken in \cite{cherief2020mmd, matsubara2021robust}).

Next, we state our result, which as mentioned above generalizes Lemma 8 in \cite{matsubara2021robust}:
\begin{lemma}\label{lemma:Matsubara_8}
	Consider the generalized posterior $ \pi_L(\theta|\ddobsn ) $ defined in Eq.~\eqref{Eq:general_Bayes_post}, and assume that:
	\begin{itemize}
		\item (concentration) for all $ \delta \in(0,1] $:
		\begin{equation}\label{Eq:concentration_assumption}
		P_0 \left\{  \left| L(\theta, \Ddobsn) - J(\theta) \right| \le   \epsilon_{n}(\delta)  \right\} \ge 1-\delta,
		\end{equation}
		where $ \epsilon_{n}(\delta) \ge 0 $ is an approximation error term;
		\item $ J(\thetastar) = \min_{\theta \in \Theta} J(\theta) $ is finite;
		\item Assumption \ref{ass:prior_mass_bis} holds.
	\end{itemize}
	Then, for all $ \delta \in(0,1] $, with probability at least $ 1-\delta $:
	\begin{equation}
	\int_{\Theta} J(\theta) \pi_L(\theta|\Ddobsn ) \mathrm{d} \theta \leq J(\thetastar) + \frac{\alpha_{1} + \alpha_{2}/w}{\sqrt{n}} + 2\epsilon_{n}(\delta),
	\end{equation}
	where the probability is taken with respect to realisations of the dataset $ \Ddobsn = \{\Dobs_i\}_{i=1}^n, \Dobs_i \overset{iid}{\sim} P_0 \text{ for } i=1, \ldots, n$; this also implies the following statement:
	\begin{equation}
	{P}_0\left(\left|\int_{\Theta} J(\theta) \pi_L(\theta|\Ddobsn ) \mathrm{d} \theta-J(\thetastar)\right| \geq  \frac{\alpha_{1}+\alpha_{2}/w}{\sqrt{n}} + 2 \epsilon_{n}(\delta)\right) \le \delta.
	\end{equation}
\end{lemma}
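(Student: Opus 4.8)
The plan is to run a PAC-Bayes style argument built on the variational characterisation of the posterior in Lemma~\ref{lemma:knoblauch}. Since $\pi_L(\cdot|\ddobsn)$ is the exact minimiser over $\mathcal{P}(\Theta)$ of $wn\,\E_{\theta\sim\rho}[L(\theta,\ddobsn)] + \mathrm{KL}(\rho\|\pi)$, I would compare it against an arbitrary competing distribution $\rho$ and discard the nonnegative term $\mathrm{KL}(\pi_L\|\pi)$ on the left-hand side; dividing by $wn$ this yields
$$\E_{\theta\sim\pi_L}[L(\theta,\Ddobsn)] \le \E_{\theta\sim\rho}[L(\theta,\Ddobsn)] + \frac{1}{wn}\mathrm{KL}(\rho\|\pi).$$
Everything then reduces to choosing $\rho$ well and converting between $L$ and $J$ using the concentration hypothesis.

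For the competitor I would take $\rho$ to be the prior truncated to the ball $B_n(\alpha_1)$, namely $\rho(\theta)\propto \pi(\theta)\,\mathbf{1}\{\theta\in B_n(\alpha_1)\}$. This choice makes the two remaining terms transparent. First, $\mathrm{KL}(\rho\|\pi) = -\log\int_{B_n(\alpha_1)}\pi(\theta)\,\mathrm{d}\theta \le \alpha_2\sqrt{n}$ directly from the prior mass condition~\ref{ass:prior_mass_bis}, so that $\frac{1}{wn}\mathrm{KL}(\rho\|\pi)\le \alpha_2/(w\sqrt{n})$. Second, $\rho$ is supported on $B_n(\alpha_1)$, where $J(\theta)\le J(\thetastar)+\alpha_1/\sqrt{n}$ by definition of the ball; combined with the concentration event in the direction $L\le J+\epsilon_n(\delta)$ this gives $\E_\rho[L]\le J(\thetastar)+\alpha_1/\sqrt{n}+\epsilon_n(\delta)$. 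On the same event I use the reverse inequality $J\le L+\epsilon_n(\delta)$ to pass from the left-hand side to $J$, obtaining $\E_{\pi_L}[J]\le \E_{\pi_L}[L]+\epsilon_n(\delta)$.

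Chaining these bounds on the event of probability at least $1-\delta$ then produces
$$\E_{\theta\sim\pi_L}[J(\theta)] \le J(\thetastar) + \frac{\alpha_1 + \alpha_2/w}{\sqrt{n}} + 2\epsilon_n(\delta),$$
which is the first claim. The two-sided statement follows because $J(\theta)\ge J(\thetastar)$ for every $\theta$ (as $\thetastar$ minimises $J$), so $\E_{\pi_L}[J]-J(\thetastar)$ is already nonnegative and taking absolute values changes nothing; passing to complements converts the high-probability bound into the stated deviation bound.

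The step I expect to be most delicate is the use of the concentration hypothesis~\eqref{Eq:concentration_assumption}. Because $\pi_L$ itself depends on the data $\Ddobsn$, the inequality $J\le L+\epsilon_n(\delta)$ must hold \emph{simultaneously} for all $\theta\in\Theta$ on a single high-probability event over the data, not merely pointwise for each fixed $\theta$; otherwise one cannot integrate it against the data-dependent measure $\pi_L$. I would therefore read~\eqref{Eq:concentration_assumption} as a uniform-in-$\theta$ statement and note that this is exactly what the bounded-kernel and bounded-$\X$ structure delivers in the concentration results proved subsequently, where $|L(\theta,\Ddobsn)-J(\theta)|$ is controlled by a $\theta$-independent fluctuation term amenable to a bounded-differences (McDiarmid) inequality. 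Everything else — the variational inequality, the KL computation and the truncated-prior competitor — is routine once that uniformity is secured.
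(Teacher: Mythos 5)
Your proposal is correct and takes essentially the same route as the paper's proof: the variational characterisation of Lemma~\ref{lemma:knoblauch} with the nonnegative $\mathrm{KL}(\pi_L\|\pi)$ term discarded, both directions of the concentration inequality, the prior truncated to $B_{n}(\alpha_1)$ as the competing measure, and the identical KL and ball computations, differing only in that you substitute the competitor directly rather than carrying the infimum over $\rho$ to the end. Your closing remark is also apt: the paper's proof likewise requires the concentration hypothesis to hold simultaneously for all $\theta$ on a single high-probability event, since it too integrates the pointwise inequalities against the data-dependent posterior.
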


This result ensures that, with high probability, the expectation over the posterior of $ J(\theta)  $ is close to the minimum $ J(\thetastar) $, provided that the distribution of $ L(\theta, \Ddobsn) $ (where $ \Ddobsn\sim P_0^n $ is a random variable) satisfies a concentration bound, which constrains how far $ L(\theta, \Ddobsn) $ is distributed from the loss function $ J(\theta) $. Notice that this result does not require the minimizer of $ J $ to be unique.

Typically the approximation error term $ \epsilon_{n}(\delta) $ is such that $ \epsilon_{n}(\delta) \xrightarrow{\delta\to 0} + \infty $ and $ \epsilon_{n}(\delta) \xrightarrow{n\to\infty} 0  $. If the second limit is verified, the posterior concentrates, for large $ n $, on the values of $ \theta $ which minimize $ J $. In practical cases (as for instance for the SR posterior), it is common to have $ J(\theta) = D(\theta, P_0) $, i.e., corresponding to a loss function relating $ \theta $ with the data generating process $ P_0 $.

We now prove the result.
\begin{proof}[Proof of Lemma~\ref{lemma:Matsubara_8}]

	Due to the absolute value in Eq.~\eqref{Eq:concentration_assumption}, the following two inequalities hold simultaneously with probability (w.p.) at least $ 1-\delta $:
	\begin{equation}\label{Eq:proof_lemma8_ineq1}
	J(\theta) \leq L(\theta, \Ddobsn)+ \epsilon_{n}(\delta),
	\end{equation}
	\begin{equation}\label{Eq:proof_lemma8_ineq2}
	L(\theta, \Ddobsn) \leq J(\theta)+ \epsilon_{n}(\delta).
	\end{equation}
	Taking expectation with respect to the generalized posterior on both sides of Eq.~\eqref{Eq:proof_lemma8_ineq1} yields, \\w.p.~$\ge~1-~\delta $:
	\begin{equation}
	\int_{\Theta} J(\theta) \pi_L(\theta|\Ddobsn ) \mathrm{d} \theta \leq \int_{\Theta} L(\theta, \Ddobsn) \pi_L(\theta|\Ddobsn ) \mathrm{d} \theta+ \epsilon_{n}(\delta).
	\end{equation}
	We now want to apply the identity in Eq.~\eqref{Eq:knoblauch}; therefore, we add $(wn)^{-1} \mathrm{KL}\left(\pi_L(\cdot|\Ddobsn )\| \pi\right) \geq 0$ in the right hand side such that, w.p.~$ \ge 1-\delta $:
	\begin{equation}
	\int_{\Theta} J(\theta) \pi_L(\theta|\Ddobsn ) \mathrm{d} \theta \leq \frac{1}{wn}\left\{\int_{\Theta} wn  L(\theta, \Ddobsn) \pi_L(\theta|\Ddobsn ) \mathrm{d} \theta+\mathrm{KL}\left(\pi_L(\cdot|\Ddobsn )\| \pi\right)\right\}
	+ \epsilon_{n}(\delta).
	\end{equation}
	Now by Eq.~\eqref{Eq:knoblauch}:
	\begin{equation}\label{Eq:proof_lemma8_variational}
	\begin{aligned}
	\int_{\Theta} J(\theta) \pi_L(\theta|\Ddobsn ) \mathrm{d} \theta & \leq \frac{1}{wn} \inf _{\rho \in \mathcal{P}(\Theta)}\left\{\int_{\Theta} w n L(\theta, \Ddobsn) \rho(\theta) \mathrm{d} \theta+\mathrm{KL}(\rho \| \pi)\right\}+ \epsilon_{n}(\delta)\\
	&=\inf _{\rho \in \mathcal{P}(\Theta)}\left\{\int_{\Theta} L(\theta, \Ddobsn) \rho(\theta) \mathrm{d} \theta+\frac{1}{wn} \mathrm{KL}(\rho \| \pi)\right\}+ \epsilon_{n}(\delta),
	\end{aligned}
	\end{equation}
	where $ \mathcal{P}(\Theta) $ denotes the space of probability distributions over $ \Theta $. Putting now Eq.~\eqref{Eq:proof_lemma8_ineq2} in Eq.~\eqref{Eq:proof_lemma8_variational} we have, w.p. $ \ge 1-\delta $:
	\begin{equation}\label{Eq:proof_lemma8_variational_2}
	\int_{\Theta} J(\theta) \pi_L(\theta|\Ddobsn ) \mathrm{d} \theta \le \inf _{\rho \in \mathcal{P}(\Theta)}\left\{\int_{\Theta} J(\theta) \rho(\theta) \mathrm{d} \theta+\frac{1}{wn} \mathrm{KL}(\rho \| \pi)\right\} +2 \epsilon_{n}(\delta),
	\end{equation}
	and using the trivial bound $J(\theta) \leq J(\thetastar)+\left|J(\theta)-J(\thetastar)\right|$ we get:
	\begin{equation}\
	\int_{\Theta} J(\theta) \pi_L(\theta|\Ddobsn ) \mathrm{d} \theta \le J(\thetastar)+\inf _{\rho \in \mathcal{P}(\Theta)}\left\{\int_{\Theta}\left|J(\theta)-J(\thetastar)\right| \rho(\theta) \mathrm{d} \theta+\frac{1}{wn} \mathrm{KL}(\rho \| \pi)\right\}+ 2\epsilon_n(\delta).
	\end{equation}
	Finally, we upper bound the infimum term by exploiting the prior mass condition in Assumption \ref{ass:prior_mass_bis}. Specifically, letting $ \Pi(B_n) = \int_{B_{n}} \pi(\theta) d\theta $, we take $ \rho(\theta) = \pi(\theta) /\Pi(B_n) $ for $ \theta \in B_n $ and $ \rho(\theta)=0  $ otherwise. By Assumption \ref{ass:prior_mass_bis}, we have therefore $ \int_{B_{n}}\left|J(\theta)-J(\thetastar)\right| \rho(\theta) \mathrm{d} \theta \leq \alpha_{1} / \sqrt{n} $ and that $\operatorname{KL}(\rho \| \pi)=\int_{\Theta} \log (\rho(\theta) / \pi(\theta)) \rho(\theta) \mathrm{d} \theta=\int_{B_{n}}-\log \left(\Pi\left(B_{n}\right)\right) \pi(\theta) \mathrm{d} \theta / \Pi\left(B_{n}\right)=-\log \Pi\left(B_{n}\right) \leq \alpha_{2} \sqrt{n}$. Thus, we have:
	\begin{equation}
	\int_{\Theta} J(\theta) \pi_L(\theta|\Ddobsn ) \mathrm{d} \theta \leq J(\thetastar)
	+ \frac{\alpha_{1}+\alpha_{2}/w}{\sqrt{n}} + 2 \epsilon_{n}(\delta),
	\end{equation}
	as claimed in the first statement.

	In order to obtain the second statement, notice that:
	\begin{equation}\
	J(\theta) - J(\thetastar) \ge 0, \quad \forall \ \theta \in \Theta \implies \int_{\Theta} J(\theta) \pi_L(\theta|\Ddobsn ) \mathrm{d} \theta-J(\thetastar) \geq 0;
	\end{equation}
	thus:
	\begin{equation}
	{P}_0\left(\left|\int_{\Theta} J(\theta) \pi_L(\theta|\Ddobsn ) \mathrm{d} \theta-J(\thetastar)\right| \leq   \frac{\alpha_{1}+\alpha_{2}/w}{\sqrt{n}} + 2 \epsilon_{n}(\delta)\right) \geq 1-\delta;
	\end{equation}
	taking the complement yields the result.
\end{proof}

\subsubsection{Case of Kernel and energy score posteriors}

We now state and prove concentration results of the form in Eq.~\eqref{Eq:concentration_assumption} for the Kernel and energy scores. Here, we will assume $ \Dsim \independent \Dsim' $ To this regards, notice that the kernel SR posterior can be written as:
\begin{equation}\label{}
\begin{aligned}
\pik(\theta|\ddobsn) &\propto \pi(\theta) \exp\left\{ - w \sum_{i=1}^n \left[\E_{\Dsim, \Dsim'\sim P_\theta}k(\Dsim,\Dsim') - 2 \E_{\Dsim \sim P_\theta} k(\Dsim, \dobs_i)\right] \right\}\\
&\propto \pi(\theta) \exp\left\{ - w \sum_{i=1}^n \left[\E_{\Dsim, \Dsim'\sim P_\theta}k(\Dsim,\Dsim') - 2 \E_{\Dsim \sim P_\theta} k(\Dsim, \dobs_i)  + \frac{1}{n-1} \sum_{\substack{j=1\\j\neq i}}^n k(\dobs_{i}, \dobs_j) \right] \right\},
\end{aligned}
\end{equation}
as in fact the terms $ k(\dobs_{i}, \dobs_j) $ are independent of $ \theta $. From the second line in the above expression and the form of the generalized Bayes posterior with generic loss in Eq.~\eqref{Eq:general_Bayes_post}, we can identify:
\begin{equation}\label{Eq:empirical_loss_kerSR}
L(\theta, \ddobsn) =  \E_{\Dsim, \Dsim'\sim P_\theta}k(\Dsim,\Dsim') - \frac{2}{n} \sum_{i=1}^n \E_{\Dsim \sim P_\theta} k(\Dsim, \dobs_i) + \frac{1}{n(n-1)} \sum_{\substack{i,j=1\\i\neq j}}^n k(\dobs_{i}, \dobs_j) .
\end{equation}
Similarly, the energy score posterior can be obtained by identifying in Eq.~\eqref{Eq:general_Bayes_post}:
\begin{equation}\label{Eq:empirical_loss_enSR}
L(\theta, \ddobsn) =  \frac{2}{n} \sum_{i=1}^n \E_{\Dsim \sim P_\theta} ||\Dsim- \dobs_i||_2^\beta - \frac{1}{n(n-1)} \sum_{\substack{i,j=1\\i\neq j}}^n ||\dobs_{i}- \dobs_j||_2^\beta  - \E_{\Dsim, \Dsim'\sim P_\theta}||\Dsim-\Dsim'||_2^\beta;
\end{equation}
this can be obtained by simply setting $ k(x,y) = -\|x-y\|_2^\beta $ in Eq.~\eqref{Eq:empirical_loss_kerSR}, as the Kernel SR with that choice of kernel recovers the Energy SR.

For both SRs, $ L(\theta, \Ddobsn) $ is an unbiased estimator (with respect to $ \Dobs_i \sim P_0 $) of the associated divergences; in fact, considering $ \Dsim \independent\Dsim'\sim P_\theta $ and $ \Dobs\independent\Dobs' \sim P_0 $, the associated divergence for Kernel SR is the squared MMD (see Appendix~\ref{app:MMD}):
\begin{equation}\label{Eq:kernel_SR_divergence}
D_k(P_\theta, P_0) = \E k(\Dsim, \Dsim') + \E k(\Dobs, \Dobs')- 2 \E k(\Dsim,\Dobs),
\end{equation}
while, for the Energy SR, the associated divergence is the squared Energy Distance:
\begin{equation}\label{Eq:energy_SR_divergence}
\DEbeta(P_\theta, P_0) = 2 \E ||\Dsim-\Dobs||_2^\beta - \E ||\Dsim- \Dsim'||_2^\beta - \E|| \Dobs- \Dobs'||_2^\beta.
\end{equation}

In order to prove our concentration results, we will exploit the following Lemma:

\begin{lemma}[McDiarmid's inequality, \citealt{mcdiarmid1989method}]\label{lemma:mcdiarmid}
	Let $ g $ be a function of $ n $ variables $ \ddobsn = \{\dobs_i\}_{i=1}^n $, and let
	\begin{equation}\label{}
	\delta_{i} g(\ddobsn):=\sup _{z\in \X} g\left(\dobs_{1}, \ldots, \dobs_{i-1}, z, \dobs_{i+1}, \ldots, \dobs_{n} \right)-\inf _{z\in\X} g\left(\dobs_{1}, \ldots, \dobs_{i-1}, z, \dobs_{i+1}, \ldots, \dobs_{n}\right),
	\end{equation}
	and $ \left\|\delta_{i} g\right\|_{\infty}:=\sup _{\ddobsn\in\X^n}\left|\delta_{i} g(\ddobsn)\right| $. If $ \Dobs_1, \ldots, \Dobs_n $ are independent random variables:
	\begin{equation}\label{Eq:McDiarmids}
	{P}\left(g\left(\Dobs_{1}, \ldots, \Dobs_{n}\right)-\E g\left(\Dobs_{1}, \ldots, \Dobs_{n}\right) \geq \varepsilon\right) \leq e^{-2 \varepsilon^{2} / \sum_{i=1}^{n}\left\|\delta_{i} g\right\|_{\infty}^{2}}.
	\end{equation}
\end{lemma}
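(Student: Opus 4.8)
The plan is to prove this classical bounded-differences inequality via the Doob martingale construction, combining Hoeffding's lemma with a Chernoff bound (the Azuma--Hoeffding route). First I would introduce the Doob martingale associated to $g$: set $V_0 = \E\, g(\Dobs_1,\ldots,\Dobs_n)$ and $V_i = \E\left[g(\Dobs_1,\ldots,\Dobs_n)\mid \Dobs_1,\ldots,\Dobs_i\right]$ for $i=1,\ldots,n$, so that $V_n = g(\Dobs_1,\ldots,\Dobs_n)$. The centered quantity then telescopes as $g - \E\, g = \sum_{i=1}^n D_i$, where the martingale differences $D_i := V_i - V_{i-1}$ satisfy $\E\left[D_i\mid \Dobs_1,\ldots,\Dobs_{i-1}\right]=0$.

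The key step is to control the conditional range of each $D_i$. Conditioning on realized values $\Dobs_1,\ldots,\Dobs_{i-1}$ and exploiting independence, I would write $V_i = h(\Dobs_i)$ and $V_{i-1} = \E\, h(\Dobs_i)$, where $h(z)$ denotes the expectation of $g$ with the $i$-th coordinate fixed at $z$ and the remaining coordinates $\Dobs_{i+1},\ldots,\Dobs_n$ integrated out against their (independent) laws. Hence $D_i = h(\Dobs_i) - \E\, h(\Dobs_i)$ lies almost surely in an interval of length $\sup_z h(z) - \inf_z h(z)$. Since $\sup_z \E[\,\cdot\,] \le \E[\sup_z \,\cdot\,]$ and $\inf_z \E[\,\cdot\,] \ge \E[\inf_z \,\cdot\,]$, this length is at most $\E[\delta_i g] \le \|\delta_i g\|_\infty$. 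I expect this conditioning and measurability bookkeeping to be the main obstacle: one must verify carefully that integrating out the remaining coordinates does not enlarge the range beyond $\|\delta_i g\|_\infty$, which is exactly where independence is needed.

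Given the bound on the conditional range, I would invoke Hoeffding's lemma: any zero-mean random variable supported in an interval of length $\ell$ has moment generating function bounded by $\exp(s^2\ell^2/8)$. Applied conditionally on $\Dobs_1,\ldots,\Dobs_{i-1}$, this yields $\E\left[e^{s D_i}\mid \Dobs_1,\ldots,\Dobs_{i-1}\right] \le \exp\bigl(s^2\|\delta_i g\|_\infty^2/8\bigr)$ for every $s\in\R$.

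Finally I would assemble the tail bound by peeling the differences off one at a time. Using the tower property and the conditional estimate above at each stage gives $\E\left[\exp(s(g-\E\, g))\right] \le \exp\bigl(\tfrac{s^2}{8}\sum_{i=1}^n\|\delta_i g\|_\infty^2\bigr)$. A Chernoff bound then yields, for any $s>0$, $P(g - \E\, g \ge \varepsilon) \le \exp\bigl(-s\varepsilon + \tfrac{s^2}{8}\sum_{i=1}^n\|\delta_i g\|_\infty^2\bigr)$; optimizing over $s$ by taking $s = 4\varepsilon/\sum_{i=1}^n\|\delta_i g\|_\infty^2$ produces the claimed exponent $-2\varepsilon^2/\sum_{i=1}^n\|\delta_i g\|_\infty^2$, completing the argument.
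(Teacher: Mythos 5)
Your proof is correct, and it is essentially the canonical argument: the paper itself states this lemma without proof, deferring to \cite{mcdiarmid1989method}, whose proof is precisely the Doob-martingale decomposition you describe, with the conditional range of each difference $D_i$ bounded by $\left\|\delta_i g\right\|_\infty$ via independence, Hoeffding's lemma applied conditionally, and a Chernoff bound optimized at $s = 4\varepsilon/\sum_{i=1}^n \left\|\delta_i g\right\|_\infty^2$. All steps, including the measurability and conditioning bookkeeping you flag as the main obstacle, are handled correctly, so there is nothing to repair.
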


We are now ready to prove two concentration results of the form of Eq.~\eqref{Eq:concentration_assumption}. The first holds for the Kernel SR assuming a bounded kernel, while the latter holds for the Energy SR assuming a bounded $ \X $. Let us start with a simple equality stated in the following Lemma:

\begin{lemma}\label{lemma:equality_concentration}
	For $ L(\theta, \Ddobsn) $ defined in Eq.~\eqref{Eq:empirical_loss_kerSR} and $ D_k(P_\theta, P_0) $ defined in Eq.~\eqref{Eq:kernel_SR_divergence}, we have:
	\begin{equation}
	L(\theta, \Ddobsn) - D_k(P_\theta, P_0) = g(\Dobs_1, \Dobs_2, \ldots, \Dobs_n) - \E [g(\Dobs_1, \Dobs_2, \ldots, \Dobs_n)]
	\end{equation}
	for \begin{equation}	\label{Eq:g_kerSR}
	g(\Dobs_1, \Dobs_2, \ldots, \Dobs_n) = \frac{1}{n(n-1)} \sum_{\substack{i,j=1\\i\neq j}}^n k(\Dobs_{i}, \Dobs_j) - \frac{2}{n} \sum_{i=1}^n \E_{\Dsim \sim P_\theta} k(\Dsim, \Dobs_i).
	\end{equation}
	Similar expression holds for $ L(\theta, \Ddobsn) $ defined in Eq.~\eqref{Eq:empirical_loss_enSR} and $ \DEbeta(P_\theta, P_0) $ defined in Eq.~\eqref{Eq:energy_SR_divergence}, by setting $k(x,y) = -\|x-y\|_2^\beta $.
	
\end{lemma}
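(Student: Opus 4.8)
The plan is to prove the claimed identity by a direct computation, isolating the terms of $L(\theta, \Ddobsn)$ that genuinely depend on the observations and checking that their mean reproduces the corresponding terms of the divergence. First I would write the two objects side by side. The loss $L(\theta, \Ddobsn)$ from Eq.~\eqref{Eq:empirical_loss_kerSR} has three contributions: the $\theta$-only term $\E_{\Dsim, \Dsim' \sim P_\theta} k(\Dsim, \Dsim')$, the cross term $-\frac{2}{n}\sum_i \E_{\Dsim \sim P_\theta} k(\Dsim, \Dobs_i)$, and the within-data term $\frac{1}{n(n-1)}\sum_{i\neq j} k(\Dobs_i, \Dobs_j)$; meanwhile $D_k(P_\theta, P_0)$ from Eq.~\eqref{Eq:kernel_SR_divergence} equals $\E k(\Dsim, \Dsim') + \E k(\Dobs, \Dobs') - 2\E k(\Dsim, \Dobs)$.

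Subtracting, the term $\E_{\Dsim, \Dsim'} k(\Dsim, \Dsim')$ cancels exactly, since it carries no dependence on $\Ddobsn$, leaving
\[
L(\theta, \Ddobsn) - D_k(P_\theta, P_0) = g(\Dobs_1, \ldots, \Dobs_n) - \big[\E k(\Dobs, \Dobs') - 2\E k(\Dsim, \Dobs)\big],
\]
where the two observation-dependent terms of $L$ are exactly the function $g$ defined in Eq.~\eqref{Eq:g_kerSR}. The only remaining step is then to identify the bracketed constant with $\E[g(\Dobs_1, \ldots, \Dobs_n)]$, the expectation being over $\Dobs_i \overset{iid}{\sim} P_0$.

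For the double-sum term I would use that each of the $n(n-1)$ ordered off-diagonal pairs $(i,j)$ satisfies $\E k(\Dobs_i, \Dobs_j) = \E k(\Dobs, \Dobs')$ by independence and identical distribution, so the prefactor $1/(n(n-1))$ cancels the pair count and contributes $\E k(\Dobs, \Dobs')$. For the cross term, linearity and $\Dobs_i \sim P_0$ give $\E[\E_{\Dsim} k(\Dsim, \Dobs_i)] = \E_{\Dsim, \Dobs} k(\Dsim, \Dobs)$ for each $i$, so that sum contributes $-2\E k(\Dsim, \Dobs)$. Hence $\E[g] = \E k(\Dobs, \Dobs') - 2\E k(\Dsim, \Dobs)$, which is precisely the bracket above, and the centering identity follows.

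The energy-score case needs no separate argument: substituting $k(x,y) = -\|x-y\|_2^\beta$ converts Eq.~\eqref{Eq:empirical_loss_kerSR} into Eq.~\eqref{Eq:empirical_loss_enSR} and Eq.~\eqref{Eq:kernel_SR_divergence} into Eq.~\eqref{Eq:energy_SR_divergence}, as already observed in the excerpt, so the identical algebra yields the claim for $\DEbeta$. There is no genuine obstacle here beyond bookkeeping; the single point demanding care is the expectation of the U-statistic-style double sum, where one must track that the normalization $1/(n(n-1))$ is matched to the number of ordered distinct pairs. This is exactly why $g$ is defined to collect only the observation-dependent terms, since the $\theta$-only term is deterministic under $P_0$ and plays no role in the centering.
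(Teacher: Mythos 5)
Your proof is correct and takes essentially the same route as the paper's: cancel the $\theta$-only term $\E_{\Dsim,\Dsim'\sim P_\theta} k(\Dsim,\Dsim')$ in the subtraction, collect the observation-dependent terms of $L$ into $g$, and identify the leftover constant $\E[k(\Dobs,\Dobs')] - 2\E[k(\Dsim,\Dobs)]$ with $\E[g(\Dobs_1,\ldots,\Dobs_n)]$, with the energy case handled by the substitution $k(x,y) = -\|x-y\|_2^\beta$. If anything, you are slightly more explicit than the paper, which asserts the identification of $\E[g]$ directly, whereas you spell out the pair-counting argument showing that the $1/(n(n-1))$ normalization matches the number of ordered distinct pairs.
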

\begin{proof}
	First, notice that, for $ L(\theta, \Ddobsn) $ defined in Eq.~\eqref{Eq:empirical_loss_kerSR} and $ D_k(P_\theta, P_0) $ defined in Eq.~\eqref{Eq:kernel_SR_divergence}:
	\begin{equation}\
	\begin{aligned}
	L(\theta, \Ddobsn) - D_k(P_\theta, P_0) &= \cancel{\E_{\Dsim, \Dsim'\sim P_\theta}k(\Dsim,\Dsim')} - \frac{2}{n} \sum_{i=1}^n \E_{\Dsim \sim P_\theta} k(\Dsim, \Dobs_i) + \frac{1}{n(n-1)} \sum_{\substack{i,j=1\\i\neq j}}^n k(\Dobs_{i}, \Dobs_j) + \\&-  \cancel{\E_{\Dsim, \Dsim'\sim P_\theta} k(\Dsim, \Dsim')} - \E_{\Dobs,\Dobs'\sim P_0} [k(\Dobs, \Dobs')]+ 2 \E_{\Dsim \sim P_\theta, \Dobs\sim P_0} [k(\Dsim,\Dobs)] \\
	& =  \frac{1}{n(n-1)} \sum_{\substack{i,j=1\\i\neq j}}^n k(\Dobs_{i}, \Dobs_j) - \frac{2}{n} \sum_{i=1}^n \E_{\Dsim \sim P_\theta} k(\Dsim, \Dobs_i) +\\&- \left(\E_{\Dobs,\Dobs'\sim P_0} [k(\Dobs, \Dobs')] - 2 \E_{\Dsim \sim P_\theta, \Dobs\sim P_0} [k(\Dsim,\Dobs)]  \right)\\
	&=  g(\Dobs_1, \Dobs_2, \ldots, \Dobs_n) - \E [g(\Dobs_1, \Dobs_2, \ldots, \Dobs_n)],
	\end{aligned}
	\end{equation}
	where the expectation in the last line is with respect to $ \Dobs_i \sim P_0,\ i=1,\ldots, n $, and where we set $ g $ as in Eq.~\eqref{Eq:g_kerSR}.
\end{proof}

Now, we give the concentration result for the kernel SR:
\begin{lemma}\label{lemma:concentration_MMD2}
	Consider $ L(\theta, \ddobsn) $ defined in Eq.~\eqref{Eq:empirical_loss_kerSR} (corresponding to the loss function defining the kernel score posterior) and $ D_k(P_\theta, P_0) $ defined in Eq.~\eqref{Eq:kernel_SR_divergence}; if the kernel is such that $\sup_{x, y\in\mathcal{X}}| k(x, y) | \le \kappa <\infty $, we have:
	\begin{equation}\
	P_0\left(\left|	L(\theta, \Ddobsn) - D_k(P_\theta, P_0) \right| \le \sqrt{-\frac{32\kappa^2}{n}\log \frac{\delta}{2}}\right) \ge 1- \delta.
	\end{equation}
\end{lemma}
\begin{proof}
	First, we write:
	\begin{equation}\label{}
	L(\theta, \Ddobsn) - D_k(P_\theta, P_0) = g(\Dobs_1, \Dobs_2, \ldots, \Dobs_n) - \E [g(\Dobs_1, \Dobs_2, \ldots, \Dobs_n)],
	\end{equation}
	where $ g $ is defined in Eq.~\eqref{Eq:g_kerSR} in Lemma~\ref{lemma:equality_concentration}. Next, notice that:
	\begin{equation}\label{}
	\begin{aligned}
	P_0(\left|g(\Ddobsn) - \E [g(\Ddobsn)]\right| \ge \epsilon)&\le P_0(g(\Ddobsn) - \E [g(\Ddobsn)] \ge \epsilon) + P_0(g(\Ddobsn) - \E [g(\Ddobsn)] \le - \epsilon) \\
	&= P_0(g(\Ddobsn) - \E [g(\Ddobsn)] \ge \epsilon) + P_0(-g(\Ddobsn) - \E [-g(\Ddobsn)] \ge \epsilon)
	\end{aligned}
	\end{equation}
	by the union bound. We use now McDiarmid's inequality (Lemma~\ref{lemma:mcdiarmid}) to prove the result. Consider first $ P_0(g(\Ddobsn) - \E [g(\Ddobsn)] \ge \epsilon)  $; thus:
	\begin{equation}\label{}
	\begin{aligned}
	\left|\delta_i g(\Ddobsn) \right|  &= \left|\sup_z \left\{   \frac{2}{n(n-1)} \sum_{\substack{j=1\\j\neq i}}^n k(z, \Dobs_j) - \frac{2}{n} \E_{\Dsim \sim P_\theta} k(\Dsim,z) \right\} - \inf_z \left\{   \frac{2}{n(n-1)} \sum_{\substack{j=1\\j\neq i}}^n k(z, \Dobs_j) - \frac{2}{n} \E_{\Dsim \sim P_\theta} k(\Dsim,z) \right\}\right|  \\
	&=\left| \sup_z \left\{   \frac{2}{n(n-1)} \sum_{\substack{j=1\\j\neq i}}^n k(z, \Dobs_j) - \frac{2}{n} \E_{\Dsim \sim P_\theta} k(\Dsim,z) \right\} + \sup_z \left\{   \frac{2}{n} \E_{\Dsim \sim P_\theta} k(\Dsim,z) - \frac{2}{n(n-1)} \sum_{\substack{j=1\\j\neq i}}^n k(z, \Dobs_j) \right\}\right|  \\
	&\le\sup_z \left|    \frac{2}{n(n-1)} \sum_{\substack{j=1\\j\neq i}}^n k(z, \Dobs_j) - \frac{2}{n} \E_{\Dsim \sim P_\theta} k(\Dsim,z) \right|+ \sup_z \left|   \frac{2}{n} \E_{\Dsim \sim P_\theta} k(\Dsim,z) - \frac{2}{n(n-1)} \sum_{\substack{j=1\\j\neq i}}^n k(z, \Dobs_j) \right|  \\
	&=2 \cdot \frac{2}{n}\sup_z \left|    \frac{1}{n-1} \sum_{\substack{j=1\\j\neq i}}^n k(z, \Dobs_j) - \E_{\Dsim \sim P_\theta} k(\Dsim,z) \right| \le\frac{4}{n} \sup_z \left\{     \frac{1}{n-1} \sum_{\substack{j=1\\j\neq i}}^n \left|k(z, \Dobs_j)\right| + \E_{\Dsim \sim P_\theta} \left|k(\Dsim,z) \right|\right\} \\
	&\le\frac{4}{n} \left\{     \frac{1}{n-1} \sum_{\substack{j=1\\j\neq i}}^n \underbrace{\sup_z  \left|k(z, \Dobs_j)\right|}_{\le\kappa} + \E_{\Dsim \sim P_\theta} \underbrace{\sup_z \left|k(\Dsim,z) \right|}_{\le\kappa}\right\} \le \frac{4}{n} \left\{\frac{1}{n-1}\cdot (n-1) \kappa + \kappa \right\} = \frac{8\kappa}{n}
	\end{aligned}
	\end{equation}
	As the bound does not depend on $ \Ddobsn $, we have that $ ||\delta_i g ||_\infty \le \frac{8\kappa}{n} $, from which McDiarmid's inequality (Lemma~\ref{lemma:mcdiarmid}) gives:
	\begin{equation}\
	P_0(g(\Ddobsn) - \E [g(\Ddobsn)] \ge \epsilon) \le \exp\left(- \frac{2 \epsilon^2}{n \cdot \frac{64 \kappa^2}{n^2}}\right) = e^{-\frac{n \epsilon^2}{32 \kappa^2}}.
	\end{equation}
	For the bound on the other side, notice that  $ ||\delta_i(- g) ||_\infty  =||\delta_i g ||_\infty $; therefore, we also have
	\begin{equation}\
	P_0(-g(\Ddobsn) - \E [-g(\Ddobsn)] \ge \epsilon) \le e^{-\frac{n \epsilon^2}{32 \kappa^2}},
	\end{equation}
	from which:
	\begin{equation}\label{}
	P_0(\left|g(\Ddobsn) - \E [g(\Ddobsn)]\right| \ge \epsilon) \le 2 e^{-\frac{n \epsilon^2}{32 \kappa^2}}.
	\end{equation}
	Defining the right hand side of the bound as $ \delta $, we get:
	\begin{equation}\label{}
	P_0\left(\left|g(\Ddobsn) - \E [g(\Ddobsn)]\right| \ge \sqrt{-\frac{32\kappa^2}{n}\log \frac{\delta}{2}}\right) \le \delta,
	\end{equation}
	from which the result is obtained taking the complement.
\end{proof}

We now give the analogous result for the energy score:

\begin{lemma}\label{lemma:concentration_energy}
	Consider $ L(\theta, \ddobsn) $ defined in Eq.~\eqref{Eq:empirical_loss_enSR} (corresponding to the loss function defining the energy score posterior) and $ \DEbeta(P_\theta, P_0) $ defined in Eq.~\eqref{Eq:energy_SR_divergence}; assume that the space $ \X  $ is bounded such that $ \sup_{x,y \in \X}||x-y||_2 \le B  <\infty $; therefore, we have:
	\begin{equation}\
	P_0\left(\left|	L(\theta, \Ddobsn) - \DEbeta(P_\theta, P_0) \right| \le \sqrt{-\frac{32B^{2\beta}}{n}\log \frac{\delta}{2}}\right) \ge 1- \delta.
	\end{equation}
\end{lemma}
\begin{proof}
	We rely on Lemma~\ref{lemma:concentration_MMD2}; in fact, recall that the kernel score recovers the energy score for $ k(x,y) = -\|x-y\|_2^\beta $. With this choice of $ k $, Eqs.~\eqref{Eq:empirical_loss_kerSR} and \eqref{Eq:kernel_SR_divergence} (considered in Lemma~\ref{lemma:concentration_MMD2}) respectively recover Eqs.~\eqref{Eq:empirical_loss_enSR} and \eqref{Eq:energy_SR_divergence}.

	Additionally, assuming $ \X $ to be bounded ensures that $|k(x,y)| =  \|x-y\|_2^\beta \le B^\beta$; therefore, we can apply Lemma~\ref{lemma:concentration_MMD2} with $ \kappa= B^\beta $, from which the result follows.
\end{proof}

We are finally ready to prove our generalization bound:

\begin{proof}[Proof of Theorem~\ref{Th:posterior_consistency}]
	The proof consists in verfying the assumptions of Lemma~\ref{lemma:Matsubara_8}, for both the energy and kernel score posteriors. First, notice that \ref{ass:prior_mass} is a specific case of \ref{ass:prior_mass_bis} by identifying $ J(\theta) = D_k(P_\theta, P_0)  $ or $ J(\theta) = \DE(P_\theta, P_0)  $. We therefore need to verify the first and second assumptions only.

	Let us first consider the kernel score posterior (part 1 of Theorem~\ref{Th:posterior_consistency}).
	Recall that, for positive-definite, Cauchy-Schwarz inequality holds:
	\begin{equation}\label{}
	|k(x,y)| \le \sqrt{k(x,x) k(y,y)}.
	\end{equation} 
	Hence, the boundedness assumption stated in part 1 of Theorem~\ref{Th:posterior_consistency}  implies that in Lemma~\ref{lemma:concentration_MMD2}:
	\begin{equation}\label{}
	\sup_{x,y \in \X}|k(x,y)| \le \sup_{x,y \in \X} \sqrt{k(x,x) k(y,y)} \le \kappa.
	\end{equation} 
	Also, the kernel score posterior 	corresponds to the generalized Bayes posterior in Eq.~\eqref{Eq:general_Bayes_post} by choosing $ L(\theta, \Ddobsn) $ defined in Eq.~\eqref{Eq:empirical_loss_kerSR}; with this choice of $ L(\theta, \Ddobsn) $, Lemma~\ref{lemma:concentration_MMD2} holds, which corresponds to the first assumption of Lemma~\ref{lemma:Matsubara_8} with $ J(\theta) = D_k(P_\theta, P_0)  $ ($ D_k $ being the divergence related to the kernel SR, defined in Eq.~\eqref{Eq:kernel_SR_divergence}) and:
    \begin{equation}\
	\epsilon_{n}(\delta)  = \sqrt{-\frac{32\kappa^2}{n}\log \frac{\delta}{2}}.
	\end{equation}
	Finally, we have that $ D_k(P_\theta, P_0)\ge 0 $, which ensures the second assumption of  Lemma~\ref{lemma:Matsubara_8}. Thus, we have, from Lemma~\ref{lemma:Matsubara_8}:
	\begin{equation}\
	{P_0}\left(\left|\int_{\Theta} D_k(P_\theta, P_0)   \pik(\theta|\Ddobsn ) \mathrm{d} \theta-D_k(P_{\thetastar}, P_0)  \right| \geq  \frac{1}{\sqrt n}\left( {\alpha_{1}+\frac{\alpha_{2}}{w}} +8 \kappa \sqrt{-2\log \frac{\delta}{2}}\right) \right) \leq \delta;
	\end{equation}
	by defining the deviation term as $ \epsilon $ and inverting the relation, we obtain the result for the kernel Score Posterior.

	The same steps can be taken for the the energy score posterior; specifically, we notice that it corresponds to the generalized Bayes posterior in Eq.~\eqref{Eq:general_Bayes_post} by choosing $ L(\theta, \Ddobsn) $ defined in Eq.~\eqref{Eq:empirical_loss_enSR}; with this choice of $ L(\theta, \Ddobsn) $, Lemma~\ref{lemma:concentration_energy} holds, which corresponds to the first assumption of Lemma~\ref{lemma:Matsubara_8} with $ J(\theta) = \DEbeta(P_\theta, P_0)  $ ($ \DEbeta $ being the divergence related to the Energy SR defined in Eq.~\eqref{Eq:energy_SR_divergence}) and:
	\begin{equation}\
	\epsilon_{n}(\delta)  = \sqrt{-\frac{32B^{2\beta}}{n}\log \frac{\delta}{2}}.
	\end{equation}
	Finally, we have that $ \DEbeta(P_\theta, P_0)\ge 0 $, which ensures the second assumption of  Lemma~\ref{lemma:Matsubara_8}. Thus, we have, from Lemma~\ref{lemma:Matsubara_8}:
	\begin{equation}\
	{P_0}\left(\left|\int_{\Theta} \DEbeta(P_\theta, P_0)   \piebeta(\theta|\Ddobsn ) \mathrm{d} \theta-\DEbeta(P_{\thetastar}, P_0)  \right| \geq  \frac{1}{\sqrt n}\left( {\alpha_{1}+\frac{\alpha_{2}}{w}} + 8 B^\beta\sqrt{-2\log \frac{\delta}{2}}\right) \right) \leq \delta;
	\end{equation}
	by defining the deviation term as $ \epsilon $ and inverting the relation, we obtain the result for the energy score Posterior.

\end{proof}

We remark here that Theorem 1 in \cite{cherief2020mmd} proved a similar generalization bound for the kernel score posterior holding in expectation (rather than in high probability, as for our bounds), albeit under a slightly different prior mass condition.

\subsection{Proof of Theorem~\ref{Th:robustness}}\label{app:robustness}

Global bias-robustness (for a generic constant $ C<\infty $) was shown in \cite{matsubara2021robust} for their kernelized Stein discrepancy (KSD) posterior. Here, we provide an upper bound for the constant $ C $ for both the kernel and energy score posteriors. 

To prove our result, we first generalize Lemma 5 in \cite{matsubara2021robust} (our Lemma~\ref{lemma:robustness}), which provides bounds on the constant for global bias-robustness for a generalized Bayes posterior depending on bounds on the loss function defining the posterior.

Across this Section, we define as $ \empP_n = \frac{1}{n}\sum_{i=1}^n \delta_{\dobs_i} $ the empirical distribution given by the observations $  \ddobsn = \{\dobs_i\}_{i=1}^n$ (considered to be non-random here) and consider the generalized Bayes posterior:
\begin{equation}\label{Eq:generalized_Bayes_posterior_empP}
\pi_L(\theta|\empP_n ) \propto\pi(\theta)\exp\left\{-w n L(\theta, \empP_n) \right\},
\end{equation}
from which the SR posterior in Eq.~\eqref{Eq:SR_posterior} with scoring rule $ S $ is recovered with:
\begin{equation}\label{}
L(\theta, \empP_n) = \frac{1}{n} \sum_{i=1}^n S(P_\theta, \dobs_i) = \E_{Y \sim \empP_n} S(P_\theta, \Dobs),
\end{equation}
We remark that the notation is here slightly different from Appendix~\ref{app:posterior_consistency}, in which we considered $ L $ to be a function of $ \theta $ and $ \ddobsn $ (compare Eq.~\ref{Eq:generalized_Bayes_posterior_empP} with Eq.~\ref{Eq:general_Bayes_post}). The reason of this will be clear in the following.

We start by stating the result we will rely on, to which we provide proof for ease of reference.

\begin{lemma}\label{lemma:robustness}
	Let $ \pi_L(\theta|\empP_n ) $ be the generalized posterior defined in Eq.~\eqref{Eq:generalized_Bayes_posterior_empP} for fixed $ n\in \mathbb{N} $, with a generic loss $ L(\theta, \empP_n) $ and prior $ \pi(\theta) $. Let $ \Delta_n = \sup_{\theta \in \Theta} L\left(\theta ; \empP_n\right) - \inf_{\theta \in \Theta} L\left(\theta ; \empP_n\right) $ and $\mathrm{D} L\left(z, \theta, \empP_{n}\right) =\left.({d} / {d} \epsilon) L\left(\theta , \empP_{n, \epsilon, z}\right)\right|_{\epsilon=0}$.
		
Then,
	\begin{equation}
\sup _{\theta \in \Theta} \sup _{z \in \mathcal{X}}\left|\operatorname{PIF}\left(z, \theta, \empP_n\right)\right| \le 2 w n e^{w n \Delta_n} \cdot \sup_{\theta\in\Theta} \sup_{z \in \X} \left|\mathrm{D} L\left(z, \theta, \empP_{n}\right)\right|\cdot \sup_{\theta\in\Theta} \pi(\theta).
\end{equation}

\end{lemma}
\begin{proof}

	First of all, Eq.~(17) of \cite{ghosh2016robust} demonstrates that
	\begin{equation}
	\begin{aligned}
	\operatorname{PIF}\left(z, \theta, \empP_n\right)&=w n \pi_L(\theta|\empP_n )\left(-\mathrm{D} L\left(z, \theta, \empP_n\right)+\int_{\Theta} \mathrm{D} L\left(z, \theta^{\prime}, \empP_n\right) \pi_L(\theta'|\empP_n ) \mathrm{d} \theta^{\prime}\right)\\
	&\le w n \pi_L(\theta|\empP_n )\left(\sup_{\theta'\in\Theta}\mathrm{D} L\left(z, \theta^{\prime}, \empP_n\right) -\mathrm{D} L\left(z, \theta, \empP_n\right)  \right),
	\end{aligned}
	\end{equation}
	where $ \operatorname{PIF} $ denotes the posterior influence function defined in Sec.~\ref{sec:robustness} in the main text and where the inequality holds due to the mean of a random variable always being smaller than the maximum value the variable can get.

	We can now get the following upper bound:
	\begin{equation}
	\begin{aligned}
	\sup _{\theta \in \Theta} \sup _{z \in \mathcal{X}}\left|\operatorname{PIF}\left(z, \theta, \empP_n\right)\right| &\leq w n \sup _{\theta \in \Theta}\left\{ \pi_L(\theta|\empP_n )\left(\sup _{z \in \mathcal{X}}\left|\mathrm{D} L\left(z, \theta, \empP_n\right)\right|+\sup _{{z} \in \mathcal{X}}\sup _{\theta' \in \Theta}\left|\mathrm{D} L\left(z, \theta^{\prime}, \empP_n\right)\right| \right)\right\}\\
	&\le  w n \sup _{\theta \in \Theta} \left\{ \pi_L(\theta|\empP_n )\sup _{z \in \mathcal{X}}\left|\mathrm{D} L\left(z, \theta, \empP_n\right)\right|\right\}+ w n \left\{ \sup _{\theta \in \Theta}\pi_L(\theta|\empP_n ) \cdot \sup _{{z} \in \mathcal{X}}\sup _{\theta' \in \Theta} \left|\mathrm{D} L\left(z, \theta^{\prime}, \empP_n\right)\right| \right\} \\
	&\le 2w n \left\{ \sup _{\theta \in \Theta}\pi_L(\theta|\empP_n ) \cdot\sup _{{z} \in \mathcal{X}}\sup _{\theta \in \Theta} \left|\mathrm{D} L\left(z, \theta, \empP_n\right)\right| \right\} .
	\end{aligned}
	\end{equation}

	Recall now that 
\begin{equation}\label{}
\begin{aligned}
	\pi_L(\theta|\empP_n )&=\frac{\pi(\theta) \exp \left\{-w n L\left(\theta ; \empP_n\right)\right\}}{\int_{\Theta}\pi(\theta) \exp \left\{-w n L\left(\theta ; \empP_n\right)\right) d\theta}\le \frac{\pi(\theta) \exp \left\{-w n \inf _{\theta \in \Theta} L\left(\theta ; \empP_n\right)\right\}}{\int_{\Theta}\pi(\theta) \exp \left\{-w n L\left(\theta ; \empP_n\right)\right)d\theta} \\
	&\le \frac{\pi(\theta) \exp \left\{-w n \inf_{\theta \in \Theta} L\left(\theta ; \empP_n\right)\right\}}{\inf_{\theta \in \Theta} \exp \left\{-w n L\left(\theta ; \empP_n\right)\right) }  =  \frac{\pi(\theta) \exp \left\{-w n \inf_{\theta \in \Theta} L\left(\theta ; \empP_n\right)\right)}{ \exp \left\{-w n \sup_{\theta \in \Theta} L\left(\theta ; \empP_n\right)\right) } \\
	 &= \pi(\theta) \exp \left\{w n \left(\sup_{\theta \in \Theta} L\left(\theta ; \empP_n\right) - \inf_{\theta \in \Theta} L\left(\theta ; \empP_n\right)\right)\right) , 
\end{aligned}
\end{equation}	
	Let us now denote $ \Delta_n = \sup_{\theta \in \Theta} L\left(\theta ; \empP_n\right) - \inf_{\theta \in \Theta} L\left(\theta ; \empP_n\right) $. From the upper bound above, we have:
	\begin{equation}
	\sup _{\theta \in \Theta} \sup _{z \in \mathcal{X}}\left|\operatorname{PIF}\left(z, \theta, \empP_n\right)\right| \leq 2 w n e^{w n \Delta_n} \sup _{\theta \in \Theta} \pi(\theta) \sup _{{z} \in \mathcal{X}}\sup _{\theta \in \Theta} \left|\mathrm{D} L\left(z, \theta, \empP_n\right)\right|
	\end{equation}
	as claimed.
\end{proof}

Next, we give the explicit form for $ \mathrm{D} L\left(z, \theta, \empP_{n}\right) $ in our case in the following Lemma:
\begin{lemma}\label{lemma:robustness_2}
	For $ L\left(\theta , \empP_{n, \epsilon, z}\right) = \E_{\Dobs\sim \empP_{n, \epsilon, z}} S(P_\theta, \Dobs) $, we have:
	\begin{equation}\label{}
	\mathrm{D} L\left(z, \theta, \empP_{n}\right) = S(P_\theta, z) - \E_{\Dobs\sim \empP_n} S(P_\theta, \Dobs);
	\end{equation}
	further, setting $ S = S_k $, where $ S_k $ is the kernel scoring rule with kernel $ k $, we have:
	\begin{equation}\label{}
	\mathrm{D} L\left(z, \theta, \empP_{n}\right) = 2 \E_{X \sim P_\theta}\left[ \E_{\Dobs\sim \empP_n} k(\Dsim,\Dobs) - k(\Dsim,z) \right];
	\end{equation}
	finally, the form for the energy score can be obtained by setting $ k(x,y) = - || x-y||_2^\beta $.
\end{lemma}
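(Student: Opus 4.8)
The plan is to exploit the fact that the loss $L(\theta, \cdot)$ depends on its distributional argument only through an expectation, hence \emph{linearly}, which makes the $\epsilon$-derivative of the contamination trivial to compute in closed form.

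First I would expand the contaminated loss. By the definition of the $\epsilon$-contamination distribution $\empP_{n,\epsilon,z} = (1-\epsilon)\empP_n + \epsilon\,\delta_z$ and linearity of the expectation in the measure,
\begin{equation}
L(\theta, \empP_{n,\epsilon,z}) = \E_{\Dobs\sim\empP_{n,\epsilon,z}} S(P_\theta, \Dobs) = (1-\epsilon)\,\E_{\Dobs\sim\empP_n} S(P_\theta, \Dobs) + \epsilon\, S(P_\theta, z).
\end{equation}
This is an affine function of $\epsilon$, so its derivative is constant in $\epsilon$; differentiating and evaluating at $\epsilon=0$ gives
\begin{equation}
\mathrm{D} L(z, \theta, \empP_n) = \left.\frac{d}{d\epsilon} L(\theta, \empP_{n,\epsilon,z})\right|_{\epsilon=0} = S(P_\theta, z) - \E_{\Dobs\sim\empP_n} S(P_\theta, \Dobs),
\end{equation}
which is the first claim.

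Next I would specialise to $S = S_k$. Since the kernel score is $S_k(P_\theta, \dobs) = \E[k(\Dsim,\Dsim')] - 2\,\E[k(\Dsim,\dobs)]$ with $\Dsim\independent\Dsim'\sim P_\theta$, the leading term $\E[k(\Dsim,\Dsim')]$ is independent of the observation and therefore cancels in the difference $S_k(P_\theta,z) - \E_{\Dobs\sim\empP_n} S_k(P_\theta,\Dobs)$. What remains is
\begin{equation}
-2\,\E_{\Dsim\sim P_\theta} k(\Dsim,z) + 2\,\E_{\Dobs\sim\empP_n}\E_{\Dsim\sim P_\theta} k(\Dsim,\Dobs) = 2\,\E_{\Dsim\sim P_\theta}\!\left[\E_{\Dobs\sim\empP_n} k(\Dsim,\Dobs) - k(\Dsim,z)\right],
\end{equation}
where I use Fubini to exchange the (finite) average over $\empP_n$ with the expectation over $\Dsim\sim P_\theta$; this is the second claim. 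The energy-score form then follows by substituting $k(x,y) = -\|x-y\|_2^\beta$, recalling that this choice recovers $\SEbeta$ from $S_k$.

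There is essentially no hard step here: the result is a direct consequence of linearity of expectation in the measure together with the additive structure of the kernel score. The only point requiring a word of care is the interchange of the average over $\empP_n$ with the expectation over $\Dsim\sim P_\theta$, which is harmless because $\empP_n$ is a finite empirical measure (so the average is a finite sum) and $S_k(P_\theta,\Dobs)$ is integrable; I would note this explicitly to keep the argument rigorous.
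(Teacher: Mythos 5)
Your proposal is correct and follows essentially the same route as the paper's own proof: expand the contaminated expectation linearly in $\epsilon$, differentiate the affine function at $\epsilon=0$, then specialise to $S_k$ by cancelling the observation-independent term $\E[k(\Dsim,\Dsim')]$ and swapping the finite empirical average with the expectation over $P_\theta$. Your explicit remark on the harmlessness of that interchange is a minor addition of rigor but changes nothing of substance.
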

\begin{proof}
	For the first statement, notice that:
	\begin{equation}\label{}
	\E_{\Dobs\sim  \empP_{n, \epsilon, z}} S(P_\theta, \Dobs) = (1-\epsilon) \E_{\Dobs\sim  \empP_{n}} S(P_\theta, \Dobs) + \epsilon S(P_\theta, z),
	\end{equation}
	from which differentiating with respect to $ \epsilon $ gives the statement.

	For the second statement, recall the form for the kernel SR:
	\begin{equation}\label{}
	S_k(P, z) = \E_{\Dsim, \Dsim' \sim P}[k(\Dsim,\Dsim')] - 2 \E_{\Dsim\sim P} [k(\Dsim, z)], \end{equation}
	from which:
	\begin{equation}\label{}
	\begin{aligned}
	&\mathrm{D} L\left(z, \theta, \empP_{n}\right) = S_k(P_\theta, z) - \E_{\empP_n} S_k(P_\theta, \Dobs) \\&= \E_{\Dsim, \Dsim' \sim P_\theta}[k(\Dsim,\Dsim')] - 2 \E_{\Dsim\sim P_\theta} [k(\Dsim, z)] - \E_{\Dobs\sim \empP_n} \left[ \E_{\Dsim, \Dsim' \sim P_\theta}[k(\Dsim,\Dsim')] -2 \E_{\Dsim\sim P_\theta} [k(\Dsim, \Dobs)] \right] \\
	&= \cancel{\E_{\Dsim, \Dsim' \sim P_\theta}[k(\Dsim,\Dsim')]} - 2 \E_{\Dsim\sim P_\theta} [k(\Dsim, z)] - \cancel{\E_{\Dsim, \Dsim' \sim P_\theta}[k(\Dsim,\Dsim')]} +2 \E_{\Dobs\sim \empP_n} \E_{\Dsim\sim P_\theta} [k(\Dsim, \Dobs)] \\
	&= 2 \E_{X \sim P_\theta}\left[ E_{\Dobs\sim \empP_n} k(\Dsim,\Dobs) - k(\Dsim,z) \right].
	\end{aligned}
	\end{equation}		\end{proof}

Finally, we state the proof for Theorem~\ref{Th:robustness}:
\begin{proof}[Proof of Theorem~\ref{Th:robustness}]
	The proof consists in verifying the conditions necessary for Lemma \ref{lemma:robustness} for the Kernel and energy score posteriors

	First, let us consider the kernel score posterior; recall that, for positive-definite kernels, Cauchy-Schwarz inequality holds:
	\begin{equation}\label{}
	|k(x,y)| \le \sqrt{k(x,x) k(y,y)}.
	\end{equation}
	Hence, the boundedness assumption in Theorem~\ref{Th:robustness} yields:
	\begin{equation}\label{}
	\sup_{x,y \in \X}|k(x,y)| \le \sup_{x,y \in \X} \sqrt{k(x,x) k(y,y)} \le \kappa.
	\end{equation}
	Thus, we have:
	\begin{equation}\label{}
	\begin{aligned}
	\left| L(\theta, \empP_n)\right| &\le \frac{1}{n} \sum_{i=1}^n \left|S_k(P_\theta, \dobs_i) \right|\le \frac{1}{n} \sumi\E \left[ \left|k(\Dsim,\Dsim')\right| + \left| 2 k(\Dsim, \dobs_i)  \right| \right] \\
	&\le \frac{1}{n} \sumi \left[\kappa + 2\kappa\right] = 3 \kappa,
	\end{aligned}
	\end{equation}
	where all expectations are over $ X, X'\sim P_\theta $ and the bound exploits the fact that $ |k(x,y)|\le \kappa $. This implies that 
\begin{equation}\label{key}
	\Delta_n = \sup_{\theta \in \Theta} L\left(\theta ; \empP_n\right) - \inf_{\theta \in \Theta} L\left(\theta ; \empP_n\right) \le 6\kappa.
\end{equation}
	
	Using a similar argument as above, 
	notice that, for the kernel SR (using Lemma~\ref{lemma:robustness_2}):
	\begin{equation}\label{}
	\begin{aligned}
	\left|\mathrm{D} L\left(z, \theta, \empP_{n}\right)\right| &= 2 \left| \E_{X \sim P_\theta}\E_{\Dobs\sim \empP_n}\left[  k(\Dsim,\Dobs) - k(\Dsim,z) \right] \right| \\
	&\le 2  \E_{X \sim P_\theta}\E_{\Dobs\sim \empP_n}\left[  \left|k(\Dsim,\Dobs)\right| + \left|k(\Dsim,z)\right| \right] \\
	&\le 2  \E_{X \sim P_\theta}\E_{\Dobs\sim \empP_n}\left[ \kappa + \kappa \right] = 4\kappa.
	\end{aligned}
	\end{equation}
	Hence, by Lemma~\ref{lemma:robustness} we have, for the kernel score posterior
	\begin{equation}\label{key}
	\sup _{\theta \in \Theta} \sup _{z \in \mathcal{X}}\left|\operatorname{PIF}\left(z, \theta, \empP_n\right)\right| \le 8 w n \kappa e^{6wn\kappa } \sup_{\theta\in\Theta} \pi(\theta),
	\end{equation}
	as claimed.

	For the statement about the energy score posterior, we proceed in similar manner. First, let us show that, under the assumptions of the Theorem, $ L(\theta, \empP_{n}) $ for the energy score $ \SE^{(\beta)} $ is lower bounded; in fact:
	\begin{equation}\label{}
	\begin{aligned}
	L(\theta, \empP_n) &= \frac{1}{n} \sum_{i=1}^n \SE^{(\beta)}(P_\theta, \dobs_i) = \frac{1}{n} \sumi \E\left[  2  ||\Dsim- \dobs_i||_2^\beta   - ||\Dsim-\Dsim'||_2^\beta \right] =  \frac{2}{n} \sumi \E  ||\Dsim- \dobs_i||_2^\beta  - \E||\Dsim-\Dsim'||_2^\beta  \\
	&=  \underbrace{\frac{2}{n} \sumi \E  ||\Dsim- \dobs_i||_2^\beta - \E||\Dsim-\Dsim'||_2^\beta  - \frac{1}{n^2} \sum_{i,j=1}^n ||y_i-y_j||_2^\beta}_{=\DEbeta(P_\theta, \empP_{n})} + \frac{1}{n^2} \sum_{i,j=1}^n ||y_i-y_j||_2^\beta,
	\end{aligned}
	\end{equation}
	where $ \DEbeta(P_\theta, \empP_{n}) $ is the squared Energy Distance between $ P_\theta $ and the empirical distribution $ \empP_{n} $; as the Energy Distance is a distance between probability measures \citep{rizzo2016energy}, $ \DEbeta(P_\theta, \empP_{n})\ge 0  $, from which:
	\begin{equation}\
	L(\theta, \empP_n) =  \DEbeta(P_\theta, \empP_{n}) + \frac{1}{n^2} \sum_{i,j=1}^n ||y_i-y_j||_2^\beta \ge 0.
	\end{equation}

	Additionally, recall that, as we assume $ \X $ to be bounded, there exists $ B < \infty $ such that $ \sup_{x,y \in \X}\|x-y\|_2 \le B $. 
	Thus:
	\begin{equation}\label{}
	\begin{aligned}
	L(\theta, \empP_n) & = \frac{1}{n} \sumi \E\left[  2  ||\Dsim- \dobs_i||_2^\beta   - ||\Dsim-\Dsim'||_2^\beta \right] \le 
	\frac{2}{n}\sumi \E||\Dsim- \dobs_i||_2^\beta  \le 2 B^\beta
	\end{aligned}
	\end{equation}
		Hence, we have 
	\begin{equation}\label{key}
	\Delta_n = \sup_{\theta \in \Theta} L\left(\theta ; \empP_n\right) - \inf_{\theta \in \Theta} L\left(\theta ; \empP_n\right) \le 2B^\beta.
	\end{equation}

	Moreover, for the Energy SR (using Lemma~\ref{lemma:robustness_2}):
	\begin{equation}\label{}
	\begin{aligned}
	\left|\mathrm{D} L\left(z, \theta, \empP_{n}\right)\right| &= 2 \left| \E_{X \sim P_\theta}\E_{\Dobs\sim \empP_n}\left[  ||\Dsim-z||_2^\beta -||\Dsim-\Dobs||_2^\beta \right] \right| \\
	&\le 2  \E_{X \sim P_\theta}\E_{\Dobs\sim \empP_n}\left[  \left|||\Dsim-z||_2^\beta\right| + \left|||\Dsim-\Dobs||_2^\beta \right| \right] \\
	&\le 2  \E_{X \sim P_\theta}\E_{\Dobs\sim \empP_n}\left[ B^\beta + B^\beta \right] = 4B^\beta,
	\end{aligned}
	\end{equation}
	where the last inequality is due to $ z \in \X $ and the boundedness assumptions for $ \X $.
	Hence, by Lemma~\ref{lemma:robustness} we have, for the energy score posterior
\begin{equation}\label{key}
\sup _{\theta \in \Theta} \sup _{z \in \mathcal{X}}\left|\operatorname{PIF}\left(z, \theta, \empP_n\right)\right| \le 8 w n B^\beta e^{2wnB^\beta } \sup_{\theta\in\Theta} \pi(\theta),
\end{equation}
as claimed.
\end{proof}

\section{Changing data coordinates}\label{app:coordinates}

We give here some more details on the behavior of the SR posterior when the coordinate system used to represent the data is changed, as mentioned in Remark~\ref{remark:non-invariance}.

\paragraph{Frequentist estimator}
First, we investigate whether the minimum scoring rule estimator (for a strictly proper scoring rule) is affected by a transformation of the data. Specifically, considering a strictly proper $ S $, we are interested in whether $ \theta^\star_\Dobs = \argmin_{\theta \in \Theta} S(P_\theta^\Dobs,Q_\Dobs)  = \argmin_{\theta \in \Theta} D(P_\theta^\Dobs,Q_\Dobs) $ is the same as $ \theta^\star_Z = \argmin_{\theta \in \Theta} S(P_\theta^Z,Q_Z)  = \argmin_{\theta \in \Theta} D(P_\theta^Z,Q_Z) $, where $ Z = f(\Dobs) \implies \Dobs \sim Q_\Dobs \iff Z \sim Q_Z$ and $ \Dobs \sim P_\theta^\Dobs \iff Z \sim P_\theta^Z $. If the model is well specified, $ P^\Dobs_{\theta^\star_\Dobs} = Q_\Dobs ,P^Z_{\theta^\star_Z} = Q_Z \implies \theta^\star_\Dobs = \theta^\star_Z $. If the model is misspecified, for a generic SR the minimizer of the expected SR may change according to the parametrization. We remark how this is not a drawback of the frequentist minimum SR estimator but rather a feature, as such estimator is the parameter value corresponding to the model minimizing the chosen expected scoring rule from the data generating process \textit{in that coordinate system}, and is therefore completely reasonable for it to change when the coordinate system is modified.

Notice that a sufficient condition for $\theta_\Dobs^\star = \theta^\star_Z $ is $ S(P_\theta^\Dobs, \dobs) = a \cdot S(P_\theta^Z, z) + b$ for $ a>0, b \in \mathbb{R} $. This condition is verified when $ S $ is chosen to be the log-score, as in fact: $$ S(P_\theta^Z, f(\dobs)) = - \ln p_Z(f(\dobs) |\theta ) =  S(P_\theta^Z, \dobs) + \ln |J_f(\dobs)|,$$ where we assumed $ f $ to be a one-to-one function and we applied the change of variable formula to the density $ p_Z $.

\paragraph{Generalized Bayesian posterior}
For a single observation, let $ \pis^{\Dobs} $ denote the SR posterior conditioned on values of $ \Dobs $, while $ \pis^{Z} $ denote instead the posterior conditioned on values of $ Z=f(\Dobs) $ for some one-to-one function $ f $; in general, $  \pis^{\Dobs}(\theta|\dobs)\neq  \pis^{Z}(\theta|f(\dobs)) $. By denoting as $ w_Z $ (respectively $ w_\Dobs $) and $ P_\theta^Z $ (respectively $ P_\theta^\Dobs $) the weight and model distributions appearing in $ \pis^{Z} $ (resp. $ \pis^{\Dobs} $), the equality would in fact require $ w_Z S(P_\theta^Z, f(\dobs)) = w_\Dobs S(P_\theta^\Dobs,  \dobs) + C \ \forall\ \theta, \dobs $ for some choice of $ w_Z, w_\Dobs $ and for all transformations $ f $, where $ C $ is a constant in $ \theta $. Notice that this is satisfied for the standard Bayesian posterior (i.e., with the log-score) with $ w_Z =  w_\Dobs  =1$. Instead, for other scoring rules the above condition cannot be satisfied in general for any choice of $ w_Z, w_\Dobs $. For instance, consider the kernel SR:
\begin{equation}\label{}
S(P_\theta^Z, f(\dobs)) = \E[k(Z, \tilde{Z})] - \E [k(Z, f(\dobs))] = \E[k(f(\Dobs), f(\tilde{\Dobs}))] - \E [k(f(\Dobs), f(\dobs))];
\end{equation}
for general kernels and functions $ f $, the above is different from $ S(P_\theta^\Dobs, \dobs) = \E[k(\Dobs, \tilde{\Dobs})] - \E [k(\Dobs, f(x))] $ up to a constant, unless the kernel is redefined as well. Therefore, the posterior shape depends on the chosen data coordinates. Considering the expression for the kernel SR, it is clear that is a consequence of the fact that the likelihood principle is not satisfied (as the kernel SR does not only depend on the likelihood value at the observation). Similar argument holds for the energy score posterior as well.

We also remark that this is also the case for BSL \citep{price2018bayesian}, as in that case the model is assumed to be multivariate normal, and changing the data coordinates impacts their normality (in fact it is common practice in BSL to look for transformations of data which yield distribution as close as possible to a normal one).

The theoretical semiBSL posterior \citep{an2020robust}, instead, is invariant with respect to one-to-one transformation applied independently to each data coordinate, which do not affect the copula structure. Notice however that different data coordinate systems may yield better empirical estimates of the marginal KDEs from model simulations.

\section{Checking convergence of MCMC with the kernelized Stein discrepancy}\label{app:KSD}

        As SG-MCMC algorithms in general exhibit an asymptotic bias, we require a convergence test which accounts for this bias in the stationary distribution. We thus utilise the method of Kernelized Stein Discrepancy (KSD) proposed in \cite{gorham2017measuring}, which is especially applicable in the case of stochastic gradient MCMC as it depends on the target distribution only through its gradient.

        Given the samples of our parameter $\{\theta_1,...,\theta_n\}$ where $\theta_i \in \mathbb{R}^d$, we denote the empirical distribution described by these samples as $\tilde{\pi}$, and our target distribution as $\pi$. We consider the  Integral Probability Metric (IPM) defined over a class of test function $\mathcal{H}$,
        $$
        d_{\mathcal{H}}(\tilde{\pi}, \pi):=\sup _{h \in \mathcal{H}}\left|\mathbb{E}_{\tilde{\pi}}[h(\theta)]-\mathbb{E}_\pi[h({\theta})]\right|
        $$
        For IPMs such as the Wasserstein distance, we obtain a desirable property that $d_{\mathcal{H}}\left(\tilde{\pi}_K, \pi\right) \rightarrow 0$ implies $\tilde{\pi}_K \Rightarrow \pi$ (weak convergence of measures). However, since $\pi$ is not available for integration, we instead utilise a class of IPMs called Stein Discrepancy, constructed such that the test functions give zero mean under $\pi$. We do this by defining a Stein operator, $\mathcal{T}$, which maps functions $g$ : $\mathbb{R}^d \rightarrow \mathbb{R}^d$ from our Stein set, the domain $\mathcal{G}$. This is chosen such that $\mathbb{E}_\pi[(\mathcal{T} g)(Z)]=0$ for all $g \in \mathcal{G}$. Then we can define the Stein discrepancy:
        $$
        \begin{aligned}
        \mathcal{S}(\tilde{\pi}, \mathcal{T}, \mathcal{G}) := d_{\mathcal{T} \mathcal{G}}(\tilde{\pi}, \pi)&=\sup _{g \in \mathcal{G}}\left|\mathbb{E}_{\tilde{\pi}}[(\mathcal{T} g)(X)]-\mathbb{E}_\pi[(\mathcal{T} g)(Z)]\right| \\&=\sup _{g \in \mathcal{G}}\left|\mathbb{E}_{\tilde{\pi}}[(\mathcal{T} g)(X)]\right|
        \end{aligned}
        $$
        Thus, such a Stein operator and Stein set must be chosen to fulfil the Stein discrepancy condition and the desired convergence property. In \cite{gorham2017measuring}, the Stein operator is proposed to be the Langevin Stein operator,
        $$
        \left(\mathcal{T}_P g\right)(x) := \langle g(x), \nabla \log p(x)\rangle+\langle\nabla, g(x)\rangle
        $$
        and the corresponding Stein set, which is defined using a Reproducing Kernel Hilbert space of function $\mathcal{K}_k$. We denote $\|\cdot\|_{\mathcal{K}_k}$ to be the induced norm from the inner product in $\mathcal{K}_k$, and $k: \mathbb{R}^d \times \mathbb{R}^d \rightarrow \mathbb{R}$ be the reproducing kernel of $\mathcal{K}_k$. This is the kernelized Stein set:
        $$
        \mathcal{G}_{k,\|\cdot\|} := \left\{g=\left(g_1, \ldots, g_d\right) \mid\|v\|^* \leq 1 \text { for } v_j := \left\|g_j\right\|_{\mathcal{K}_k}\right\}
        $$
        where $g=\left(g_1, \ldots, g_d\right)$ is a vector-valued function. This combination of the Langevin Stein operator and the kernelized Stein set is known as the kernelized Stein Discrepancy (KSD) $S\left(\mu, \mathcal{T}_P, \mathcal{G}_{k,\|\cdot\|}\right)$, for a probability measure $\mu$. In \cite{gorham2017measuring}, the KSD was proven to have a closed form solution for any $\|\cdot\|$, which of particular interest to us is when $S\left(\tilde{\pi}, \mathcal{T}_P, \mathcal{G}_{k,\|\cdot\|}\right)$, 
        $$
        S\left(\tilde{\pi}, \mathcal{T}_P, \mathcal{G}_{k,\|\cdot\|}\right):=\sum_{j=1}^d \sqrt{\sum_{i, i^{\prime}=1}^n \frac{k_j^0\left(\theta_i, \theta_{i^{\prime}}\right)}{n^2}}
        $$
        where the Stein kernel for $j \in\{1, \ldots, d\}$ is given by
        $$
        \begin{aligned}
        k_j^0\left({\theta}, {\theta}^{\prime}\right) &=\left(\nabla_{{\theta}^{(j)}} U({\theta}) \nabla_{{\theta}^{(j)}} U\left({\theta}^{\prime}\right)\right) k\left({\theta}, {\theta}^{\prime}\right)+\nabla_{{\theta}^{(j)}} U({\theta}) \nabla_{{\theta}^{\prime(j)}} k\left({\theta}, {\theta}^{\prime}\right) \\
        &+\nabla_{{\theta}^{\prime(j)}} U\left({\theta}^{\prime}\right) \nabla_{{\theta}^{(j)}} k\left({\theta}, {\theta}^{\prime}\right)+\nabla_{{\theta}^{(j)}} \nabla_{{\theta}^{(j)}} k\left({\theta}, {\theta}^{\prime}\right),
        \end{aligned}
        $$
        where $U(\theta)$ is such that $\pi(\theta)\propto e^{-U(\theta)}$.
        Note that \cite{gorham2017measuring} recommended the use of the inverse multi quadric kernel, $k\left({\theta}, {\theta}^{\prime}\right)=\left(c^2+\left\|{\theta}-{\theta}^{\prime}\right\|_2^2\right)^\beta$ which gives desired convergence properties when $c>0$ and  $\beta \in(-1,0)$. 

        In our specific case of the SR posterior, $U(\theta) = w\cdot \sum_{i=1}^n S(P_\theta, y_i)$. As for the energy and kernel scores we cannot exactly evaluate $\nabla_\theta U(\theta)$, we reaplce it with an unbiased estimate when computing the KSD. 
\section{More details on related techniques}\label{app:related_techniques}

\subsection{Energy Distance}\label{app:energy}

The squared energy distance is a metric between probability distributions \citep{rizzo2016energy}, and is defined by:
\begin{equation}
\DEbeta(P,Q) = 2\cdot \E \left[\| \Dsim - \Dobs \|_2^\beta \right]- \E\left[\|\Dsim- \Dsim'\|_2^\beta \right]  - \E \left[\|\Dobs- \Dobs'\|_2^\beta \right],
\end{equation}
for $ \Dsim\independent \Dsim' \sim P $ and $ \Dobs \independent \Dobs' \sim Q$.

The probabilistic forecasting literature \citep{gneiting2007strictly} use a different convention of the energy score and distance, which amounts to multiplying our definitions by $ 1/2 $. We follow here the convention used in the statistical inference literature \citep{rizzo2016energy, cherief2020mmd, nguyen2020approximate}.

\subsection{Maximum Mean Discrepancy (MMD)}\label{app:MMD}
We follow here Section 2.2 in \cite{gretton2012kernel}; all proofs of our statements can be found there. Let $ k(\cdot, \cdot): \X \times \X \to \R $ be a positive-definite and symmetric kernel; notice that this implies $ k(x,x) \ge 0$. Under these conditions, there exists a unique Reproducing kernel Hilbert space (RKHS) $ \mathcal{H}_k $ of real functions on $ \X $ associated to $ k $.

Now, let's define the Maximum Mean Discrepancy (MMD).

\begin{definition}
	Let $ \mathcal{F} $ be a class of functions $ f : \X\to\R $; we define the MMD relative to $\mathcal{F}  $ as:
	\begin{equation}\label{}
	\MMD_\mathcal{F} (P,Q) = \sup_{f\in \mathcal{F}} \left[\E_{X \sim P} f(X) - \E_{Y \sim Q} f(Y) \right].
	\end{equation}
\end{definition}

We will show here how choosing $ \mathcal{F} $ to be the unit ball in an RKHS $ \mathcal{H}_k $ turns out to be computationally convenient, as it allows to avoid computing the supremum explicitly. First, let us define the mean embedding of the distribution $ P $ in $ \mathcal{H}_k $:
\begin{lemma}[Lemma 3 in \cite{gretton2012kernel}]\label{lemma:MMD_1}
	If $ k(\cdot, \cdot) $ is measurable and $ \E_{X \sim P}\sqrt{k(X,X)}<\infty $, then the mean embedding of the distribution $ P $ in $ \mathcal{H}_k $ is:$$ \mu_P = \E_{X \sim P} \left[k(X, \cdot) \right]\in \mathcal{H}_k .$$
\end{lemma}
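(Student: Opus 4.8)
The plan is to construct $\mu_P$ via the Riesz representation theorem, exploiting the reproducing property of $\mathcal{H}_k$. First I would record the consequence of the reproducing property that links the hypothesis to integrability in the RKHS: for every $x \in \X$ we have $\|k(x,\cdot)\|_{\mathcal{H}_k}^2 = \langle k(x,\cdot), k(x,\cdot)\rangle_{\mathcal{H}_k} = k(x,x)$, hence $\|k(x,\cdot)\|_{\mathcal{H}_k} = \sqrt{k(x,x)}$. Thus the assumption $\E_{X\sim P}\sqrt{k(X,X)}<\infty$ is exactly a statement about the expected norm of the feature map $x \mapsto k(x,\cdot)$ in $\mathcal{H}_k$.

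Next I would define the functional $T_P : \mathcal{H}_k \to \R$ by $T_P f = \E_{X \sim P}[f(X)]$ and show it is a bounded linear functional. Linearity is immediate; measurability of $k$ guarantees $X \mapsto f(X) = \langle f, k(X,\cdot)\rangle_{\mathcal{H}_k}$ is measurable so the expectation is well defined. For boundedness I would apply the reproducing property followed by Cauchy--Schwarz:
\begin{equation}
|T_P f| \le \E_{X\sim P}|f(X)| = \E_{X\sim P}\left|\langle f, k(X,\cdot)\rangle_{\mathcal{H}_k}\right| \le \|f\|_{\mathcal{H}_k}\, \E_{X\sim P}\|k(X,\cdot)\|_{\mathcal{H}_k} = \|f\|_{\mathcal{H}_k}\,\E_{X\sim P}\sqrt{k(X,X)},
\end{equation}
so that $\|T_P\| \le \E_{X\sim P}\sqrt{k(X,X)} < \infty$ by hypothesis. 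By the Riesz representation theorem there is then a unique $\mu_P \in \mathcal{H}_k$ with $T_P f = \langle f, \mu_P\rangle_{\mathcal{H}_k}$ for all $f \in \mathcal{H}_k$. To identify it with the claimed expectation I would evaluate $\mu_P$ pointwise: using the reproducing property and symmetry of $k$, for any $y \in \X$,
\begin{equation}
\mu_P(y) = \langle \mu_P, k(y,\cdot)\rangle_{\mathcal{H}_k} = T_P\bigl(k(y,\cdot)\bigr) = \E_{X\sim P}[k(X,y)],
\end{equation}
which is exactly the pointwise meaning of $\E_{X\sim P}[k(X,\cdot)]$.

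The main subtlety to handle carefully is the interchange of expectation and the Hilbert-space inner product, i.e.\ justifying that the Riesz representer genuinely coincides with the vector-valued (Bochner) integral $\E_{X\sim P}[k(X,\cdot)]$, and not merely that the two agree as functions. The clean route is precisely the one above: the hypothesis gives $\E_{X\sim P}\|k(X,\cdot)\|_{\mathcal{H}_k} = \E_{X\sim P}\sqrt{k(X,X)} < \infty$, which is exactly Bochner integrability of $X \mapsto k(X,\cdot)$ in $\mathcal{H}_k$. Consequently the Bochner integral exists as an element of $\mathcal{H}_k$ and commutes with the continuous linear functional $f \mapsto \langle f, \cdot\rangle_{\mathcal{H}_k}$, yielding $\langle f, \E_{X\sim P}[k(X,\cdot)]\rangle_{\mathcal{H}_k} = \E_{X\sim P}\langle f, k(X,\cdot)\rangle_{\mathcal{H}_k} = T_P f$ for every $f$. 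Uniqueness of the Riesz representer then forces $\mu_P = \E_{X\sim P}[k(X,\cdot)]$, completing the argument.
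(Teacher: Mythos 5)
Your proof is correct and takes essentially the same route as the source the paper defers to: the paper states this result as Lemma 3 of \cite{gretton2012kernel} without reproducing the argument, and the proof given there is precisely your Riesz-representation argument (boundedness of the evaluation functional $f \mapsto \E_{X \sim P} f(X)$ via the reproducing property and Cauchy--Schwarz, followed by pointwise identification of the representer). Your extra care in identifying the Riesz representer with the Bochner integral $\E_{X\sim P}[k(X,\cdot)]$ is a sound refinement of a point the original reference treats implicitly.
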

Using this fact, the following Lemma shows that the MMD relative to $ \mathcal{H}_k $ can be expressed as the distance in $ \mathcal{H}_k  $ between the mean embeddings:
\begin{lemma}[Lemma 4 in \cite{gretton2012kernel}]\label{lemma:MMD_2}
	Assume the conditions in Lemma~\ref{lemma:MMD_1} are satisfied, and let $ \mathcal{F} $ be the unit ball in $ \mathcal{H}_k $; then:
	\begin{equation}\label{}
	\MMD_{\mathcal{F}} ^2(P,Q)= ||\mu_P - \mu_Q||_\mathcal{H}^2.
	\end{equation}
\end{lemma}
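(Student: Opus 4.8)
The plan is to reduce the supremum defining the MMD to a single inner product in $\mathcal{H}_k$ and then evaluate it via Cauchy--Schwarz. First I would invoke the reproducing property of the kernel, namely $f(x) = \langle f, k(x,\cdot)\rangle_{\mathcal{H}}$ for every $f\in\mathcal{H}_k$ and every $x\in\X$, to rewrite the two expectations appearing in the definition of $\MMD_{\mathcal{F}}$. Concretely, for any $f$ in the unit ball I would write $\E_{X\sim P} f(X) = \E_{X\sim P}\langle f, k(X,\cdot)\rangle_{\mathcal{H}}$, and analogously for $Q$.

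The key step is then to pull the expectation inside the inner product, giving $\E_{X\sim P} f(X) = \langle f, \mu_P\rangle_{\mathcal{H}}$, where $\mu_P = \E_{X\sim P}[k(X,\cdot)]$ is the mean embedding furnished by Lemma~\ref{lemma:MMD_1}. This interchange is the delicate point: it must be justified by the fact that, under the hypothesis $\E_{X\sim P}\sqrt{k(X,X)}<\infty$, the map $x\mapsto k(x,\cdot)$ is Bochner-integrable in $\mathcal{H}_k$ (since $\|k(x,\cdot)\|_{\mathcal{H}} = \sqrt{k(x,x)}$), so that $\mu_P$ is a well-defined element of $\mathcal{H}_k$ and the bounded linear functional $g\mapsto\langle f, g\rangle_{\mathcal{H}}$ commutes with the integral. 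Applying this to both $P$ and $Q$ yields, for every $f$ in the unit ball,
\begin{equation}
\E_{X\sim P} f(X) - \E_{Y\sim Q} f(Y) = \langle f, \mu_P - \mu_Q\rangle_{\mathcal{H}}.
\end{equation}

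Finally I would take the supremum over $\{f : \|f\|_{\mathcal{H}}\le 1\}$. By Cauchy--Schwarz, $\langle f, \mu_P-\mu_Q\rangle_{\mathcal{H}}\le \|\mu_P-\mu_Q\|_{\mathcal{H}}$, with equality attained at the choice $f = (\mu_P-\mu_Q)/\|\mu_P-\mu_Q\|_{\mathcal{H}}$ whenever $\mu_P\neq\mu_Q$ (the case $\mu_P=\mu_Q$ being trivial, as both sides then vanish). Hence $\MMD_{\mathcal{F}}(P,Q) = \|\mu_P-\mu_Q\|_{\mathcal{H}}$, and squaring gives the claim. The only genuine obstacle is the measure-theoretic justification of the expectation--inner-product interchange, which rests on the integrability condition of Lemma~\ref{lemma:MMD_1}; the remainder is a direct application of the reproducing property and Cauchy--Schwarz.
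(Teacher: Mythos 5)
Your proof is correct and follows exactly the argument of Lemma 4 in \cite{gretton2012kernel}, which is the proof the paper defers to: reproducing property, Bochner integrability from $\E_{X\sim P}\sqrt{k(X,X)}<\infty$ to justify $\E_{X \sim P} f(X)=\langle f,\mu_P\rangle_{\mathcal{H}}$, and the dual-norm/Cauchy--Schwarz characterization of the supremum over the unit ball. Nothing is missing, and your explicit attention to the expectation--inner-product interchange is exactly the point the integrability hypothesis of Lemma~\ref{lemma:MMD_1} exists to handle.
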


In general, the MMD is a \textit{pseudo-metric }for probability distributions (i.e., it is symmetric, satisfies the triangle inequality and $ \MMD_{\mathcal{F}}(P,P)=0 $, \citealt{briol2019statistical}).
For the probability measures on a compact metric space $ \X $, the next Lemma states the conditions under which the MMD is a \textit{metric}, which additionally ensures that $ \MMD_{\mathcal{F}} (P,Q) = 0 \implies P=Q $. Specifically, this holds when the kernel is universal, which requires that $ k(\cdot, \cdot) $ is continuous, and $ \mathcal{H}_k $ being dense in $ C(\X) $ with respect to the $ L_\infty $ norm (these conditions are satisfied by the Gaussian and Laplace kernel).
\begin{lemma}[Theorem 5 in \cite{gretton2012kernel}]
	Let $ \mathcal{F} $ be the unit ball in $ \mathcal{H}_k$, where $ \mathcal{H}_k$ is defined on a compact metric space $ \X $ and has associated continuous kernel $ k(\cdot, \cdot) $. Then:
	\begin{equation}\label{}
	\MMD_{\mathcal{F}} (P,Q) = 0 \iff P=Q.
	\end{equation}
\end{lemma}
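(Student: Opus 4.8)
The plan is to prove the two implications separately, with the forward direction immediate and the reverse direction carrying all the content. For $P = Q \implies \MMD_{\mathcal{F}}(P,Q) = 0$, I would simply observe that if $P = Q$ then $\E_{X\sim P} f(X) = \E_{Y \sim Q} f(Y)$ for every $f$, so the supremum defining $\MMD_{\mathcal{F}}$ is identically zero; no assumption on $k$ or on $\X$ is needed here.

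The reverse implication $\MMD_{\mathcal{F}}(P,Q) = 0 \implies P = Q$ is where compactness of $\X$ and universality of $k$ enter. First I would apply Lemma~\ref{lemma:MMD_2} to rewrite $\MMD_{\mathcal{F}}^2(P,Q) = \|\mu_P - \mu_Q\|_{\mathcal{H}}^2$, so that the hypothesis is equivalent to the mean embeddings coinciding, $\mu_P = \mu_Q$ in $\mathcal{H}_k$; these embeddings are well-defined by Lemma~\ref{lemma:MMD_1}, since a continuous kernel on compact $\X$ is bounded and hence $\E_{X\sim P}\sqrt{k(X,X)} < \infty$. The reproducing property then gives $\int f\,dP = \langle \mu_P, f\rangle_{\mathcal{H}} = \langle \mu_Q, f\rangle_{\mathcal{H}} = \int f\,dQ$ for every $f \in \mathcal{H}_k$.

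The crux is to upgrade this agreement on the RKHS to agreement against all of $C(\X)$, and then to deduce equality of the measures. Using universality, i.e.\ density of $\mathcal{H}_k$ in $(C(\X), \|\cdot\|_\infty)$, I would fix an arbitrary $g \in C(\X)$ and choose $f \in \mathcal{H}_k$ with $\|g - f\|_\infty < \epsilon$; a triangle-inequality estimate, bounding $|\int (g-f)\,dP| \le \epsilon$ and likewise for $Q$ while $\int f\,dP = \int f\,dQ$, yields $|\int g\,dP - \int g\,dQ| \le 2\epsilon$, and letting $\epsilon \to 0$ shows $\int g\,dP = \int g\,dQ$ for every continuous $g$. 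Since $P$ and $Q$ are regular Borel probability measures on a compact metric space, the Riesz--Markov representation theorem makes the correspondence between such measures and bounded linear functionals on $C(\X)$ injective, so agreement of all continuous integrals forces $P = Q$.

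I expect the main obstacle to be precisely this final measure-theoretic step: justifying that equality of integrals against the dense class $\mathcal{H}_k$, rather than against all bounded measurable functions, still pins down the measure. This is exactly what universality buys, and compactness of $\X$ is what makes $C(\X)$ rich enough, via Riesz--Markov (equivalently, via Urysohn approximation of indicators of closed sets), to separate distinct Borel measures. Without these topological hypotheses the implication can fail, which is why the statement is restricted to continuous universal kernels on a compact space.
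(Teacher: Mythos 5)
Your proof is correct. The paper does not actually reproduce a proof of this lemma --- it defers to Theorem 5 of \cite{gretton2012kernel} (``all proofs of our statements can be found there'') --- and your argument, which reduces $\MMD_{\mathcal{F}}(P,Q)=0$ to $\mu_P=\mu_Q$ via Lemma~\ref{lemma:MMD_2}, uses universality to pass from agreement of integrals on $\mathcal{H}_k$ to agreement on all of $C(\X)$ by an $\epsilon$-approximation and triangle inequality, and then invokes the Riesz--Markov correspondence on the compact metric space $\X$ to conclude $P=Q$, is precisely the standard proof given in that cited reference.
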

This result can be generalized to more general spaces $ \X $, by considering the notion of characteristics kernel, for which the mean map is injective; it can be shown that the Laplace and Gaussian kernels are characteristics \citep{gretton2012kernel}, so that MMD for those two kernels is a metric for distributions on $ \R^d $.

Additionally, the form of MMD for a unit-ball in an RKHS allows easy estimation, as shown next:
\begin{lemma}[Lemma 6 in \cite{gretton2012kernel}]
	Assume that the form for MMD given in Lemma~\ref{lemma:MMD_2} holds; say $ X\independent X'\sim P $, $ Y\independent Y' \sim Q $, and let $ \mathcal{F} $ be the unit ball in $ \mathcal{H}_k$. Then, you can write:
	\begin{equation}\label{}
	\MMD_{\mathcal{F}}^2(P,Q)  = \E [k(X, X')] + \E [k(Y, Y')]- 2 \E [k(X,Y)].
	\end{equation}
\end{lemma}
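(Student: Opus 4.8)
The plan is to start from the characterization of the squared MMD as the squared distance between mean embeddings provided by Lemma~\ref{lemma:MMD_2}, namely $\MMD_{\mathcal{F}}^2(P,Q) = \|\mu_P - \mu_Q\|_\mathcal{H}^2$, and to expand this norm using the inner product structure of $\mathcal{H}_k$. Writing $\|\mu_P - \mu_Q\|_\mathcal{H}^2 = \langle \mu_P - \mu_Q,\, \mu_P - \mu_Q\rangle_\mathcal{H}$ and applying bilinearity and symmetry of the inner product yields $\langle \mu_P, \mu_P\rangle_\mathcal{H} - 2\langle \mu_P, \mu_Q\rangle_\mathcal{H} + \langle \mu_Q, \mu_Q\rangle_\mathcal{H}$. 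The whole proof then reduces to identifying each of these three inner products with the corresponding expectation of the kernel.

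Next I would use the explicit form of the mean embedding from Lemma~\ref{lemma:MMD_1}, $\mu_P = \E_{X\sim P}[k(X,\cdot)]$, interpreted as a Bochner integral in $\mathcal{H}_k$. The key identity is that, for $X \sim P$ and $Y \sim Q$ independent,
$$\langle \mu_P, \mu_Q\rangle_\mathcal{H} = \left\langle \E_{X}[k(X,\cdot)],\, \E_{Y}[k(Y,\cdot)]\right\rangle_\mathcal{H} = \E_X\E_Y\, \langle k(X,\cdot), k(Y,\cdot)\rangle_\mathcal{H} = \E_X\E_Y\, k(X,Y),$$
where the final equality is the reproducing property $\langle k(x,\cdot), k(y,\cdot)\rangle_\mathcal{H} = k(x,y)$. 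Applying this to each of the three terms — taking the pairs $(P,P)$, $(P,Q)$, and $(Q,Q)$, and using the independence of the copies $X,X'$ and of $Y,Y'$ — gives $\E[k(X,X')]$, $\E[k(X,Y)]$, and $\E[k(Y,Y')]$ respectively, and collecting these produces the claimed expression.

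The main obstacle is rigorously justifying the interchange of the inner product with the two expectations in the display above. This is precisely where the integrability condition $\E_{X\sim P}\sqrt{k(X,X)} < \infty$ from Lemma~\ref{lemma:MMD_1} (and its analogue for $Q$) is needed: since $\|k(X,\cdot)\|_\mathcal{H} = \sqrt{k(X,X)}$, it guarantees $\E_{X\sim P}\|k(X,\cdot)\|_\mathcal{H} < \infty$, so that $\mu_P$ exists as a genuine Bochner integral and the continuous linear functional $\langle\,\cdot\,, \mu_Q\rangle_\mathcal{H}$ may be pulled inside it; a Fubini-type argument, again licensed by the same integrability bounds applied to the product measure $P\otimes Q$, then permits exchanging the order of the two expectations. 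Once these measure-theoretic interchanges are secured, the remainder is the routine bilinear expansion and substitution sketched above, so I would expect the bulk of the care to lie in stating the integrability hypotheses cleanly rather than in any algebraic step.
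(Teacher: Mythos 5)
Your proof is correct and takes essentially the same route as the proof the paper defers to (Lemma 6 of \cite{gretton2012kernel}): expand $\MMD_{\mathcal{F}}^2(P,Q)=\|\mu_P-\mu_Q\|_{\mathcal{H}}^2$ by bilinearity of the inner product and identify each of the three resulting terms via the reproducing property $\langle k(x,\cdot),k(y,\cdot)\rangle_{\mathcal{H}}=k(x,y)$. Your additional care in justifying the interchange of inner products and expectations through the Bochner-integral interpretation of $\mu_P$ (using $\E\sqrt{k(X,X)}<\infty$ from Lemma~\ref{lemma:MMD_1}) is a sound elaboration of a step the original reference treats as routine, and does not constitute a different argument.
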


\subsubsection{Equivalence between MMD-Bayes posterior and $ \pik $}\label{app:MMDBayes_equivalence}

\cite{cherief2020mmd} considered the following posterior, termed MMD-Bayes:

\begin{equation}\label{}
\piMMD(\theta|\ddobsn) \propto \pi(\theta)\exp \left\{-\beta \cdot {D}_{k}\left(P_{\theta}, \hat{P}_{n}\right)\right\}
\end{equation}
where $ \beta>0 $ is a temperature parameter and $ {D}_{k}\left(P_{\theta}, \hat{P}_{n}\right) $ denotes the squared MMD between the empirical measure of the observations $ \empP_n = \frac{1}{n} \sum_{i=1}^{n} \delta_{\dobs_i} $  and the model distribution $ P_\theta $.

From the properties of MMD (see Appendix~\ref{app:MMD}), notice that:
\begin{equation}\label{}
\begin{aligned}
{D}_{k}\left(P_{\theta}, \hat{P}_{n}\right) &= \E_{\Dsim, \Dsim' \sim P_\theta} k(\Dsim, \Dsim') + \frac{1}{n^2} \sum_{i,j=1}^{n} k(\dobs_i, \dobs_j) -\frac{2}{n} \sum_{i=1}^n\E_{\Dsim \sim P_\theta} k(\Dsim, \dobs_i)\\
&= \frac{1}{n} \left(n\cdot  \E_{\Dsim, \Dsim' \sim P_\theta}k(\Dsim,\Dsim') - 2 \sum_{i=1}^n   \E_{\Dsim \sim P_\theta} k(\Dsim, \dobs_i)\right) + \frac{1}{n^2} \sum_{i,j=1}^{n} k(\dobs_i, \dobs_j)\\
&= \frac{1}{n} \left(\sum_{i=1}^n  S_k(P_\theta, \dobs_i)\right) + \frac{1}{n^2} \sum_{i,j=1}^{n} k(\dobs_i, \dobs_j),
\end{aligned}
\end{equation}
where we used the expression of the SR scoring rule $ S_k $, and where the second term is independent on $ \theta $. Therefore, the MMD-Bayes posterior is equivalent to the SR posterior with kernel scoring rule $ S_k $, by identifying $ w = \beta/n $.

\subsection{The Dawid--Sebastiani score}\label{app:DS_score}

As mentioned in Sec.~\ref{sec:BSL_multi_gk}, the BSL posterior can be seen as a scoring rule posterior with $ w=1 $ considering the Dawid--Sebastiani (DS) score, which is defined as:
	\begin{equation}\label{Eq:DS_score}
	S_{\operatorname{DS}}(P,\dobs) = \ln |\Sigma_P| + (\dobs- \mu_P)^T\Sigma_P^{-1} (\dobs- \mu_P),
	\end{equation}
	where $ \mu_P $ and $ \Sigma_P $ are the mean vector and covariance matrix of $ P $. The DS score is the negative log-likelihood of a multivariate normal distribution with mean $ \mu_P $ and covariance matrix $ \Sigma_P $, up to some constants. Therefore, it is equivalent to the log score when $ P $ is a multivariate normal distribution. For a set of distributions $ \mathcal{P}(\X) $ with well-defined second moments, this SR is proper but not strictly so: several distributions of that class may yield the same score, as long as the two first moments match \citep{gneiting2007strictly}. It is strictly proper if distributions in $\mathcal{P}(\X)  $ are determined by their first two moments, as it is the case for the normal distribution.

\subsection{Semi-Parametric Synthetic Likelihood}\label{app:semiBSL}

We review here the semiBSL approach \citep{an2020robust}.%

\paragraph{Copula theory}
First, recall that a copula is a multivariate Cumulative Density Function (CDF) such that the marginal distribution for each variable is uniform on the interval $ [0,1] $. Consider now a multivariate random variable $ X = (X^1, \ldots, X^d) $, for which the marginal CDFs are denoted by $ F_j(x) = \mathbb{P}(X^j<x) $; then, the multivariate random variable built as:
\begin{equation}\label{}
(U^1, U^2, \ldots, U^d) = (F_1(X^1), F_2(X^2), \ldots, F_d(X^d))
\end{equation}
has uniform marginals on $ [0,1] $.

Sklar's theorem exploits copulas to decompose the density $ h $ of $ X $\footnote{Provided that the density exists in the first place; a more general version of Sklar's theorem is concerned with general random variables, but we restrict here to the case where densities are available.}; specifically, it states that the following decomposition is valid:
\begin{equation}\label{}
h(x^1, \ldots, x^d) = c(F_1(x^1), \ldots, F_d(x^d)) f_1(x^1) \cdots f_d(x^d),
\end{equation}
where $ f_j $ is the marginal density of the $ j$-th coordinate, and $ c $ is the density of the copula.

We now review definition and properties of the Gaussian copula,
which is defined by a correlation matrix $ {R} \in [-1,1]^{d\times d}$, and has cumulative density function:
\begin{equation}\label{}
C_{R}(u) = \Phi_{R}(\Phi^{-1}(u^1), \ldots, \Phi^{-1}(u^d)),
\end{equation}
where $ \Phi^{-1} $ is the inverse cdf (quantile function) of a standard normal, and $ \Phi_{R} $ is the cdf of a multivariate normal with covariance matrix $ {R} $ and 0 mean. If you define as $ U $ the random variable which is distributed according to $ C_{R} $, it can be easily seen that $ {R} $ is the covariance matrix of the multivariate normal random variable $ {Z} = \Phi^{-1}({U}) $, where $ \Phi^{-1} $ is applied element-wise. In fact:
\begin{equation}\label{}
P({Z} \le \eta) = P({U} \le \Phi(\eta)) = 	C_{R}(\Phi(\eta)) = \Phi_{R}(\eta),
\end{equation}
where the inequalities are intended component-wise.

By defining as $ \eta $ a \textit{d}-vector with components $ \eta^k =\Phi^{-1}(u^k) $, the Gaussian copula density is:
\begin{equation}\label{}
c_{{R}}(u)=\frac{1}{\sqrt{|{R}|}} \exp \left\{-\frac{1}{2} \eta^{\top}\left({R}^{-1}-\mathbf{I}_{d}\right) \eta\right\},
\end{equation}
where $ \mathbf{I}_{d} $ is a \textit{d}-dimensional identity matrix, and $ |\cdot| $ denotes the determinant.

\paragraph{Semiparametric Bayesian Synthetic Likelihood (semiBSL)}
The semiBSL approach assumes that the likelihood for the model has a Gaussian copula; therefore, the likelihood for a single observation $ \dobs $ can be written as:
\begin{equation}
p_{\text{semiBSL}}\left(\dobs | \theta\right)=c_{{R}_\theta}(F_{\theta, 1}(\dobs^{1}), \ldots , F_{\theta, d}(\dobs^{d}))  \prod_{k=1}^{d} {f}_{\theta, k}\left(\dobs^{k}\right),
\end{equation}
where $ \dobs^k $ is the \textit{k}-th component of $ \dobs $, $ {f}_{\theta, k} $ is the marginal density of the \textit{k}-th component and $ F_{\theta, k} $ is the CDF of the \textit{k}-th component.

In order to obtain an estimate for it, we exploit simulations from $  P_\theta$ to estimate $ {R}_\theta $, $ f_{\theta,k} $ and $ F_{\theta, k} $; this leads to:
\begin{equation}
\begin{aligned}
\hat p_{\text{semiBSL}}\left(\dobs | \theta\right)&= c_{\hat{ {R}}_\theta}(\hat{F}_{\theta, 1}(\dobs^{1}), \ldots , \hat{F}_{\theta, d}(\dobs^{d}))  \prod_{k=1}^{d} \hat{f}_{\theta, k}\left(\dobs^{k}\right) \\
&= \frac{1}{\sqrt{|\hat{{R}}_\theta|}} \exp \left\{-\frac{1}{2} \hat{\eta}_{\dobs}^{\top}\left(\hat{{R}}_\theta^{-1}-\mathbf{I}_{d}\right) \hat{\eta}_{\dobs}\right\} \prod_{k=1}^{d} \hat{f}_{\theta, k}\left(\dobs^{k}\right),
\end{aligned}
\end{equation}
where $ \hat{f}_{\theta, k} $ and $ \hat F_{\theta, k} $ are estimates for $ {f}_{\theta, k} $ and $ F_{\theta, k} $, $ \hat{\eta}_{\dobs} = (\hat{\eta}_{\dobs}^1, \ldots, \hat{\eta}_{\dobs}^d) $, $ \hat{\eta}_{\dobs}^k = \Phi^{-1}(\hat u^k) $, $ \hat u^k = \hat F_{\theta, k}(\dobs^k) $.
Moreover, $ \hat{{R}}_\theta $ is an estimate of the correlation matrix.

We discuss now how the different quantities are estimated. First, a Kernel Density Estimate (KDE) is used for the marginals densities and cumulative density functions. Specifically, given samples $ \dsim_1 ,\ldots, \dsim_m \sim P_\theta$, a KDE estimate for the $ k $-th marginal density is:
\begin{equation}\label{}
\hat f_{\theta,k}(\dobs^k) = \frac{1}{m} \sum_{j=1}^m K_h(\dobs^k-\dsim_j^k),
\end{equation}
where $ K_h $ is a normalized kernel which is chosen to be Gaussian in the original implementation \citep{an2020robust}. The CDF estimates are obtained by integrating the KDE density.

Next, for estimating the correlation matrix, \cite{an2020robust} proposed to use a robust procedure based on the ranks (grc, Gaussian rank correlation, \citealp{boudt2012gaussian}); specifically, given $ m $ simulations $ \dsim_1 ,\ldots, \dsim_m \sim P_\theta$, the estimate for the $ (k,l) $-th entry of $ {R}_\theta $ is given by:
\begin{equation}\label{}
\left[\hat{{R}}_\theta^{\mathrm{grc}}\right]_{k, l}=\frac{\sum_{j=1}^{m} \Phi^{-1}\left(\frac{r\left(\dsim_{j}^{k}\right)}{m+1}\right) \Phi^{-1}\left(\frac{r\left(\dsim_{j}^{l}\right)}{m+1}\right)}{\sum_{j=1}^{m} \Phi^{-1}\left(\frac{j}{m+1}\right)^{2}},
\end{equation}
where $ r(\cdot) : \R \to \mathcal{A} $, where $ \mathcal{A} = \{1, \ldots, m\} $ is the rank function.

\paragraph{Copula scoring rule}
Finally, we write down the explicit expression of the copula scoring rule $ S_{Gc} $, associated to the Gaussian copula. We show that this is a proper, but not strictly so, scoring rule for copula distributions. Specifically, let $ C $ be a distribution for a copula random variable, and let $ u \in [0,1]^d$. We define:
\begin{equation}\label{}
S_{Gc}(C, u) = \frac{1}{2}\log |{R}_C| + \frac{1}{2} \left(\Phi^{-1} (u)\right)^T ({R}_C^{-1} - \mathbf{I}_d) \Phi^{-1} (u),
\end{equation}
where $ \Phi^{-1} $ is applied element-wise to $ u $, and $ {R}_C $ is the correlation matrix associated to $ C $ in the following way: define the copula random variable $ V \sim C $ and its transformation $ \Phi^{-1}(V) $; then, $ \Phi^{-1}(V) $ will have a multivariate normal distribution with mean 0 and covariance matrix $ {R}_C $.

Similarly to the Dawid--Sebastiani score (see Appendix~\ref{app:DS_score}), this scoring rule is proper but not strictly so as it only depends on the first 2 moments of the distribution of the random variable $ \Phi^{-1}(V) $ (the first one being equal to 0). To show this, assume the copula random variable $ U $ has an exact distribution $ Q $ and consider the expected scoring rule:
\begin{equation}\label{}
\begin{aligned}
S_{Gc} (C, Q) & = \E_{U \sim Q} S_{Gc} (C, U) = \frac{1}{2} \log |{R}_C| + E_{U \sim Q} \left[\left(\Phi^{-1} (U)\right)^T ({R}_C^{-1} - \mathbf{I}_d) \Phi^{-1} (U) \right];
\end{aligned}
\end{equation}
now, notice that $ \Phi^{-1}(U) $ is a multivariate normal distribution whose marginals are standard normals. Therefore, let us denote as $ {R}_Q $ the covariance matrix of $ \Phi^{-1}(U) $, which is a correlation matrix. From the well-known form for the expectation of a quadratic form\footnote{$ \E\left[X^T\Lambda X\right]=\operatorname{tr}\left[\Lambda \Sigma\right] + \mu^T\Lambda\mu $, for a symmetric matrix $ \Lambda $, and where $ \mu $ and $ \Sigma $ are the mean and covariance matrix of $ X $ (which in general does not need to be normal, but only needs to have well defined second moments). }, it follows that:
\begin{equation}\label{}
\begin{aligned}
S_{Gc} (C, Q) &=\frac{1}{2} \log |{R}_C|  + \frac{1}{2} \Tr\left[({R}_C^{-1}-\mathbf{I}_d) \cdot {R}_Q\right] \\
&= \frac{1}{2} \log |{R}_C|  + \frac{1}{2} \Tr\left[{R}_C^{-1} \cdot {R}_Q\right] - \frac{1}{2}\Tr\left[{R}_Q\right]\\
&= \underbrace{\frac{1}{2}   \left\{\log \frac{|{R}_C|}{|{R}_Q|}-d + \Tr \left[{R}_C^{-1} \cdot {R}_Q\right] \right\}}_{D_{KL}(Z_Q||Z_C)}   + \frac{1}{2} \log {R}_Q + \frac{d}{2} - \frac{1}{2} \Tr\left[{R}_Q\right],
\end{aligned}
\end{equation}
where $ D_{KL}(Z_Q||Z_C) $ is the KL divergence between two multivariate normal distributions $ Z_Q $ and $ Z_C $ of dimension $ d $, with mean 0 and covariance matrix $ {R}_Q $ and $ {R}_C $ respectively. Further, notice that the remaining factors do not depend on the distribution $ C $. Therefore, $ S_{Gc} (C, Q)  $ is minimized whenever $ {R}_C $ is equal to $ {R}_Q $; this happens when $ C = Q $, but also for all other choices of $ C $ which share the associated covariance matrix with $ Q $. This implies that the Gaussian copula score is a proper, but not strictly so, scoring rule for copula distributions.

\subsection{Ratio estimation}\label{app:ratio_estimation}

	The standard Bayes posterior can be written as $ \pi(\theta|\dobs) = \pi(\theta)\cdot r(\dobs; \theta) $, with $ r(\dobs; \theta) = \frac{p(\dobs|\theta)}{p(\dobs)} $. The Ratio Estimation (RE) approach \citep{thomas2020likelihood} builds an approximate posterior by estimating $ \log r(\dobs;\theta) $ with some function $ \hat h^\theta(\dobs)  $ and considering $ \pi_{\operatorname{re}}(\theta|\dobs) \propto  \pi(\theta) \exp(\hat h^\theta(\dobs))$.

	\cite{thomas2020likelihood} run an MCMC where, for each proposed $ \theta $, $ m $ samples $ \ddsimmtheta $ are generated from $ P_\theta  $. These, together with a set of $ m $ reference samples $ \ddsimm^{(r)} = \{\dsim_j^{(r)}\}_{j=1}^m$ from the marginal data distribution\footnote{Which are obtained by drawing $\theta_j \sim p(\theta)$, $\dsim_j \sim p(\cdot|\theta_j)$, and discarding $ \theta_j $. \\In general, the number of reference samples and samples from the model can be different, see Appendix~\ref{app:ratio_estimation}; we make this choice here for the sake of simplicity.}, are used to fit a logistic regression yielding $ \hat h^\theta(\dobs) $.
	Logistic regression is an optimization problem in which the best function of $ \X $ in distinguishing between the two sets of samples is selected. If $ m\to\infty $ and all scalar functions are considered, the optimum $ h_\star^\theta$ is equal to $\log r(\dobs;\theta) $. For finite data, however, the corresponding optimum $ \hat h_m^\theta $ is only an approximation of the ratio (as discussed in Appendix~\ref{app:ratio_estimation}).
	RE is therefore a specific case of our SR posterior framework with $ w=1 $ and:
	\begin{equation}
	\hat S_{\operatorname{RE}}(\ddsimmtheta,\ddsimm^{(r)} ,\dobs) = - \hat h_m^\theta(\dobs)
	\end{equation}
	which, differently from the other SR estimators considered previously, also depends on the reference samples. Due to what we discussed above, $ \hat S_{\operatorname{RE}} $ converges in probability to the log-score (up to a constant term in $ \theta $) for $ m\to\infty $.

	The above argument relies on optimizing over all functions in logistic regression; in practice, the optimization is restricted to a set of functions $ \mathcal{H} $ (for instance, a linear combination of predictors). In this case, the infinite data optimum $ h^\theta_{\mathcal H \star}(\dobs) $ does not correspond to $ \log r(\dobs; \theta) $ (see Appendix~\ref{app:ratio_estimation}), but to the best possible approximation in $ \mathcal H $ in some sense. Therefore, Ratio Estimation with a restricted set of functions $ \mathcal H $ cannot be written exactly under our SR posterior framework. However, very flexible function classes (as for instance neural networks) can produce reasonable approximations to the log score for large values of $ m $.

\section{Tuning the bandwidth of the Gaussian kernel}\label{app:bandwidth}

Consider the Gaussian kernel:

\begin{equation}\label{Eq:gau_k}
k(x, y)=\exp \left(-\frac{\|x-y\|_{2}^{2}}{2 \gamma^{2}}\right);
\end{equation}
inspired by \cite{park2016k2}, we fix the bandwidth $ \gamma $ with the following procedure:
\begin{enumerate}
	\item Simulate a value $ \theta_j \sim \pi(\theta) $ and a set of samples $ \dsim_{jk} \sim P_{\theta_j} $, for $ k=1, \ldots, m_\gamma $ .
	\item Estimate the median of $ \{ ||\dsim_{jk} - \dsim_{jl} ||_2 \}_{kl}^{m_\gamma} $ and call it $ \hat \gamma_j $.
	\item Repeat points 1) and 2) for $ j=1,\ldots, m_{\theta, \gamma}  $.
	\item Set the estimate for $ \gamma $ as the median of $ \{ \hat \gamma_j\}_{j=1}^{m_{\theta, \gamma}} $.
\end{enumerate}

Empirically, we use $ m_{\theta, \gamma}=1000 $ and we set $ m_\gamma $ to the corresponding value of $ m $ for the different models.

\section{Further details on simulation studies reported in the main text}\label{app:experimental_details}

\subsection{The g-and-k model}\label{app:gk_SG}
We report here additional experimental details on the g-and-k model experiments.

\subsubsection{Univariate g-and-k}

\paragraph{SG-MCMC and PM-MCMC comparison}
We ran our inference with observations of $n=10$. Both energy score posteriors for PM-MCMC and SG-MCMC was set to $w=1$. 
\begin{itemize}
    \item For the SR posterior with SG-MCMC, we utilised the adSGLD algorithm, with the step-size $\epsilon$ tuned with the Multi-Armed Bandit algorithm \cite{coullon2021efficient} as discussed previously. The chain was initialized at a parameter value of $0$. This resulted in $\epsilon = 3 \times 10^{-3}$.
    \item  For the SR posterior with PM-MCMC, we utilised a proposal size of $\sigma = 1$.
\end{itemize}

\paragraph{Concentration study}
For our concentration study, we ran our inference with increasing observations of $n=1,10,20,50,70,100,200$. Generally, we ran the chain with the Multi-Armed Bandit algorithm \cite{coullon2021efficient} as discussed previously. The chains were started from an initial optimization step of $250$ iterations ran with the Adam optimizer \cite{kingma2014adam}. In Table~\ref{tab:uni_gk_step_size}, we report the final step-size determined by the Multi-Armed Bandit algorithm for different values of $n$. We detail below the settings for the different SR posteriors.

\begin{itemize}
    \item The energy score posteriors were set to $w=1$.
    \item For the kernel score posteriors, we set $w$ using our heuristic procedure discussed in Sec.~\ref{sec:fix_w} with the energy score posterior as a reference, resulting in $w=28.1$. The Gaussian kernel bandwidth $\gamma$, was tuned using the procedure detailed in \ref{app:bandwidth}, resulting in $\gamma=5.47$.
\end{itemize}

\begin{table}[h!]
	\centering
	\caption{Step-sizes for the two SR posteriors in the univariate g-and-k model, determined with the Multi-Armed Bandit algorithm of \cite{coullon2021efficient}.}
	\begin{adjustbox}{max width=\textwidth}

		\begin{tabular}{ l|cccccccc }
			\toprule
			\textbf{Observations} & $ n=1 $  & $ n=10 $ & $ n=20 $ & $ n=50 $ & $ n=70 $ & $ n=100 $ & $ n=200 $ & $ n=400 $  \\
			\midrule
			\textbf{Energy score} & $3 \times 10^{-2}$ & $3 \times 10^{-2}$ & $3 \times 10^{-3}$ & $3 \times 10^{-4}$ & $1 \times 10^{-3}$ & $1 \times 10^{-4}$ & $1 \times 10^{-4}$ & $3 \times 10^{-6}$  \\
			\textbf{Kernel score} & $1 \times 10^{-1}$ & $3 \times 10^{-2}$ & $1 \times 10^{-2}$ & $1 \times 10^{-3}$ & $1 \times 10^{-3}$ & $1 \times 10^{-4}$ & $3 \times 10^{-5}$ & $3 \times 10^{-5}$  \\
			\bottomrule
		\end{tabular}
	\end{adjustbox}
	\label{tab:uni_gk_step_size}
\end{table}

\subsubsection{Multivariate g-and-k}\label{app:gk_Cauchy_prop_size}
Similar to the univariate model, we ran our inference with increasing observations of $n=1,10,20,50,$ $70,100,200$ and with the Multi-Armed Bandit algorithm \cite{coullon2021efficient} as discussed previously. The chains were started from an initial optimization step of $250$ iterations ran with the Adam optimizer \cite{kingma2014adam}. In Table~\ref{tab:multi_gk_well_spec_step_size} and Table~\ref{tab:multi_gk_miss_spec_step_size}, we report the final step-size determined by the Multi-Armed Bandit algorithm for different values of $n$ for the well-specified case and the misspecified case respectively.

 We detail below the settings for the different SR posteriors and for the BSL posterior.

\paragraph{Well-specified case}
\begin{itemize}
    \item The energy score posteriors were set to $w=1$.
    \item For the kernel score posteriors, we set $w$ using our heuristic procedure discussed earlier with the energy score posterior as a reference, resulting in $w=191$. The Gaussian kernel bandwidth $\gamma$, was tuned using the procedure detailed in \ref{app:bandwidth}, resulting in $\gamma=45$.
    \item For the BSL posteriors, we set $\sigma = 1$. However, the chain was unable to converge for any $n > 10$, and so we ran the BSL posterior with an additional $n=5$ observations.
\end{itemize}

\begin{table}[h!]
	\centering
	\caption{Step-sizes for the two SR posteriors in the multivariate g-and-k model with well-specified observations, determined with the Multi-Armed Bandit algorithm of \cite{coullon2021efficient}.}
	\begin{adjustbox}{max width=\textwidth}

		\begin{tabular}{ l|cccccccc }
			\toprule
			\textbf{Observations} & $ n=1 $  & $ n=10 $ & $ n=20 $ & $ n=50 $ & $ n=70 $ & $ n=100 $ & $ n=200 $ & $ n=400 $  \\
			\midrule
			\textbf{Energy score} & $1 \times 10^{-1}$ & $3 \times 10^{-3}$ & $1 \times 10^{-3}$ & $1 \times 10^{-3}$ & $1 \times 10^{-4}$ & $3 \times 10^{-5}$ & $1 \times 10^{-5}$ & $3 \times 10^{-6}$  \\
			\textbf{Kernel score} & $1 \times 10^{-1}$ & $3 \times 10^{-4}$ & $3 \times 10^{-4}$ & $1 \times 10^{-4}$ & $1 \times 10^{-5}$ & $3 \times 10^{-6}$ & $1 \times 10^{-5}$ & $1 \times 10^{-6}$  \\
			\bottomrule
		\end{tabular}
	\end{adjustbox}
	\label{tab:multi_gk_well_spec_step_size}
\end{table}

\paragraph{Misspecified case}
Due to the misspecified model, for certain values of $n$, the SG-MCMC algorithm resulted in proposal values that were outside our specified parameter range. For these cases, we manually tuned the step-size such that the SG-MCMC algorithm ran successfully. These cases are indicated in Table~\ref{tab:multi_gk_miss_spec_step_size} with an asterisk $(*)$.
\begin{itemize}
    \item The energy score posteriors were set to $w=1$.
    \item For the kernel score posteriors, in order to have coherent results with respect to the well specified case, we use here the values determined in the well-specified case. ($w=191$, $\gamma=45$)
    \item For the BSL posteriors, we set $\sigma = 1$. However, the chain was unable to converge for any $n > 5$, and so we ran the BSL posterior with an additional $n=5$ observations.
\end{itemize}

\begin{table}[h!]
	\centering
	\caption{Step-sizes for the two SR posteriors in the multivariate g-and-k model with misspecified observations, determined with the Multi-Armed Bandit algorithm of \cite{coullon2021efficient}.}
	\begin{adjustbox}{max width=\textwidth}

		\begin{tabular}{ l|cccccccc }
			\toprule
			\textbf{Observations} & $ n=1 $  & $ n=10 $ & $ n=20 $ & $ n=50 $ & $ n=70 $ & $ n=100 $ & $ n=200 $ & $ n=400 $  \\
			\midrule
			\textbf{Energy score} & $1 \times 10^{-1}$ & $1 \times 10^{-2}$ & $3 \times 10^{-3}$ & $1 \times 10^{-3}$ & $1 \times 10^{-3}$ & $1 \times 10^{-4}$ & $1 \times 10^{-4}$ & $3 \times 10^{-5}$  \\
			\textbf{Kernel score} & $5 \times 10^{-2}$ (*) & $1 \times 10^{-2}$ & $3 \times 10^{-3}$ & $3 \times 10^{-3}$ & $3 \times 10^{-3}$ & $3 \times 10^{-3}$ & $1 \times 10^{-4}$ (*) & $6 \times 10^{-5}$ (*)  \\
			\bottomrule
		\end{tabular}
	\end{adjustbox}
	\label{tab:multi_gk_miss_spec_step_size}
\end{table}

\subsection{Additional details on misspecified normal location model}\label{app:normal_location}

As mentioned in the main text (Sec.~\ref{sec:normal_location}), we set the weight $ w  $ such that the variance achieved by our SR posteriors is approximately the same as the one achieved by the standard Bayes distribution for the well specified case ($ \epsilon=0 $). This resulted in $ w=1 $ for the energy score posterior and $ w=2.8 $ for the kernel score posterior. Additionally, the bandwidth for the Gaussian kernel was tuned to be $ \gamma\approx 0.9566 $ (with the strategy discussed in Appendix~\ref{app:bandwidth}).

In Figure~\ref{fig:normal_loc_2} we report the full set of posterior distributions for the different values of $ \epsilon $ and $ z $ obtained with the standard Bayes posterior and with our SR posteriors.

\begin{figure}
	\includegraphics[width=1\linewidth]{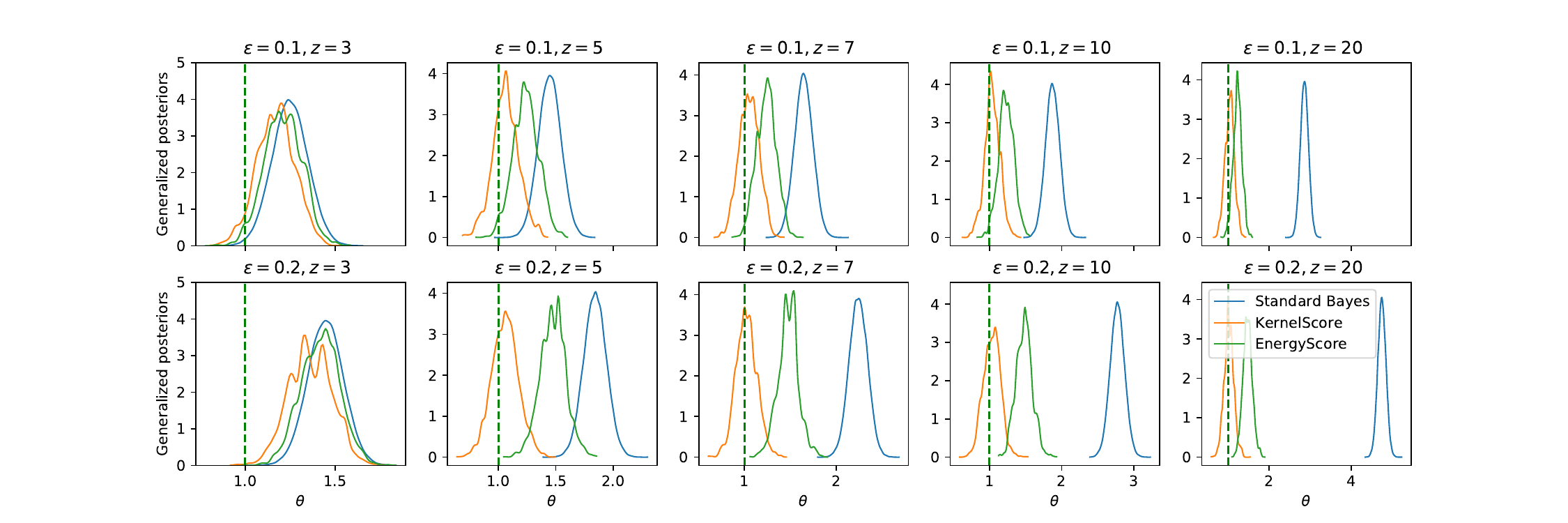}
	\caption{Posterior distribution obtained with the scoring rules and exact Bayes for the misspecified normal location model; each panel represents a different choice of $ \epsilon $ and $ z $. It can be seen that both Kernel and energy score are more robust with respect to Standard Bayes, with the kernel score one being extremely robust. The densities are obtained by KDE on the MCMC output thinned by a factor 10.}
	\label{fig:normal_loc_2}
\end{figure}

In the MCMC with the SR posteriors, a proposal size $ \sigma=2 $ is used for all values of $ \epsilon $ and $ z $. For all experiments, Table~\ref{tab:location_normal_acc_rate} reports acceptance rates obtained with the SR posteriors, while Table~\ref{tab:location_normal_variance} reports the obtained posterior standard deviation with SR posteriors and for the standard Bayes distribution (for which we do not give the proposal size and acceptance rate as it was sampled using more advanced MCMC techniques than standard Metropolis-Hastings using the \texttt{PyMC3} library \citep{salvatier2016probabilistic}).

\begin{table}[h!]
	\centering
	\caption{Acceptance rates for MCMC targeting the energy and kernel score posteriors for the different outlier setups, for the misspecified normal location model.}
	\begin{adjustbox}{max width=\textwidth}

		\begin{tabular}{ l|ccccccccccc }
			\toprule
			\multirow{2}{*}{\textbf{Setup}}  & \multicolumn{1}{c}{$\epsilon=0$}  &   \multicolumn{5}{c} {$\epsilon=0.1$}  &   \multicolumn{5}{c} {$\epsilon=0.2$} \\ \cmidrule(r){2-2}\cmidrule(r){3-7} \cmidrule(r){8-12}
			& -  & $ z=3 $ & $ z=5 $ & $ z=7 $ & $ z=10 $ & $ z=20 $ & $ z=3 $ & $ z=5 $ & $ z=7 $ & $ z=10 $ & $ z=20 $ \\
			\midrule
			\textbf{Kernel score} & 0.076 & $0.086$ & $0.089$ & $0.085$ & $0.086$ & $0.087$ & $0.080$ & $0.086$ & $0.089$ & $0.091$ & $0.090$ \\
			\textbf{Energy score} & 0.076 & $0.084$ & $0.087$ & $0.082$ & $0.083$ & $0.085$ & $0.082$ & $0.079$ & $0.077$ & $0.082$ & $0.082$ \\
			\bottomrule
		\end{tabular}
	\end{adjustbox}
	\label{tab:location_normal_acc_rate}

\end{table}

\begin{table}[h!]
	\centering
	\caption{Obtained posterior standard deviation for the standard Bayes and the energy and kernel score posteriors, for the different outlier setups, for the misspecified normal location model.}
	\begin{adjustbox}{max width=\textwidth}

		\begin{tabular}{ l|ccccccccccc }
			\toprule
			\multirow{2}{*}{\textbf{Setup}}  & \multicolumn{1}{c}{$\epsilon=0$}  &   \multicolumn{5}{c} {$\epsilon=0.1$}  &   \multicolumn{5}{c} {$\epsilon=0.2$} \\ \cmidrule(r){2-2}\cmidrule(r){3-7} \cmidrule(r){8-12}
			& -  & $ z=3 $ & $ z=5 $ & $ z=7 $ & $ z=10 $ & $ z=20 $ & $ z=3 $ & $ z=5 $ & $ z=7 $ & $ z=10 $ & $ z=20 $ \\
			\midrule
			\textbf{Standard Bayes}& $0.100$ & $0.100$ & $0.099$ & $0.099$ & $0.099$ & $0.100$ & $0.099$ & $0.099$ & $0.100$ & $0.099$ & $0.099$ \\
			\textbf{Kernel score} & $0.101$ & $0.106$ & $0.114$ & $0.108$ & $0.105$ & $0.113$ & $0.121$ & $0.116$ & $0.113$ & $0.117$ & $0.116$ \\
			\textbf{Energy score}  & $0.098 $ & $0.105$ & $0.112$ & $0.106$ & $0.106$ & $0.107$ & $0.109$ & $0.112$ & $0.111$ & $0.114$ & $0.113$ \\
			\bottomrule
		\end{tabular}
	\end{adjustbox}
	\label{tab:location_normal_variance}
\end{table}

Finally, as mentioned in the main text (Sec.~\ref{sec:normal_location}), we attempted using BSL in this scenario. As the model is Gaussian, we expected the BSL posterior to be very close to the standard posterior. Indeed, this is what we observed in the well specified case and for small $ z $ (Figure~\ref{fig:normal_loc_SL}). When however $ z $ is increased, the MCMC targeting the BSL posterior does not perform satisfactorily (see the trace plots in Figure~\ref{fig:normal_loc_SL_traces}). Neither reducing the proposal size nor running the chain for a longer number of steps seems to solve this issues.%

\begin{figure}
	\centering
	\includegraphics[width=0.6\linewidth]{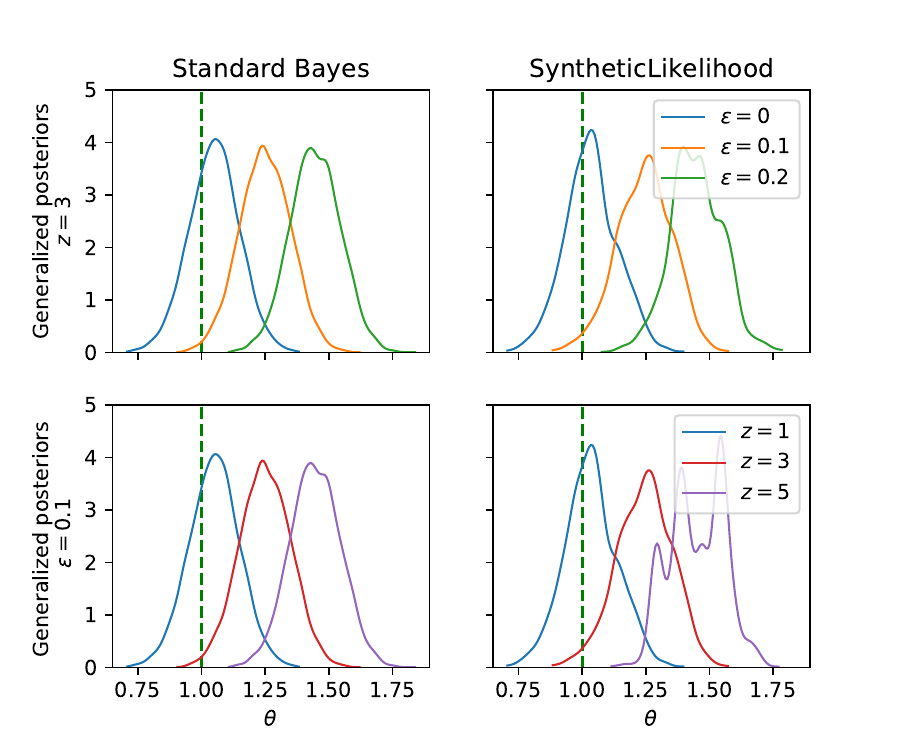}
	\caption{Standard Bayes and BSL posteriors for the normal location model, for different choices of $ \epsilon $ and $ z $.  First row: fixed outliers location $ z=3 $ and varying proportion $ \epsilon $; second row: fixed outlier proportion $ \epsilon $, varying location $ z $. As expected, the BSL posterior is very close to the standard Bayes posterior. The densities are obtained by KDE on the MCMC output thinned by a factor 10.}
	\label{fig:normal_loc_SL}
\end{figure}

\begin{figure}
	\centering
	\includegraphics[width=1\linewidth]{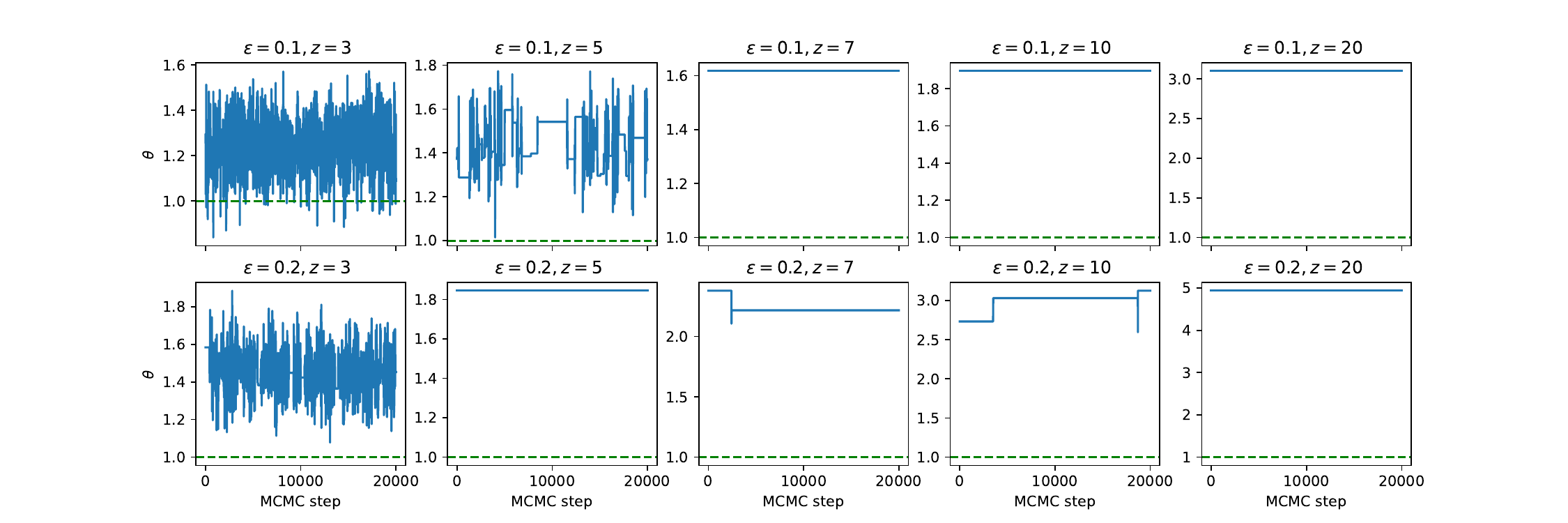}
	\caption{Trace plots for MCMC targeting the BSL posterior with different choices of $ z $ and $ \epsilon $, for the misspecified normal location model. We used here proposal size $ \sigma=2 $ and 60000 MCMC steps, of which 40000 were burned in; reducing the proposal size or increasing the number of steps did not seem to solve this issue.}
	\label{fig:normal_loc_SL_traces}
\end{figure}

\subsection{The Lorenz96 model}\label{app:Lorenz}
In both comparisons, we utilize the energy score posterior with $w=1$, except for the case where the SMC-ABC algorithm is used.

\paragraph{Comparison with ABC}
We ran the inference using the adSGLD algorithm, and the SMC-ABC algorithm, both with observations of $n=10$. For the energy score posterior, a step size of $\epsilon = 3 \times 10^{-2}$ was set, and the chain was initialised at a parameter value of $0$.

\paragraph{High dimensional neural stochastic parametrization}
We ran the inference using both the adSGLD algorithm with the linear stochastic parametrization and the pSGLD algorithm with the high dimensional neural parametrization, both with observations of $n=1$ which was first standardised. For both cases, chains were started from an initial optimization step of $250$ iterations ran with the Adam optimizer \cite{kingma2014adam}. For the adSGLD algorithm, a step size of $\epsilon = 1 \times 10^{-4}$ was set, while for the pSGLD algorithm this was set to $\epsilon = 1 \times 10^{-7}$.

\section{Results with pseudo-marginal MCMC on g-and-k model}\label{app:PM_results}

We report here some parallel results to those in the main text of the paper obtained with pseudo-marginal (PM) MCMC. To obtain these results, we use the correlated pseudo-marginal MCMC \citep{dahlin2015accelerating, deligiannidis2018correlated, picchini2022sequentially} mentioned in Sec.~\ref{sec:PM} with independent normal proposals on each component of the parameter space; we indicate by $ \sigma $ the standard deviation of the normal proposal distribution, which we report below. %
In all cases, whenever the parameter space is bounded, we run PM-MCMC on a transformed unbounded space obtained via a logistic transformation. Therefore, the proposal sizes refer to that unbounded space.

Besides our SR posteriors, we consider here the BSL and the semi-parametric BSL (Appendix~\ref{app:semiBSL}; notice that the latter is only well-defined for multivariate models). When performing these studies, we aimed at comparing the performance of our SR posteriors with BSL. Hence, we set the value of $w$ for the energy and kernel score posteriors with the strategy discussed in Sec.~\ref{sec:fix_w} using BSL as a reference.

\subsection{Well-specified setup}\label{app:gk_prop_size}\label{app:gk_PM}

For both univariate and multivariate case, we consider synthetic observations generated from parameter values $ A^\star =3,$ $ B^\star=1.5,$ $ g^\star=0.5,$ $ k^\star=1.5$ and $ \rho^\star=-0.3 $ (notice $ \rho $ is not used in the univariate case).

We first present results and discuss specific settings below.
For the univariate g-and-k, Fig.~\ref{fig:uni_gk} reports the marginal posterior distributions for each parameter at different number of observations for the considered methods. With increasing $ n $, the BSL posterior does not concentrate (except for the parameter $ k $); the energy score posterior concentrates close to the true value for all parameters (green vertical line), while the kernel score posterior performs slightly worse, not being able to concentrate for the parameter $ g $ (albeit this may happen with an even larger $ n $, which we did not consider here). The poor performance of BSL is due to violation of the underlying normality assumption (which is to say, the scoring rule used by BSL is not strictly proper for this example), while the concentration of the energy and kernel score posteriors are in line with them being strictly proper SRs.

\begin{figure}[tb]
    \includegraphics[width=1.0\linewidth]{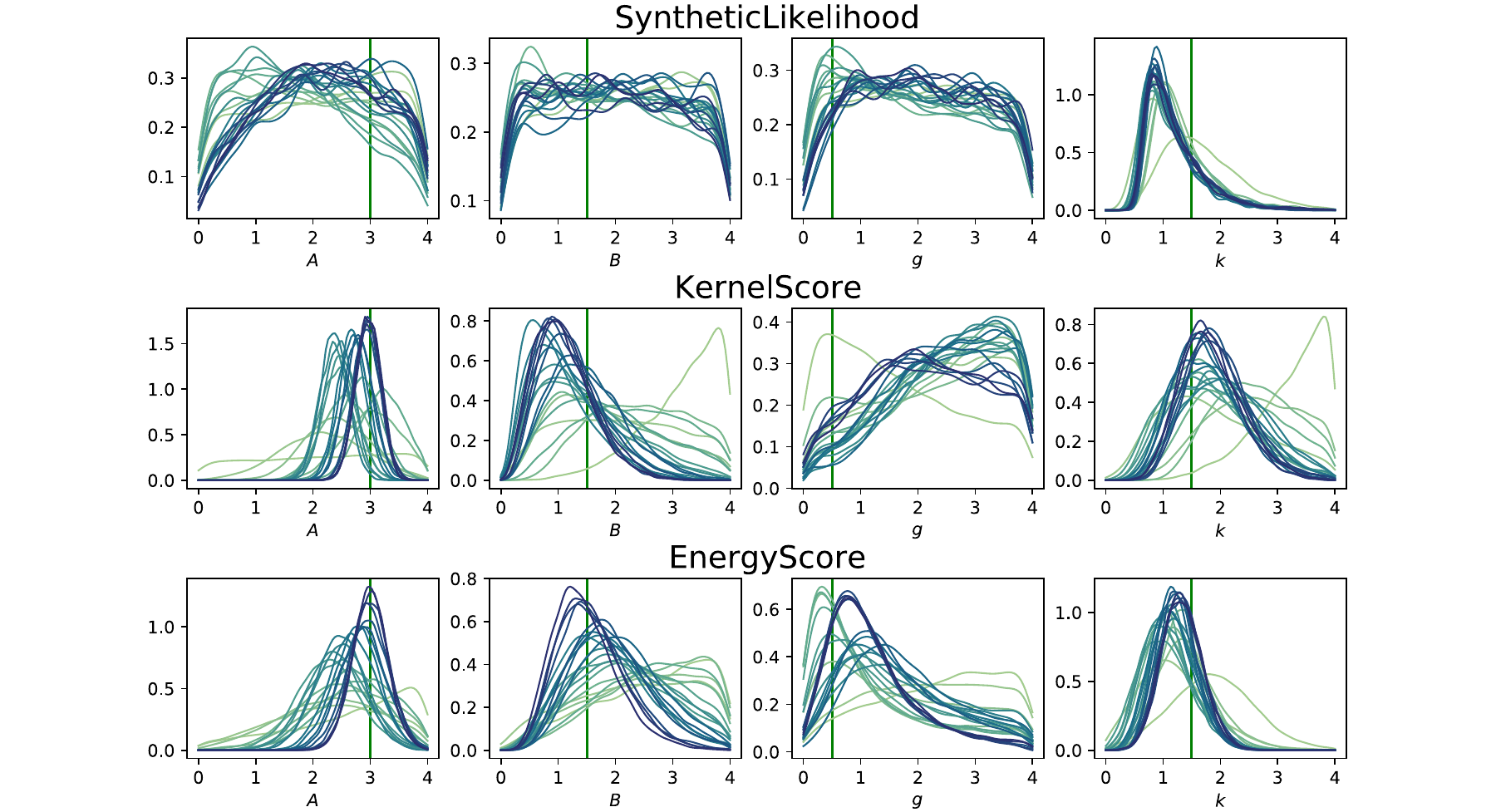}
    \caption{Marginal posterior distributions for the different parameters for the well-specified univariate g-and-k model, with increasing number of observations ($n= 1, 5, 10, 15, \ldots, 100) $, with PM-MCMC. Darker (respectively lighter) colors denote a larger (smaller) number of observations. The densities are obtained by KDE on the MCMC output thinned by a factor 10. The energy and kernel score posteriors concentrate around the true parameter value (green vertical line), while BSL does not.}
    \label{fig:uni_gk}
\end{figure}

Similar results for the multivariate g-and-k are reported in Fig.~\ref{fig:gk}. 
For this example, the PM-MCMCs targeting the semiBSL and BSL posteriors do not converge beyond respectively 1 and 10 observations;
instead, with the Kernel and energy scores we do not experience such a problem. The energy score concentrates well on the exact parameter value in this case too, while the kernel score is able to concentrate well for some parameters ($ g $ and $ k $) and some concentration can be observed for $ \rho $; however, the kernel score posterior marginals for $ A $ and $ B $ are flatter and noisier (it may be that larger $ n $ leads to more concentrate posterior for $ A $ and $ B $ as well, but we did not research this further).

\begin{figure}[tb]
    \includegraphics[width=1\linewidth]{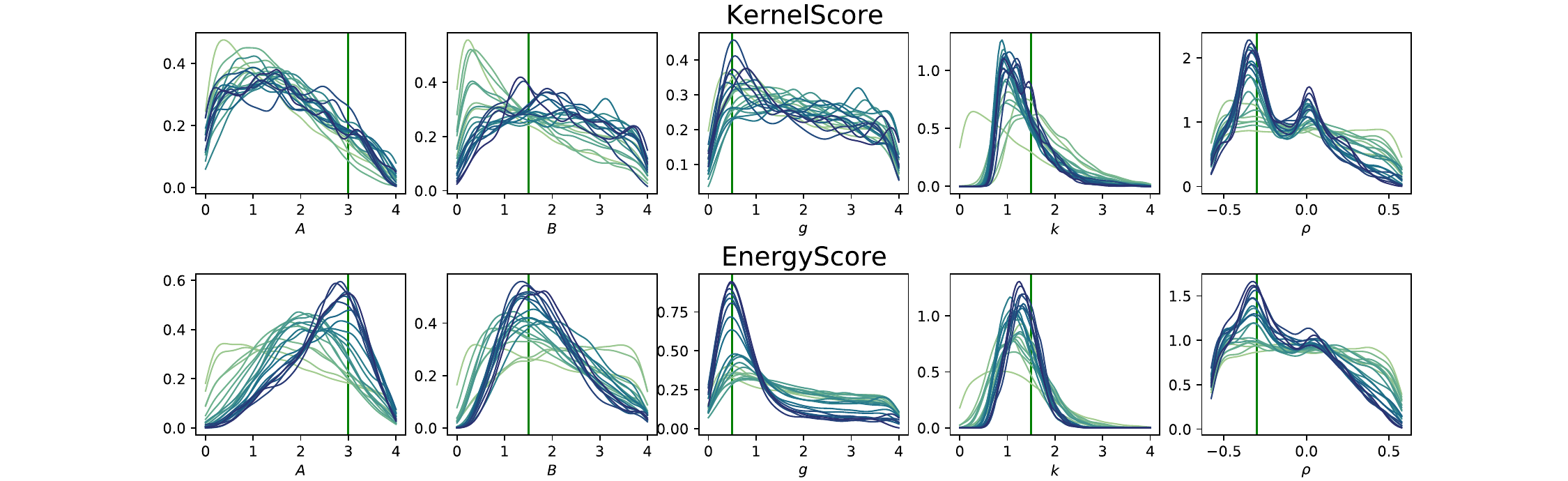}
	\includegraphics[width=1\linewidth]{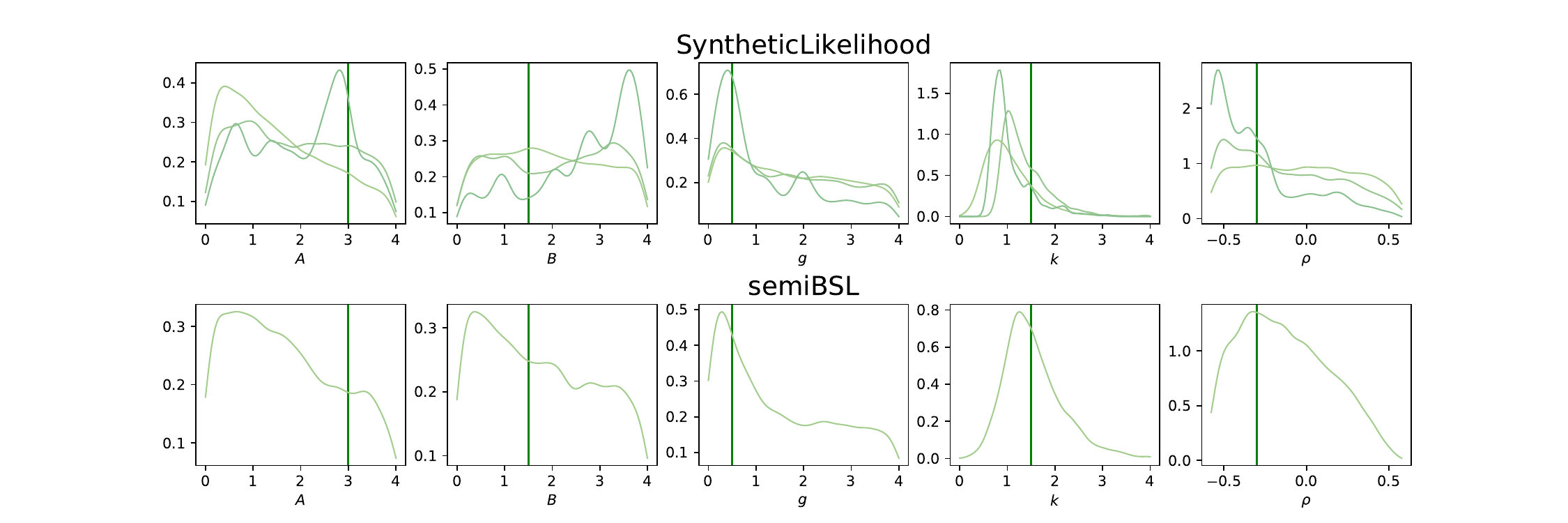}
    \caption{Marginal posterior distributions for the different parameters for the  well-specified multivariate g-and-k model, with increasing number of observations ($n= 1, 5, 10, 15, \ldots, 100) $, with PM-MCMC. Darker (respectively lighter) colors denote a larger (smaller) number of observations. The densities are obtained by KDE on the MCMC output thinned by a factor 10. The energy score posterior concentrates well around the true parameter value (green vertical line), with the kernel score one performing slightly worse. For BSL, we were able to run the inference for $ n=1,5,10 $, while we were only able to do so for $ n=1 $ for semiBSL.}
    \label{fig:gk}
\end{figure}

We use the following settings for the SR posteriors:

\begin{itemize}
	\item For the energy score posterior, our heuristic procedure (Sec.~\ref{sec:fix_w}) for setting $ w $ using BSL as a reference resulted in $ w\approx 0.35 $ for the univariate model and $ w\approx 0.16$ for the multivariate one.
	\item For the kernel score posterior, we first fit the value of the Gaussian kernel bandwidth parameter as described in Appendix~\ref{app:bandwidth}, which resulted in $ \gamma\approx 5.50 $ for the univariate case and $ \gamma \approx 52.37$ for the multivariate one. Then, the heuristic procedure for $ w $ using BSL as a reference resulted in $ w\approx18.30 $ for the univariate model and $ w\approx 52.29 $ for the multivariate one.
\end{itemize}

Next, we discuss the proposal sizes for PM-MCMC; recall that we use independent normal proposals on each component of $ \theta $, with standard deviation $ \sigma $. We report here the values for $ \sigma $ used in the experiments; we stress that, as the PM-MCMC is run in the transformed unbounded parameter space (obtained applying a logit transformation), these proposal sizes refer to that space.

For the univariate g-and-k, the proposal sizes we use are the following:
\begin{itemize}
	\item For BSL, we use $ \sigma =1$ for all values of $ n $.
	\item For energy and kernel scores, we take $ \sigma=1 $ for $ n $ from 1 up to 25 (included), $ \sigma=0.4 $ for $ n $ from 30 to 50, and $ \sigma=0.2 $ for $ n $ from 55 to 100.
\end{itemize}

For the multivariate g-and-k:
\begin{itemize}
	\item For BSL and semiBSL, we use $ \sigma =1$ for all values of $ n $ for which the chain converges. We stress that we tried decreasing the proposal size, but that did not solve the non-convergence issue (discussed in the main text in Sec.~\ref{sec:experiments_gk_well_spec}).
	\item For energy and kernel scores, we take $ \sigma=1 $ for $ n $ from 1 up to 15 (included), $ \sigma=0.4 $ for $ n $ from 20 to 35, $ \sigma=0.2 $ for $ n $ from 40 to 50 and $ \sigma=0.1 $ for $ n $ from 55 to 100.
\end{itemize}

In Table~\ref{Tab:acc_rates_gk}, we report the acceptance rates the different methods achieve for all values of $ n $, with the proposal sizes mentioned above. We denote by ``/'' the experiments for which we did not manage to run PM-MCMC satisfactorily. We remark how the energy score achieves a larger acceptance rates in all experiments compared to the kernel score.

\begin{table}
	\centering
		\caption{Acceptance rates for the univariate and multivariate g-and-k experiments with different values of $ n $, with the PM-MCMC proposal sizes reported in Appendix~\ref{app:gk_prop_size}. ``/'' denotes experiments for which PM-MCMC did not run satisfactorily.}
	\begin{tabular}{ c|ccccccc }
		\toprule
		\multirow{2}{*}{\textbf{N. obs.}}  & \multicolumn{3}{c}{Univariate g-and-k}  &   \multicolumn{4}{c} {Multivariate g-and-k} \\ \cmidrule(r){2-4} \cmidrule(r){5-8}
		& BSL & Kernel score & Energy score & BSL & semiBSL & Kernel score & Energy score \\
		\midrule

		1         & $0.362$ & $0.507$ & $0.420$     & $0.216$ & $0.190$ & $0.468$ & $0.445$ \\
		5         & $0.221$ & $0.329$ & $0.375$     & $0.069$ & / & $0.136$ & $0.224$ \\
		10        & $0.133$ & $0.252$ & $0.272$     & $0.036$ & / & $0.127$ & $0.216$ \\
		15        & $0.109$ & $0.253$ & $0.217$     & / & / & $0.077$ & $0.154$ \\
		20        & $0.100$ & $0.154$ & $0.207$     & / & / & $0.151$ & $0.278$ \\
		25        & $0.092$ & $0.149$ & $0.208$     & / & / & $0.126$ & $0.233$ \\
		30        & $0.085$ & $0.218$ & $0.343$     & / & / & $0.124$ & $0.222$ \\
		35        & $0.080$ & $0.172$ & $0.315$     & / & / & $0.076$ & $0.166$ \\
		40        & $0.076$ & $0.152$ & $0.293$     & / & / & $0.119$ & $0.246$ \\
		45        & $0.070$ & $0.130$ & $0.256$     & / & / & $0.103$ & $0.223$ \\
		50        & $0.062$ & $0.121$ & $0.220$     & / & / & $0.103$ & $0.219$ \\
		55        & $0.060$ & $0.189$ & $0.317$     & / & / & $0.139$ & $0.297$ \\
		60        & $0.059$ & $0.185$ & $0.324$     & / & / & $0.129$ & $0.286$ \\
		65        & $0.057$ & $0.173$ & $0.314$     & / & / & $0.133$ & $0.273$ \\
		70        & $0.052$ & $0.172$ & $0.289$     & / & / & $0.119$ & $0.256$ \\
		75        & $0.048$ & $0.161$ & $0.273$     & / & / & $0.123$ & $0.247$ \\
		80        & $0.048$ & $0.159$ & $0.267$     & / & / & $0.117$ & $0.233$ \\
		85        & $0.045$ & $0.150$ & $0.252$     & / & / & $0.098$ & $0.213$ \\
		90        & $0.044$ & $0.143$ & $0.247$     & / & / & $0.087$ & $0.198$ \\
		95        & $0.044$ & $0.136$ & $0.244$     & / & / & $0.089$ & $0.198$ \\
		100       & $0.042$ & $0.129$ & $0.236$     & / & / & $0.076$ & $0.190$ \\
		\bottomrule
	\end{tabular}
	\label{Tab:acc_rates_gk}
\end{table}

\subsubsection{Investigating the poor PM-MCMC performance for BSL and semiBSL}\label{app:gk_nsim}

The correlated pseudo-marginal MCMC for BSL and semiBSL performed poorly for the multivariate g-and-k example, not being able to converge when using more than respectively 1 and 10 observations
We investigate now this poor performance, by fixing $ n=20 $ and running PM-MCMC with 10 different initializations,  for 10000 MCMC steps with no burn-in, for BSL and semiBSL, with $ m=500 $. The chains look ``sticky'' and, after a short transient, get stuck in different regions of $ \Theta  $ (see Fig.~\ref{fig:gk_traces}).

\begin{figure}[t]
	\includegraphics[width=1\linewidth]{../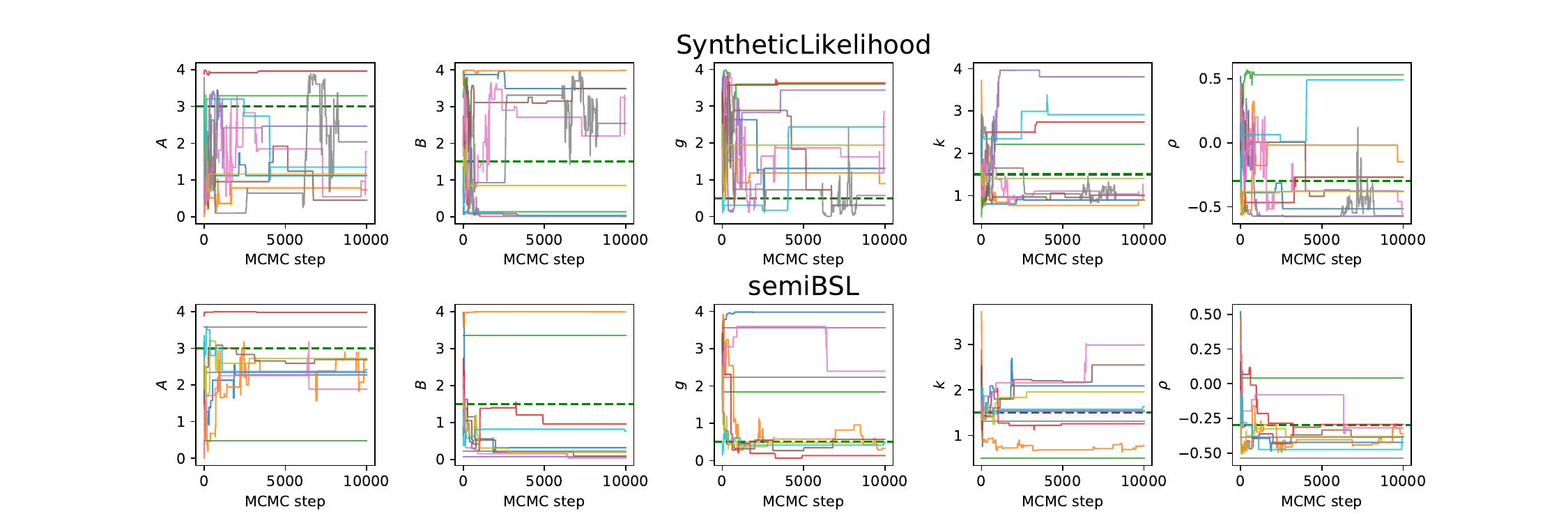}
	\caption{Traceplots for semiBSL and BSL for $ n=20 $ for 10 different initializations (different colors), with 10000 PM-MCMC steps (no burn-in); the green dashed line denotes the true parameter value. It can be seen that the chains are very sticky, and that they explore different parts of the parameter space. }
	\label{fig:gk_traces}
\end{figure}

In order to understand the reason for this result, we investigate whether the poor performance is due to large variance in the estimate of the target; as increasing the number of simulations reduces such variance, we study the effect of this on the PM-MCMC performance. Therefore, we report here the results of a study increasing the number of simulations for a fixed number of observations $ n=20 $ for the g-and-k model. Specifically, we tested $ m=500,1000,1500,2000,2500,3000,30000 $; as discussed in Appendix~\ref{app:gk_prop_size}, we used a proposal size $ \sigma=0.4 $, with which the energy and kernel score posteriors performed well. We report traceplots in Fig.~\ref{fig:gk_traces_increase_sim} and corresponding acceptance rates in Table~\ref{tab:acc_rates_gk_increase_sim}; from this experiment, we note that BSL achieves acceptance rate as large as few percentage points with larger $ m $ values, but there is no constant trend (for instance, acceptance rate with $ m=3000 $ is smaller than with $ m=2000 $), which means that the method is still prone to getting stuck. For semiBSL, the acceptance rate is abysmal even for very large $ m $.

Additionally, while the BSL assumptions are unreasonable for this model, the multivariate g-and-k fulfills the assumptions underlying semiBSL: in fact, applying a one-to-one transformation to each component of a random vector does not change the copula structure, which is Gaussian in this case. It is therefore surprising that the performance of semiBSL degrades so rapidly when $ n $ increases.

\begin{figure}
	\includegraphics[width=1\linewidth]{../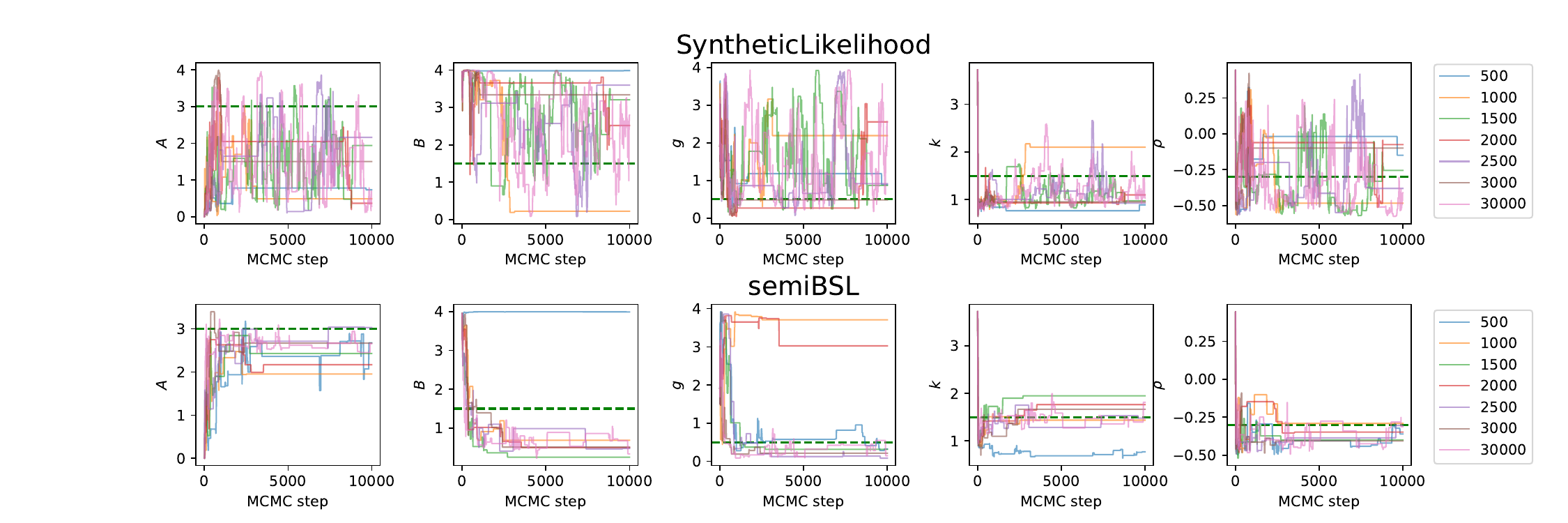}
	\caption{Traceplots for BSL and semiBSL and BSL for $ n=20 $ using different number of simulations $ m $, reported in the legend for each row; green dashed line denotes the true parameter value. There is no improvement in the mixing of the chain for increasing the number of simulations. }
	\label{fig:gk_traces_increase_sim}
\end{figure}

\begin{table}
	\centering
		\caption{Acceptance rates for BSL and semiBSL and BSL for $ n=20 $ using different number of simulations $ m $; there is no improvement in the acceptance rate for increasing number of simulations. We recall that we were not able to run semiBSL for $ m=30000 $ due to its high computational cost.}
	\begin{adjustbox}{max width=\textwidth}
		\begin{tabular}{l|ccccccc}
			\toprule
			\textbf{N. simulations $ \boldsymbol m $} & 500 & 1000 & 1500 & 2000 & 2500 & 3000 & 30000 \\
			\midrule
			\textbf{Acc. rate BSL}  & $6.0\cdot 10^{-3}$ & $1.1\cdot 10^{-2}$ & $3.3\cdot 10^{-2}$ & $9.9\cdot 10^{-3}$ & $1.8\cdot 10^{-2}$ & $7.5\cdot 10^{-3}$ & $5.1\cdot 10^{-2}$ \\
			\midrule
			\textbf{Acc. rate semiBSL} & $7.0\cdot 10^{-3}$ & $3.4\cdot 10^{-3}$ & $3.7\cdot 10^{-3}$ & $2.8\cdot 10^{-3}$ & $4.2\cdot 10^{-3}$ & $3.6\cdot 10^{-3}$ & $9.2\cdot 10^{-3}$  \\
			\bottomrule
		\end{tabular}
	\end{adjustbox}
	\label{tab:acc_rates_gk_increase_sim}
\end{table}

\subsection{Misspecified setup}\label{app:gk_Cauchy_PM}

The observations are here generated by a Cauchy distribution. For the univariate case, the univariate Cauchy is used; for the multivariate case, the observations are generated as in Sec.~\ref{sec:experiments_gk_misspec} (i.e., no correlation between components).

In order to have coherent results with respect to the well specified case, we use here the values of $ w $ and $ \gamma $ determined in the well specified case (reported in Appendix~\ref{app:gk_prop_size})

For the univariate g-and-k, we report the marginal posteriors in Fig.~\ref{fig:uni_gk_Cauchy}. The energy and kernel score posteriors concentrate on a similar parameter value; the BSL posterior concentrates as well (differently from the well-specified case), albeit on a slightly different parameter value (especially for $ B $ and $ k $). Therefore, with this kind of misspecification, $ \thetastar $ is unique both when using the strictly proper Kernel and energy scores, as well as the non-strictly proper Dawid--Sebastiani Score (corresponding to BSL).

\begin{figure}[tb]
    \includegraphics[width=1.0\linewidth]{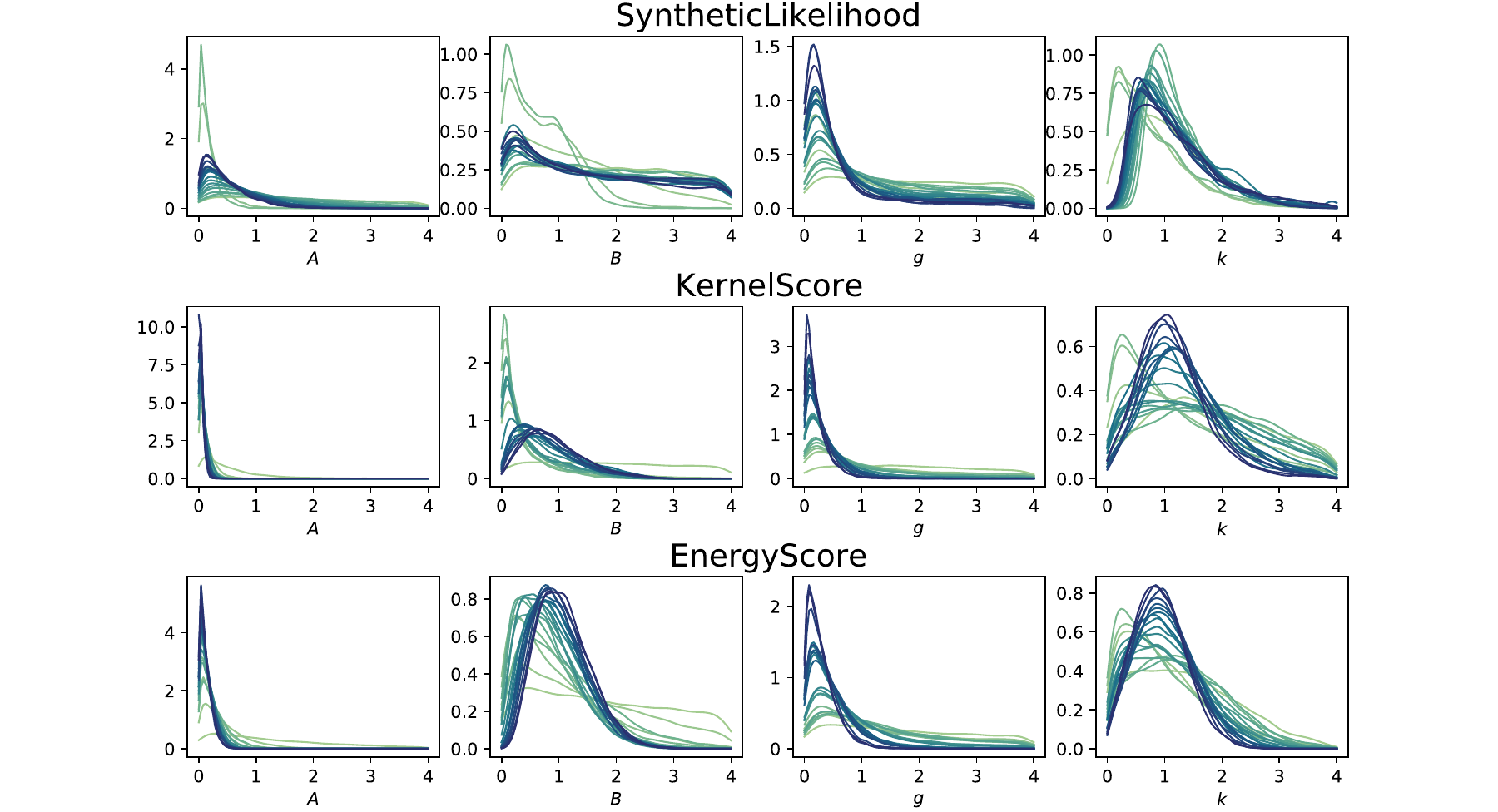}
    \caption{Marginal posterior distributions for the different parameters for the univariate g-and-k model, with increasing number of observations ($n= 1, 5, 10, 15, \ldots, 100) $ generated from the Cauchy distribution, with PM-MCMC. Darker (respectively lighter) colors denote a larger (smaller) number of observations. The densities are obtained by KDE on the MCMC output thinned by a factor 10. The energy and kernel score posteriors concentrate around the same parameter value, while BSL concentrates on slightly different one (specially for $ B $ and $ k $).}
    \label{fig:uni_gk_Cauchy}
\end{figure}

For the multivariate g-and-k, we experienced the same issue with PM-MCMC as in the well-specified case for BSL and semiBSL; therefore, we do not report those results. Marginals for the energy and kernel score posteriors can be seen in Fig.~\ref{fig:gk_Cauchy}; both posteriors concentrate for all parameters except for $ \rho $ (which describes correlation among different components in the observations, here absent). For the other parameters, the two methods concentrate on very similar parameter values, with slightly larger difference for $ k $, for which the kernel score posterior does not concentrate very well.

\begin{figure}[tb]
    \includegraphics[width=1\linewidth]{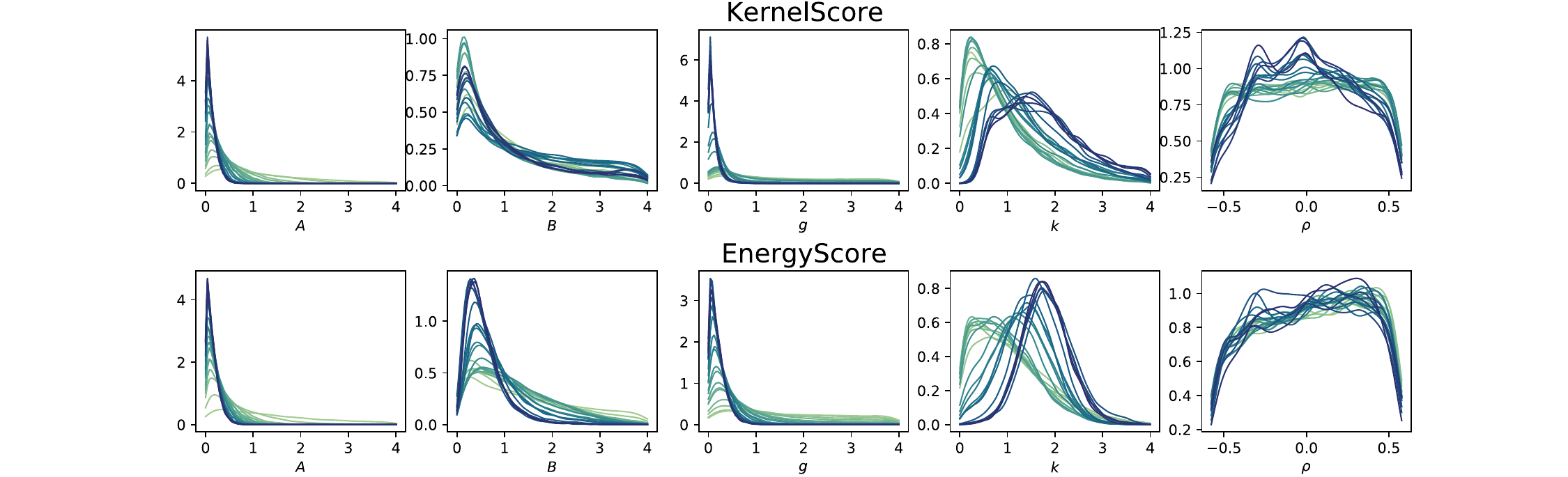}
    \caption{		Marginal posterior distributions for the different parameters for the multivariate g-and-k model, with increasing number of observations ($n= 1, 5, 10, 15, \ldots, 100) $ generated from the Cauchy distribution, with PM-MCMC. Darker (respectively lighter) colors denote a larger (smaller) number of observations The densities are obtained by KDE on the MCMC output thinned by a factor 10. Both energy and kernel score posteriors
        concentrate on a very similar parameter value, with slightly larger difference for $ k $.}
    \label{fig:gk_Cauchy}
\end{figure}

The above resuts are obtained with the following proposal sizes for PM-MCMC (which is run with independent normal proposals on each component of $ \theta $ with standard deviation $ \sigma $, in the same way as in the well specified case, after applying a logit transformation to the parameter space).
\begin{itemize}
	\item For the univariate g-and-k, for all methods (BSL, energy and kernel scores), we take $ \sigma=1 $ for $ n $ from 1 up to 25 (included), $ \sigma=0.4 $ for $ n $ from 30 to 50, and $ \sigma=0.2 $ for $ n $ from 55 to 100.
	\item For the multivariate g-and-k, recall that we did not report results for BSL and semiBSL here as we were not able to sample the posteriors with PM-MCMC for large $ n $, as already experienced in the well specified case. For the remaining techniques, we used the same values of $ \sigma  $ as in the well specified experiments (Appendix~\ref{app:gk_Cauchy_prop_size}).
\end{itemize}

In Table~\ref{Tab:acc_rates_gk_Cauchy}, we report the acceptance rates the different methods achieve for all values of $ n $, with the proposal sizes discussed above. We remark how the energy score achieves a larger acceptance rates in all experiments compared to the kernel score.

\begin{table}[h!]
	\centering
		\caption{Acceptance rates for the misspecified univariate and multivariate g-and-k experiments with different values of $ n $, with the PM-MCMC proposal sizes reported in Appendix~\ref{app:gk_Cauchy_PM}. }
	\begin{tabular}{ c|ccccc }
		\toprule
		\multirow{2}{*}{\textbf{N. obs.}}  & \multicolumn{3}{c}{Misspecified univariate g-and-k}  &   \multicolumn{2}{c} {Misspecified multivariate g-and-k} \\ \cmidrule(r){2-4} \cmidrule(r){5-6}
		& BSL & Kernel score & Energy score & Kernel score & Energy score \\
		\midrule
		1      & $0.457$   & $0.482$ & $0.521$       & $0.472$ & $0.470$ \\
		5      & $0.302$   & $0.436$ & $0.454$       & $0.324$ & $0.373$ \\
		10     & $0.193$   & $0.450$ & $0.425$       & $0.362$ & $0.330$ \\
		15     & $0.146$   & $0.441$ & $0.390$       & $0.361$ & $0.276$ \\
		20     & $0.102$   & $0.264$ & $0.311$       & $0.544$ & $0.410$ \\
		25     & $0.093$   & $0.288$ & $0.314$       & $0.530$ & $0.377$ \\
		30     & $0.153$   & $0.426$ & $0.471$       & $0.536$ & $0.359$ \\
		35     & $0.144$   & $0.349$ & $0.448$       & $0.537$ & $0.336$ \\
		40     & $0.134$   & $0.340$ & $0.440$       & $0.631$ & $0.432$ \\
		45     & $0.130$   & $0.344$ & $0.429$       & $0.523$ & $0.373$ \\
		50     & $0.125$   & $0.255$ & $0.393$       & $0.383$ & $0.343$ \\
		55     & $0.167$   & $0.318$ & $0.501$       & $0.471$ & $0.436$ \\
		60     & $0.176$   & $0.303$ & $0.490$       & $0.412$ & $0.407$ \\
		65     & $0.164$   & $0.293$ & $0.481$       & $0.389$ & $0.391$ \\
		70     & $0.164$   & $0.276$ & $0.455$       & $0.372$ & $0.374$ \\
		75     & $0.156$   & $0.272$ & $0.445$       & $0.278$ & $0.329$ \\
		80     & $0.157$   & $0.262$ & $0.436$       & $0.232$ & $0.306$ \\
		85     & $0.153$   & $0.254$ & $0.430$       & $0.247$ & $0.300$ \\
		90     & $0.147$   & $0.231$ & $0.415$       & $0.239$ & $0.299$ \\
		95     & $0.152$   & $0.226$ & $0.410$       & $0.235$ & $0.291$ \\
		100    & $0.141$   & $0.223$ & $0.407$       & $0.232$ & $0.277$ \\
		\bottomrule
	\end{tabular}
	\label{Tab:acc_rates_gk_Cauchy}
\end{table}

\section{Effect of $ m $ on pseudo-marginal MCMC}\label{app:different_n_sims}

Here, we consider the univariate and multivariate g-and-k, both well specified and misspecified, %
and study the impact of varying $ m $ in the resulting PM-MCMC target. As we span from very small to large values of $ m $, we use here the vanilla pseudo-marginal MCMC of \cite{andrieu2009pseudo} instead of the correlated pseudo-marginal MCMC which was used for all other simulations.

The choice of $ m $ has two different impacts on the PM-MCMC:
\begin{enumerate}
	\item first, it changes the pseudo-marginal MCMC target, as discussed in Section~\ref{sec:simulator_models} in the main text; recall how, there, we proved that, for $ m\to\infty $, the pseudo-marginal MCMC target converges to the original SR posterior defined in Eq.~\eqref{Eq:SR_posterior} in the main text. Therefore, we expect, for large enough $ m $, the pseudo-marginal MCMC target to be roughly constant.
	\item  Additionally, smaller values of $ m $ imply that the target estimate has a larger variance. Therefore, we expect sampling to be harder for small $ m $, in terms of acceptance rate of the MCMC, and easier for large $ m $ (albeit that is more computationally intensive).
\end{enumerate}

In our simulation study below, we consider $ m $ values from 10 to 1000. Our results empirically verify our expectations above. In particular, we find that, for $ m $ larger than a threshold which is typically few hundreds, the pseudo-marginal MCMC target is roughly constant. Additionally, very small values of $ m $ (few tens) make sampling impractical.

Moreover, our empirical results suggest that larger values of $ m $ are required for the PM-MCMC for semiBSL to be stable. For the other methods, the required $ m $ seem to be fairly similar, with slightly larger values for BSL for some models.

Typically, we found $ m  $ values in the few hundreds to strike a good balance between larger computational cost and improved acceptance rate with larger $ m $. Additionally, this consideration depends also on how quickly the simulation cost scales with $ m $: even when not parallelizing model simulations across different processors, if the implementation is vectorized, the computational cost can scale sub-linearly in $ m $, which means a better PM-MCMC efficiency is reached for a larger $ m $.
A more extensive study considering for instance the effective sample size per CPU time could be carried out.

In all experiments, except where said otherwise, we use the value of $ w $ found via our heuristics strategy (Section~\ref{sec:fix_w} in the main text) and reported above.

\FloatBarrier
\subsection{Univariate g-and-k}
\FloatBarrier

Here, we report results considering $ n=10 $ observations.

\begin{table}[h!]
	\centering
		\caption{Acceptance rate and trace of the posterior covariance matrix for different values of $ m $ for the well specified univariate g-and-k, for the BSL, Kernel and energy score posteriors. }
	\begin{adjustbox}{max width=\textwidth}
		\begin{tabular}{c|cccccc}
			\toprule
			\multirow{2}{*}{\textbf{$ m $}} &  \multicolumn{2}{c}{\textbf{BSL}} & \multicolumn{2}{c}{\textbf{Kernel score}} & \multicolumn{2}{c}{\textbf{Energy score}} \\
			\cmidrule(r){2-3} \cmidrule(r){4-5} \cmidrule(r){6-7}
			&   Acc. rate & $ \text{Tr}\left[\Sigma_{\text{post}}\right] $  &       Acc. rate &  $ \text{Tr}\left[\Sigma_{\text{post}}\right] $ &        Acc. rate &  $ \text{Tr}\left[\Sigma_{\text{post}}\right] $ \\
			\midrule
			10 &      0.104 &                4.5245 &      0.011 &                3.6030  &      0.063 &                3.9822 \\
			20 &      0.122 &                4.4439 &      0.035 &                3.6679 &      0.115 &                3.9642 \\
			50 &      0.129 &                4.3778 &      0.098 &                3.3803 &      0.179 &                3.6105 \\
			100 &      0.134 &                4.4095 &      0.157 &                3.2220  &      0.219 &                3.5335 \\
			200 &      0.136 &                4.1753 &      0.204 &                3.1628 &      0.243 &                3.4730  \\
			300 &      0.135 &                4.2261 &      0.220  &                3.1181 &      0.252 &                3.3537 \\
			400 &      0.135 &                4.1769 &      0.229 &                3.0716 &      0.257 &                3.3553 \\
			500 &      0.132 &                4.1702 &      0.234 &                3.1079 &      0.262 &                3.4362 \\
			600 &      0.130  &                4.2095 &      0.239 &                3.0295 &      0.259 &                3.2612 \\
			700 &      0.133 &                4.2417 &      0.243 &                3.0536 &      0.265 &                3.3629 \\
			800 &      0.132 &                4.2421 &      0.247 &                3.0216 &      0.265 &                3.3077 \\
			900 &      0.132 &                4.1084 &      0.248 &                3.0477 &      0.267 &                3.3815 \\
			1000 &      0.137 &                4.2930  &      0.253 &                3.1181 &      0.269 &                3.3570  \\
			\bottomrule
		\end{tabular}
	\end{adjustbox}
	\label{Tab:uni_gk_m}
\end{table}

\begin{figure}[tb]
	\centering
	\includegraphics[width=1\linewidth]{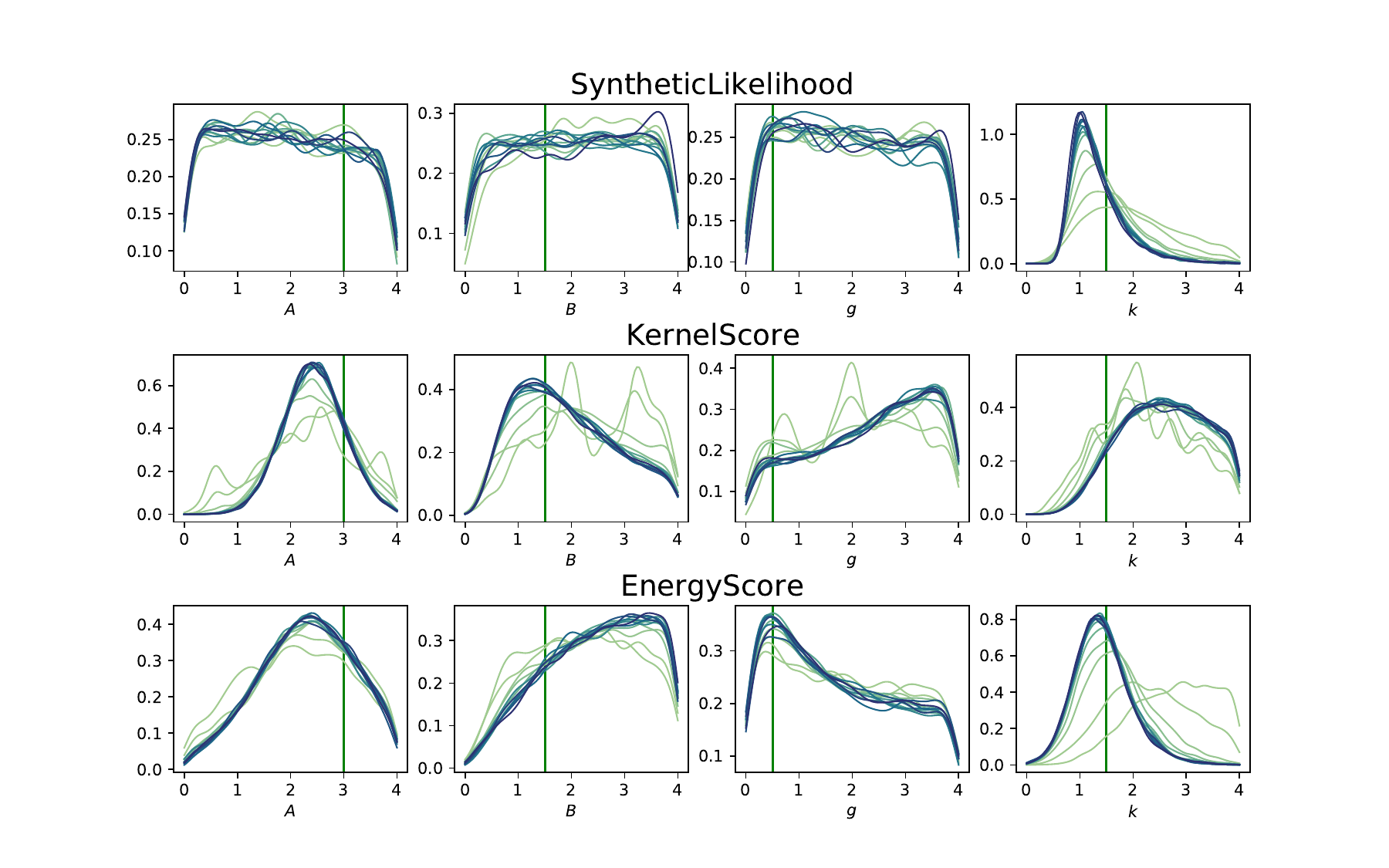}
	\caption{Univariate posterior marginals for different $ m $ values for the well specified univariate g-and-k distribution, for the BSL, Kernel and energy score posteriors, with PM-MCMC. Lighter (respectively darker) colors denote smaller (resp. larger) values of $ m $. For small values of $ m $, the marginals are spiky, which is due to unstable PM-MCMC. The densities are obtained by KDE on the MCMC output thinned by a factor 10.}
	\label{fig:uni_gk_m}
\end{figure}

\FloatBarrier
\subsection{Misspecified univariate g-and-k}
\FloatBarrier
Here, we report results considering $ n=10 $ observations.

\begin{table}[h!]
	\centering
		\caption{Acceptance rate and trace of the posterior covariance matrix for different values of $ m $ for the misspecified univariate g-and-k, for the BSL, Kernel and energy score posteriors. }
	\begin{adjustbox}{max width=\textwidth}
		\begin{tabular}{c|cccccc}
			\toprule
			\multirow{2}{*}{\textbf{$ m $}} &  \multicolumn{2}{c}{\textbf{BSL}} & \multicolumn{2}{c}{\textbf{Kernel score}} & \multicolumn{2}{c}{\textbf{Energy score}} \\
			\cmidrule(r){2-3} \cmidrule(r){4-5} \cmidrule(r){6-7}
			&   Acc. rate & $ \text{Tr}\left[\Sigma_{\text{post}}\right] $  &       Acc. rate &  $ \text{Tr}\left[\Sigma_{\text{post}}\right] $ &        Acc. rate &  $ \text{Tr}\left[\Sigma_{\text{post}}\right] $ \\
			\midrule
			10 &      0.038 &                3.3664 &      0.047 &                3.4141 &      0.164 &                3.8095 \\
			20 &      0.072 &                2.3207 &      0.069 &                3.2060  &      0.216 &                3.4900   \\
			50 &      0.130  &                1.9729 &      0.184 &                2.6690  &      0.306 &                2.9483 \\
			100 &      0.159 &                2.0145 &      0.298 &                2.4529 &      0.364 &                2.7232 \\
			200 &      0.179 &                1.8829 &      0.359 &                2.4037 &      0.391 &                2.7153 \\
			300 &      0.187 &                2.0198 &      0.389 &                2.3623 &      0.402 &                2.6055 \\
			400 &      0.188 &                1.9498 &      0.405 &                2.3403 &      0.410  &                2.6164 \\
			500 &      0.189 &                1.9092 &      0.412 &                2.3756 &      0.413 &                2.5579 \\
			600 &      0.191 &                1.8259 &      0.422 &                2.3461 &      0.414 &                2.5704 \\
			700 &      0.186 &                1.9207 &      0.430  &                2.3452 &      0.417 &                2.5484 \\
			800 &      0.184 &                1.9509 &      0.432 &                2.3810  &      0.419 &                2.6276 \\
			900 &      0.190  &                1.9475 &      0.434 &                2.4472 &      0.423 &                2.6468 \\
			1000 &      0.194 &                1.9763 &      0.436 &                2.3434 &      0.425 &                2.6386 \\
			\bottomrule
		\end{tabular}
	\end{adjustbox}
	\label{Tab:uni_Cauchy_gk_m}
\end{table}

\begin{figure}[tb]
	\centering
	\includegraphics[width=1\linewidth]{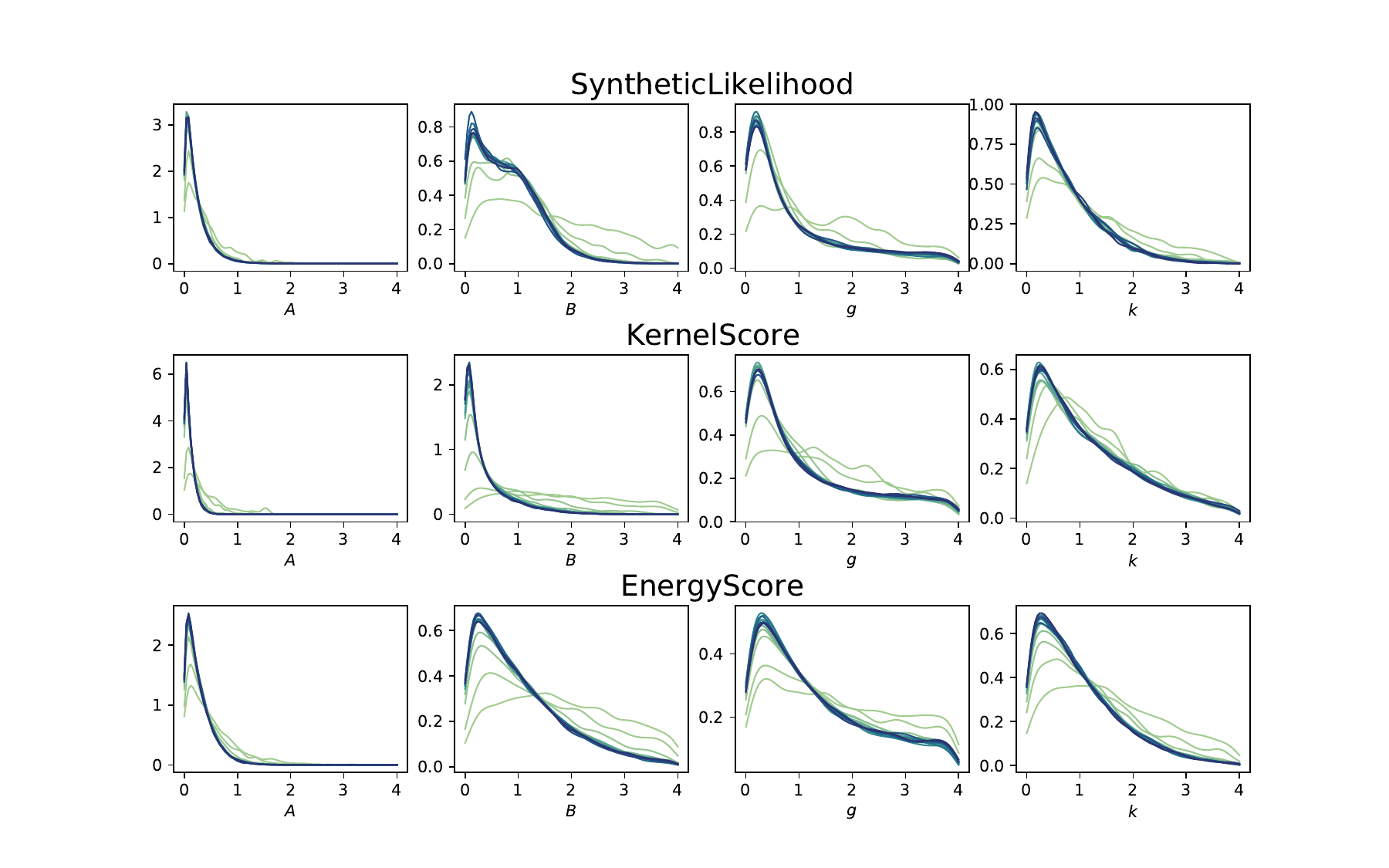}
	\caption{Univariate posterior marginals for different $ m $ values for the misspecified univariate g-and-k distribution, for the BSL, Kernel and energy score posteriors, with PM-MCMC. Lighter (respectively darker) colors denote smaller (resp. larger) values of $ m $. The densities are obtained by KDE on the MCMC output thinned by a factor 10.}
	\label{fig:uni_Cauchy_gk_m}
\end{figure}

\FloatBarrier
\subsection{Multivariate g-and-k}
\FloatBarrier
Here, we report results considering $ n=10 $ observations.

For this model, small $ m $ lead to extremely small acceptance rates for BSL and semiBSL (Table~\ref{Tab:gk_m}); in those cases, the trace of the posterior covariance matrix is also very small due to the chain being almost still. Additionally, even large $ m $ values lead to small acceptance rate for semiBSL; that is consequence of the issues discussed in Appendix~\ref{app:gk_nsim}. We report nevertheless the results here.

\begin{table}[h!]
	\centering
		\caption{Acceptance rate and trace of the posterior covariance matrix for different values of $ m $ for the well specified multivariate g-and-k, for the BSL, semiBSL, Kernel and energy score posteriors. }
	\begin{adjustbox}{max width=\textwidth}
		\begin{tabular}{c|cccccccc}
			\toprule
			\multirow{2}{*}{\textbf{$ m $}} &  \multicolumn{2}{c}{\textbf{BSL}} &\multicolumn{2}{c}{\textbf{semiBSL}} & \multicolumn{2}{c}{\textbf{Kernel score}} & \multicolumn{2}{c}{\textbf{Energy score}} \\
			\cmidrule(r){2-3} \cmidrule(r){4-5} \cmidrule(r){6-7} \cmidrule(r){8-9}
			&   Acc. rate & $ \text{Tr}\left[\Sigma_{\text{post}}\right] $  &   Acc. rate &  $ \text{Tr}\left[\Sigma_{\text{post}}\right] $ &       Acc. rate &  $ \text{Tr}\left[\Sigma_{\text{post}}\right] $ &      Acc. rate &  $ \text{Tr}\left[\Sigma_{\text{post}}\right] $  \\
			\midrule
			10 &      $<$0.001     &                1.0566 &      $<$0.001     &                0.4227 &      0.006 &                3.6061 &      0.070  &                4.5255 \\
			20 &      $<$0.001     &                0.3674 &      $<$0.001     &                0.6383 &      0.023 &                4.0455 &      0.123 &                3.9212 \\
			50 &      0.003 &                2.8320  &      $<$0.001     &                0.6331 &      0.055 &                3.8924 &      0.170  &                3.8571 \\
			100 &      0.002 &                2.3666 &      $<$0.001     &                0.6131 &      0.078 &                4.1250  &      0.194 &                3.8126 \\
			200 &      0.001 &                0.7140  &      0.001 &                0.8603 &      0.099 &                3.9624 &      0.206 &                3.7142 \\
			300 &      0.008 &                2.8229 &      0.002 &                2.2184 &      0.108 &                4.2766 &      0.208 &                3.9078 \\
			400 &      0.009 &                2.5694 &      0.001 &                0.6885 &      0.113 &                3.9710  &      0.212 &                3.8284 \\
			500 &      0.009 &                3.3583 &      0.002 &                1.2885 &      0.116 &                4.0250  &      0.217 &                3.8383 \\
			600 &      0.013 &                2.9646 &      0.005 &                1.3359 &      0.120  &                3.9632 &      0.216 &                3.7698 \\
			700 &      0.010  &                3.7043 &      0.005 &                0.6511 &      0.119 &                4.0173 &      0.214 &                3.7437 \\
			800 &      0.016 &                3.3017 &      0.006 &                0.6679 &      0.122 &                3.9607 &      0.214 &                3.7512 \\
			900 &      0.022 &                2.9915 &      0.005 &                0.6411 &      0.126 &                4.1293 &      0.216 &                3.9202 \\
			1000 &      0.017 &                3.1304 &      0.006 &                0.5892 &      0.122 &                3.9757 &      0.216 &                3.7959 \\
			\bottomrule
		\end{tabular}
	\end{adjustbox}
	\label{Tab:gk_m}
\end{table}

\begin{figure}[tb]
	\centering
	\includegraphics[width=1\linewidth]{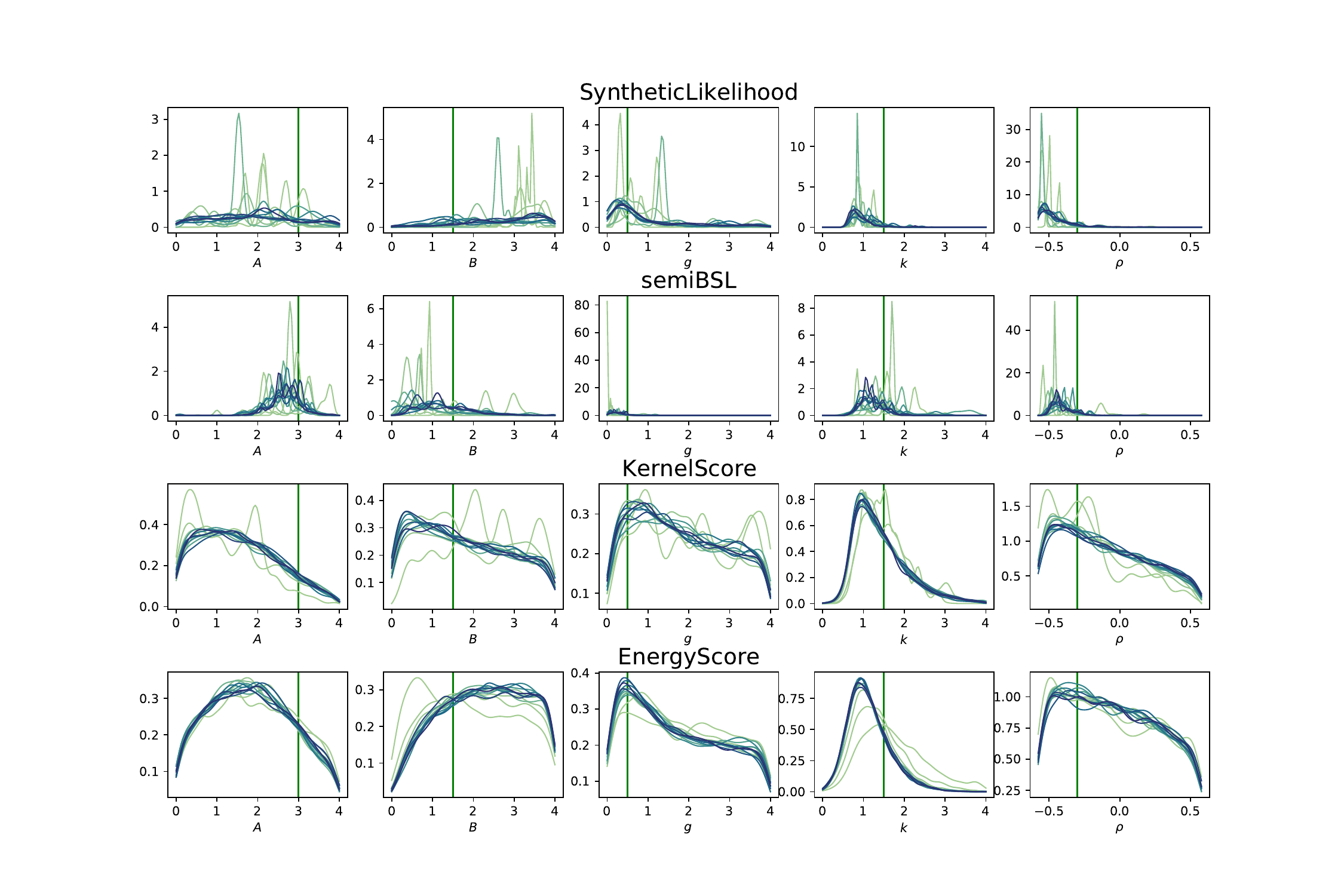}
	\caption{Univariate posterior marginals for different $ m $ values for the well specified multivariate g-and-k distribution, for the BSL, semiBSL, Kernel and energy score posteriors, with PM-MCMC. Lighter (respectively darker) colors denote smaller (resp. larger) values of $ m $. For small values of $ m $, the marginals are spiky, which is due to unstable MCMC. The densities are obtained by KDE on the MCMC output thinned by a factor 10.}
	\label{fig:gk_m}
\end{figure}

\FloatBarrier
\subsection{Misspecified multivariate g-and-k}
\FloatBarrier
Here, we report results considering $ n=10 $ observations. We do not report results for BSL and semiBSL as those were unable to run satisfactorily for that number of observations, for all considered values of $ m $.

\begin{table}[h!]
	\centering
		\caption{Acceptance rate and trace of the posterior covariance matrix for different values of $ m $ for the misspecified multivariate g-and-k, for the Kernel and energy score posteriors. }
	\begin{adjustbox}{max width=\textwidth}
		\begin{tabular}{c|cccc}
			\toprule
			\multirow{2}{*}{\textbf{$ m $}} &  \multicolumn{2}{c}{\textbf{Kernel score}} & \multicolumn{2}{c}{\textbf{Energy score}} \\
			\cmidrule(r){2-3} \cmidrule(r){4-5}
			&   Acc. rate & $ \text{Tr}\left[\Sigma_{\text{post}}\right] $  &       Acc. rate &  $ \text{Tr}\left[\Sigma_{\text{post}}\right] $\\
			\midrule
			10 &      0.017 &                4.5045 &      0.174 &                3.4306 \\
			20 &      0.108 &                3.6950  &      0.252 &                3.2373 \\
			50 &      0.243 &                3.4612 &      0.300   &                3.0291 \\
			100 &      0.308 &                3.4759 &      0.316 &                3.0081 \\
			200 &      0.344 &                3.4666 &      0.323 &                2.9303 \\
			300 &      0.348 &                3.4583 &      0.321 &                2.9160  \\
			400 &      0.355 &                3.4158 &      0.331 &                3.0031 \\
			500 &      0.359 &                3.4047 &      0.332 &                2.9743 \\
			600 &      0.363 &                3.3847 &      0.330  &                2.9321 \\
			700 &      0.360  &                3.3485 &      0.329 &                2.9249 \\
			800 &      0.361 &                3.3505 &      0.332 &                2.9854 \\
			900 &      0.363 &                3.3627 &      0.331 &                3.0155 \\
			1000 &      0.363 &                3.3307 &      0.330  &                2.9277 \\
			\bottomrule
		\end{tabular}
	\end{adjustbox}
	\label{Tab:Cauchy_gk_m}
\end{table}

\begin{figure}[tb!]
	\centering
	\includegraphics[width=1\linewidth]{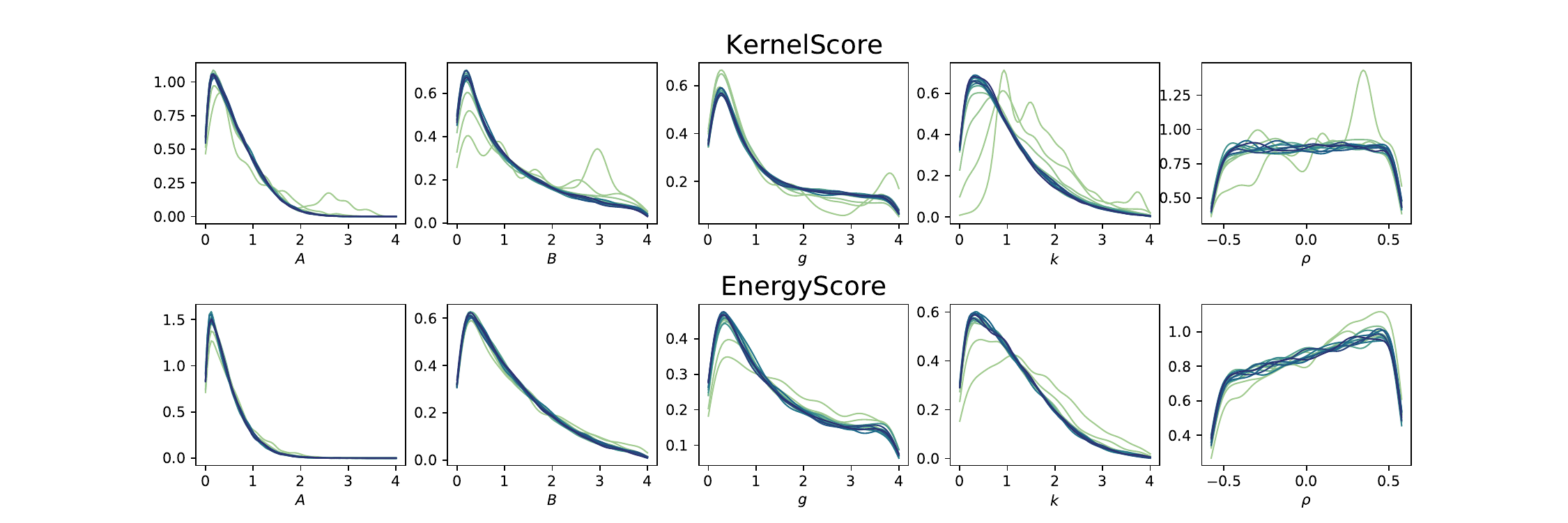}
	\caption{Univariate posterior marginals for different $ m $ values for the misspecified multivariate g-and-k distribution, for the Kernel and energy score posteriors, with PM-MCMC. Lighter (respectively darker) colors denote smaller (resp. larger) values of $ m $. For small values of $ m $, the marginals are spiky, which is due to unstable MCMC. The densities are obtained by KDE on the MCMC output thinned by a factor 10.}
	\label{fig:Cauchy_gk_m}
\end{figure}

\end{document}